\def\BibTeX{{\rm B\kern-.05em{\sc i\kern-.025em b}\kern-.08em
    T\kern-.1667em\lower.7ex\hbox{E}\kern-.125emX}}
\def\>{\ensuremath{\rangle}}
\def\<{\ensuremath{\langle}}
\newcommand{\hs}{\mathcal{H}}
\newtheorem{thm}{Theorem}[section]
\newtheorem{lem}{Lemma}[section]
\newtheorem{defn}{Definition}[section]
\newtheorem{exam}{Example}[section]
\newtheorem{ques}{Question}[section]
\journal{}
\begin{document}

\begin{frontmatter}

\title{Verification of Recursively Defined Quantum Circuits}

\author{Mingsheng Ying and Zhicheng Zhang}

\affiliation{organization={Centre for Quantum Software and Information, University of Technology Sydney}, 
           addressline={15 Broadway}, 
           city={Ultimo},
           postcode={2007}, 
          state={NSW},
            country={Australia}}

\begin{abstract}
Recursive techniques have recently been introduced into quantum programming so that a variety of large quantum circuits and algorithms can be elegantly and economically programmed. In this paper, we present a proof system for formal verification of the correctness of recursively defined quantum circuits. The soundness and (relative) completeness of the proof system are established. To demonstrating its effectiveness, a series of application examples of the proof system are given, including (multi-qubit)  controlled gates, a quantum circuit generating (multi-qubit) GHZ (Greenberger-Horne-Zeilinger) states, recursive definition of quantum Fourier transform, quantum state preparation, and quantum random-access memories (QRAM). 
\end{abstract}

\begin{keyword}
Quantum programming \sep quantum circuits \sep recursive definition \sep proof system \sep soundness \sep (relative) completeness \sep verification
\end{keyword}

\end{frontmatter}

\section{Introduction}

The procedure and recursion constructs have been widely used in classical programming for modularity. In particular, an invocation of a (recursive) procedure enables a (unbounded) number of uses of a uniquely defined function or program section. In order to support the same level of abstraction, several recursive techniques started to emerge in quantum programming. For example, recursive quantum programs  with \textit{classical control flows} (e.g. defined by measurement outcomes) were first introduced in \cite{Selinger}. Their semantics was further studied in Section 3.4 of \cite{Ying16}. Furthermore, a proof system was developed in \cite{Xu21} for verification of this class of recursive quantum programs. A particularly interesting work is \cite{Deng24} where a class of recursive quantum (unitary) programs (i.e. \textit{recursively defined quantum circuits}) were defined and a synthesis framework of them using SMT solvers was proposed. 
A proof technique for verifying this class of recursive quantum (unitary) programs was also developed in \cite{Deng24} based on the path-sum method for verification of quantum circuits \cite{Amy18}. 

In this paper, we introduce a novel notion of quantum recursive procedure than \cite{Deng24}. The key design ideas of this quantum recursion scheme include: \begin{enumerate}\item \textit{quantum if-statements} \cite{Alt05, Ying12, Sab18, Yuan, Bich, Voi} are employed for defining quantum control flows; \item recursive procedures are \textit{parameterised by classical variables} (and expressions). 
\end{enumerate} To the best of the authors' knowledge, this kind of quantum recursion has not been considered in the current literature. The importance of this new scheme of quantum recursion is demonstrated by a series of examples that show that using it, large quantum gates, circuits and algorithms can be recursively defined in a compact and elegant way. In particular, a quantum Divide-and-Conquer strategy (not classical Divide-and-Conquer strategy used in the design of quantum algorithms) can be elegantly programmed in this language (see Examples in Section \ref{sec-examples} and Appendix \ref{appendix-examples}). 

\subsection{Motivating Example: Recursive Definition of Quantum Fourier Transform}

To appreciate the power of our quantum recursive definition, let us consider quantum Fourier transform (QFT) as the first example. The QFT has been used as a key subroutine in many important quantum algorithms, including the celebrated Shor factoring algorithm. The QFT on $n$ qubits is mathematically defined by 
\begin{equation}\label{QFT}\mathit{QFT}^{(n)}|j\rangle=\frac{1}{\sqrt{2^n}}\sum_{k=0}^{2^n-1} e^{2\pi ijk/2^n}|k\rangle
\end{equation} for $j=0,1,...,2^n -1.$ If we use the binary representation $j=j_1j_2...j_n=\sum_{l=1}^n j_l 2^{n-l}$ and binary fraction $0.k_1k_2...k_m=\sum_{l=1}^mk_l 2^{-l},$ then the defining equation (\ref{QFT}) of $\mathit{QFT}^{(n)}$ can be rewritten as
\begin{equation}\label{QFT-1}\mathit{QFT}^{(n)}|j_1,...,j_n\rangle=\frac{1}{\sqrt{2^n}}\bigotimes_{l=1}^n\left(|0\rangle+e^{2\pi i 0.j_{n-l+1}...j_n}|1\rangle\right).
\end{equation} As shown in the standard textbook \cite{NC00}, $\mathit{QFT}^{(n)}$ can be decomposed into the sequence of single-qubit and two-qubit basic gates in Table \ref{QFT-table}, where $H$ is the Hadamard gate, and for each $l\in\{1,...,n\}$, $C(R_l)$ is a controlled-rotation with the following single-qubit gate as the rotation operator: $$R_l=\left(\begin{array}{cc}1&0\\ 0&e^{2\pi i/2^l}\end{array}\right).$$
\begin{table*}[t]
\begin{align*}\mathit{QFT}^{(n)}[q_1,...,q_n]::=\ &H[q_1]; C(R_2)[q_2,q_1];C(R_3)[q_3,q_1];...;C(R_{n-1})[q_{n-1},q_1];C(R_n)[q_n,q_1];\\
&H[q_2];C(R_2)[q_3,q_2];C(R_3)[q_4,q_2];...;C(R_{n-1})[q_n,q_2];\\
&\qquad\qquad\qquad\qquad ..............................\\
&H[q_{n-1}];C(R_2)[q_n,q_{n-1}];\\
&H[q_n]; \\
&\mathit{Reverse}[q_1,...,q_n]
\end{align*}
\caption{A quantum circuit for quantum Fourier transform. Here,  $C(R_l)[q_i,q_j]$ stands for the controlled-$R_l$ with $q_i$ as its control qubit and $q_j$ as its target qubit, and $\mathit{Reverse}[q_1,...,q_n]$ is the quantum gate that reverses the order of qubits $q_1,...,q_n$.}\label{QFT-table}
\end{table*}

In a quantum programming language that does not support recursion, one has to program $\mathit{QFT}^{(n)}$ in a way similar to Table \ref{QFT-table}. For a large number $n$ of qubits, the size of such a $\mathit{QFT}^{(n)}$ program will be very large and unmanageable. Using the  recursion scheme to be  defined in this paper,  however, we can write $\mathit{QFT}^{(n)}$ as a program, of which the size is independent of the number $n$ of qubits:   

\begin{exam}[Recursive Definition of QFT]\label{example-QFT} Our recursive definition of QFT is inspired by the Cooley-Tukey algorithm \cite{Cooley65} for FFT (Fast Fourier Transform). Following the idea of \cite{Cooley65}, we can rearrnge the right-hand side of equation (\ref{QFT}) into two parts: a sum over the even-numbered basis states and a sum over the odd-numbered basis states; that is, for $k=0,1,...,2^{n-1}-1$, 
\begin{equation}\label{QFT0}|k\rangle\rightarrow\sum_{v=0}^{2^{n-1}-1}e^{\frac{2\pi ivk}{2^{n-1}}} |2v\rangle +e^{\frac{2\pi ik}{2^n}}\sum_{v=0}^{2^{n-1}-1}e^{\frac{2\pi ivk}{2^{n-1}}}|2v+1\rangle,\end{equation}
\begin{equation}\label{QFT1}|k+2^{n-1}\rangle\rightarrow \sum_{v=0}^{2^{n-1}-1}  e^{\frac{2\pi ivk}{2^{n-1}}}|2v\rangle -e^{\frac{2\pi ik}{2^n}}\sum_{v=0}^{2^{n-1}-1}e^{\frac{2\pi ivk}{2^{n-1}}}|2v+1\rangle.\end{equation}
Now let $k,v\in\{0,1,...,2^{n-1}-1\}$ be thought of as $(n-1)$-bit strings. Then when representing them as $n$-bit strings, we have: $|k\rangle=|0k\rangle$, $|k+2^{n-1}\rangle=|1k\rangle$, $|2v\rangle=|v0\rangle$ and $|2v+1\rangle=|v1\rangle$. 
Thus, equations (\ref{QFT0}) and (\ref{QFT1}) can be rewritten as 
\begin{equation}\mathit{QFT}^{(n)}|0k\rangle=\mathit{QFT}^{(n-1)}|k\rangle\otimes \frac{1}{\sqrt{2}}\left(|0\rangle+e^{\frac{2\pi ik}{2^n}}|1\rangle\right),\end{equation}
\begin{equation}\mathit{QFT}^{(n)}|1k\rangle=\mathit{QFT}^{(n-1)}|k\rangle\otimes \frac{1}{\sqrt{2}}\left(|0\rangle-e^{\frac{2\pi ik}{2^n}}|1\rangle\right).\end{equation}
We introduce the single-qubit quantum gate:
\begin{equation}
    \label{eq:gate_S}
    S_{\theta}=\frac{1}{\sqrt{2}}\left(\begin{array}{cc}1 & 1 \\ e^{\pi i\theta} & - e^{\pi i\theta}\end{array}\right).
\end{equation}
Let $q[m:n]$ stand for the array of $n-m+1$ qubits indexed by integers $[m:n]=\{m,m+1,...,n\}$. Then QFT on $q[m:n]$ can be recursively programmed as: 
\begin{equation}
\label{p-QFT-1}
    \begin{split}
        \mathit{QFT}(m,n)\Leftarrow\ &\mathbf{if}\ m=n\ \mathbf{then}\ S_{0}[q[m]]\\
        &\qquad\quad\ \ \ \mathbf{else}\ \mathit{Rot}(m,n,0);\mathit{QFT}(m+1,n);\mathit{Shift}(m,n)\\ 
        &\mathbf{fi},
    \end{split}
\end{equation}
\begin{equation}
    \label{p-QFT-m}
    \begin{split}
        \mathit{Rot}(m,n,\theta)\Leftarrow\ &\mathbf{if}\ m=n\ \mathbf{then}\ S_{\theta}[q[m]]\\
        &\qquad\quad\ \ \ \mathbf{else}\ \mathbf{qif}[q[n]]\left(\square_{x=0}^1 |x\rangle \rightarrow \mathit{Rot}(m,n-1,(\theta+x)/2)\right)\ \mathbf{fiq}\\
        &\mathbf{fi},
    \end{split}
\end{equation}
\begin{equation}
    \label{p-QFT-2}
    \begin{split}
        \mathit{Shift}(m,n)\Leftarrow\ &\mathbf{if}\ m< n\ \mathbf{then}\\
        &\qquad \mathit{Swap}[q[m],q[n]];\mathit{Shift}(m+1,n)\\
        &\mathbf{fi}.
    \end{split}
\end{equation}
\end{exam}

The above recursive definition of QFT is much more economic than the quantum circuits in Table \ref{QFT-table}. But it is not easy to see that the recursive programs (\ref{p-QFT-1}), (\ref{p-QFT-m}) and (\ref{p-QFT-2}) actually realise the transform (\ref{QFT}). This observation leads us to the following:  

%{\vskip 3pt}

\begin{ques}\label{ques-1} \textbf{How to verify the correctness of the recursive definition $\mathit{QFT}(m,n)$ of QFT?} \end{ques}

It seems that the correctness of $\mathit{QFT}(m,n)$ cannot be verified by the proof technique developed in \cite{Deng24} and any other existing technique, since quantum if-statement ($\mathbf{qif}$) is employed in equation (\ref{p-QFT-m}), and no verification  techniques for programs with $\mathbf{qif}$ have been proposed in the current literature.   

\subsection{Contributions of the Paper}\label{Contri}

The aim of this paper is to extend the verification method of classical recursive programs (see for example Chapters 4 and 5 of \cite{Apt09} and Chapter 6 of \cite{Francez}) for verifying the correctness of recursively defined quantum circuits. As an application, we answer Question \ref{ques-1} in Section \ref{sec-examples}. 

The key ideas of our approach and the main contributions of the paper are described as follows:\begin{itemize}\item \textbf{Correctness specification}: Since in our programming language, classical variables are introduced as parameters in defining quantum circuits, we choose to use the following form:
\begin{equation}\label{Hoare-00}\{A,|\varphi\rangle\}\ C\ \{B,|\psi\rangle\}\end{equation} of Hoare triple for specifying the correctness of a quantum circuit $C$, where $A$ and $B$ are classical first-order logical formulas, and $|\varphi\rangle,|\psi\rangle$ are two quantum states (parameterized by classical variables). Intuitively, Hoare triple (\ref{Hoare-00}) means: 
\begin{itemize}\item \textit{whenever we start in a classical state $\sigma$ that satisfies precondition $A$, and a quantum state $|\varphi\rangle$ is inputted to the circuit $C$, then the execution of $C$ will terminates in a classical state $\sigma^\prime$ that satisfies postcondition $B$, and the circuit outputs quantum state $|\psi\rangle$.}\end{itemize}

It should be particularly pointed out that classical logical predicates $A,B$ and quantum states $|\varphi\rangle, |\psi\rangle$ in Hoare triple (\ref{Hoare-00}) are related to each other, because the classical parameters in $|\varphi\rangle$ and $|\psi\rangle$ are given by classical states $\sigma$ and $\sigma^\prime$, respectively (see Subsection \ref{sec-correctness} for a detail explanation). 

%{\vskip 3pt}

\item \textbf{Proof System}: To reason about the correctness of recursively defined quantum circuits, we develop a set of proof rules defined in terms of Hoare triples of the form (\ref{Hoare-00}). Among these rules, several are new and have no counterparts in verification of classical programs; for example, a rule is devised for reasoning about quantum if-statement, and a rule is introduced for preserving the linearity of quantum circuits. However, others are essentially obtained by tailoring their classical counterparts so that they are suited to proving correctness formulas of the form (\ref{Hoare-00}) specially designed for quantum circuits with classical parameters. In particular, the rule for reasoning about recursion is essentially the standard Scott induction rule adapted to our circumstance of recursively defined quantum circuits. There are many different formulations of the induction rule in the literature on verification of classical recursive programs. We generalise the one presented in Chapters 4 and 5 of textbook \cite{Apt09}. 
\textit{The soundness and (relative) completeness of our proof system are established.}  
\end{itemize}

\subsection{Organisation of the Paper} For convenience of the reader, in Section \ref{sec-language}, we introduce the syntax and semantics of a class of quantum recursive programs, namely recursively defined quantum circuits. 
Hoare triples of the form (\ref{Hoare-00}) are formally defined in Section \ref{sec-correctness} for specifying correctness of quantum recursive programs. Our proof system is then presented in Section \ref{proof-system}. The soundness and (relative) completeness theorems of the proof system are stated in Section \ref{sec-completeness}, but their involved proofs are deferred to Appendix \ref{proof-sound} for a better readability. As an example, the correctness proof of recursive QFT is given in Section \ref{sec-examples} to illustrate how our proof system can actually be applied in verification of recursive quantum circuits and algorithms. More examples are presented in Appendix \ref{appendix-examples}, including controlled gates (by multiple qubits), generation of GHZ (Greenberger-Horne-Zeilinger) states, quantum state preparation, and quantum random access memories (QRAM). 

\section{A Language for Recursive Definition of Quantum Circuits}\label{sec-language}

Let us start from a programming language language for recursive definition of quantum circuits  and its semantics.

\subsection{Quantum Arrays}\label{sec-arrays}

This quantum programming language employs a simple quantum data type, namely quantum array. So, let us formally define it in this subsection.  

\subsubsection{Quantum types} A basic quantum type $\hs$ denotes an intended Hilbert space. A higher quantum type is defined of the form: 
\begin{equation}\label{h-type}T_1\times ...\times T_n\rightarrow\hs,\end{equation}
where $T_1,...,T_n$ are basic classical types, and $\hs$ is a basic quantum type. It denotes the tensor product of multiple copies of $\hs$ indexed by $v_1\in T_1,...,v_n\in T_n$: 
\begin{equation}\label{h-space}\hs^{\otimes (T_1\times ...\times T_n)}=\bigotimes_{v_1\in T_1,...,v_n\in T_n}\hs_{v_1,...,v_n},\end{equation} where $\hs_{v_1,...,v_n}=\hs$ for all $v_1\in T_1,...,v_n\in T_n$. 
For example, let $\hs_2$ be the $2$-dimensional Hilbert space as the state space of a qubit. Then a qubit array $q$ can be defined of type $\mathbf{integer}\rightarrow\hs_2$, where $\mathbf{integer}$ is the (classical) integer type. For any two integers $k\leq l$, section $q[k:l]$ stands for the restriction of $q$ to the interval $[k:l]=\{{\rm integer}\ i\mid k\leq i\leq l\}$. Note that (\ref{h-space}) may an infinite tensor product of Hilbert spaces \cite{vN39} when one of $T_1,...,T_n$ is infinite (e.g., the type $\mathbf{integer}\rightarrow\hs_2$). However, all application examples in Section \ref{sec-examples} and Appendix \ref{appendix-examples} only need finite tensor products (e.g., the section $q[k:l]$). 
 
\subsubsection{Quantum variables} In our programming language, we use two sorts of quantum variables:\begin{itemize}\item simple quantum variables, each of a basic quantum type, say $\hs$; \item array quantum variables, each of a higher quantum type, say $T_1\times ...\times T_n\rightarrow\hs$.  
\end{itemize}

For simplicity of presentation, we often consider a simple variable of type $\hs$ as an array variable of type $T_1\times ...\times T_n\rightarrow\hs$ with $n=0$. Let $q$ be an array quantum variable of the type $T_1\times ...\times T_n\rightarrow\hs$. For any values $v_1\in T_1,...,v_n\in T_n$, array element $q[v_1,...,v_n]$ can be seen as a quantum variable of type $\hs$. Moreover, we introduce the notion of subscripted quantum variable:  

\begin{defn}\label{def-sub-variable} For each $1\leq i\leq n$, let $s_i$ be a classical expression of type $T_i$. Then $q[s_1,...,s_n]$ is called a subscripted quantum variable of type $\hs$.\end{defn}

For example, if $q:\mathbf{integer}\times\mathbf{integer}\rightarrow\hs_2$ is a two-dimensional array of qubits, and $x,y$ are integer variables, then $q[2x-y, 3x+4y-75]$ is a subscripted qubit variable. 

\subsection{A Quantum Circuit Description Language}\label{sec-circuits}

Now we are ready to define our language for quantum recursive programming. It can be described as an expansion of a simple quantum circuit description language. In this subsection, let us first formally define the quantum circuit description language $\mathbf{QC}$.     

\subsubsection{Syntax}We assume that the alphabet of \textbf{QC} consists of: \begin{itemize}\item A set $\mathit{QV}$ of simple quantum variables (but no subscripted quantum variables, because no classical variables are involved in $\mathbf{QC}$);
\item A set $\mathcal{U}$ of elementary unitary gates.   
\end{itemize}
The elementary unitary gates in $\mathcal{U}$ will be instantiated in practical applications. For example, $\mathcal{U}$ can be chosen as the set $U(2)$ of all single qubit gates, or $\mathcal{U}=\{H,T\}$, where $T$ stands for the $\pi/8$ gate.  
We fix the following notations:
\begin{itemize}\item[--] Each quantum variable $q\in\mathit{QV}$ assumes a type $T(q)$, meaning that $q$ stands for a quantum system with the Hilbert space denoted by $T(q)$ as its state space.
\item[--] A sequence $\overline{q}=q_1,...,q_n$ of distinct quantum variables, called a quantum register, denotes a composite quantum system consisting of subsystems $q_1,...,q_n$. Its type is defined as the tensor product $T(\overline{q})=T(q_1)\otimes ...\otimes T(q_n)$. 
\item[--] Each elementary unitary gate $U\in\mathcal{U}$ assumes a type of the form $T(U)=\hs_1\otimes ...\otimes\hs_n$. If $\overline{q}=q_1,...,q_n$ is a quantum register with $T(q_i)=\hs_i$ for $i=1,...,n$, then $U[\overline{q}]$ is an eligible quantum gate application. In this case, we say that the types of $U$ and $\overline{q}$ match. 
\end{itemize}

Upon the alphabet $\mathit{QV}$ and $\mathcal{U}$, we can formally describe the syntax of quantum circuits: 

\begin{defn}\label{def-circ}Quantum circuits $C\in\mathbf{QC}$ are defined by the syntax:
\begin{equation}\label{def-syntax}C::=\ U[\overline{q}]\mid C_1;C_2 \mid \mathbf{qif}[q] \left(|0\rangle\rightarrow C_0\right)\square \left(|1\rangle\rightarrow C_1\right) \mathbf{fiq}\end{equation}
More precisely, they are inductively defined by the following clauses, where $\mathit{qv}(C)$ denotes the quantum variables in $C$:
\begin{enumerate} 
\item[(1)] \textbf{Elementary gates}: If $U\in\mathcal{U}$ is a elementary unitary gate and $\overline{q}$ is a quantum register such that their types match, then quantum gate $U[\overline{q}]$ is a circuit, and $\mathit{qv}(U[\overline{q}])=\overline{q}$;  
\item[(2)] \textbf{Sequential composition}: If $C_1$ and $C_2$ are circuits, then $C\equiv C_1;C_2$ is a circuit too, and $\mathit{qv}(C)=\mathit{qv}(C_1)\cup\mathit{qv}(C_2);$
\item[(3)] \textbf{Quantum if-statement}: If $q$ is a qubit, and $C_0,C_1$ are circuits with $q\notin \mathit{qv}(C_0)\cup\mathit{qv}(C_1)$, then \begin{equation}\label{eq-if}C\equiv \mathbf{qif}[q] \left(|0\rangle\rightarrow C_0\right)\square\left(|1\rangle\rightarrow C_1\right) \mathbf{fiq}\end{equation} is a circuit, and $\mathit{qv}(C)=\{q\}\cup\mathit{qv}(C_0)\cup\mathit{qv}(C_1).$
\end{enumerate}\end{defn}

Intuitively, each $C\in\mathbf{QC}$ represents a circuit with each quantum variable in $\mathit{qv}(C)$ as a quantum wire. The quantum if-statement (\ref{eq-if}) is a quantum counterpart of the if-statement in classical programming languages. The qubit variable $q$ in (\ref{eq-if}) is often called a quantum coin. Intuitively, If quantum coin $q$ is in the state $|0\rangle$, then program $C$ behaves like $C_0$, and if $q$ is in the state $|1\rangle$, then $C$ behaves like $C_1$. The fundamental difference between quantum and classical if-statements is that quantum coin $q$ can be in a superposition of $|0\rangle$ and $|1\rangle$, and thus it defines a quantum control flows. This will be seen shortly when the semantics of (\ref{eq-if}) is defined.
We need to point out emphatically that the condition $q\notin \mathit{qv}(C_0)\cup\mathit{qv}(C_1)$ is necessary for the quantum if-statement~(\ref{eq-if}) to be meaningful in physics.  
 This condition explicitly requires that quantum coin $q$ must be external to the quantum circuits $C_0$ and $C_1$ controlled by it.  
 For more detailed discussion, we refer to \cite{Alt05, Ying12, Sab18, Yuan, Bich, Voi} as well as Sections 1.1.2 and 6.1 of \cite{Ying16}. The quantum if-statement is the key that enables our programming language to support quantum recursion (with quantum control flow), as will be seen later (in particular, through the examples in Section \ref{sec-examples} and Appendix \ref{appendix-examples}).  

\subsubsection{Operational semantics} To define the semantics of language $\mathbf{QC}$, for each quantum variable $q\in\mathit{QV}$, we assume that its type $T(q)$ denotes Hilbert space $\hs_q$. Then for any set $X\subseteq\mathit{QV}$ of quantum variables, the state space of the quantum variables in $X$ is the tensor product    
$\hs_X=\bigotimes_{q\in X}\hs_q.$ In particular, the state space of all quantum variables is $\hs_\mathit{QV}$. A configuration over $X$ is defined as a pair $(C,|\psi\rangle)$, where $C\in\mathbf{QC}$ is a quantum circuit with $\mathit{qv}(C)\subseteq X$ or $C=\ \downarrow$ (standing for termination), and $|\psi\rangle$ is a pure quantum state in Hilbert space $\hs_X$. We write $\mathcal{C}(X)$ for the set of all configurations over $X$. The operational semantics of quantum circuits in $\mathbf{QC}$ is defined in terms of transitions between configurations of the form:
\begin{equation}\label{eq-transition}(C,|\varphi\rangle)\rightarrow (C^\prime,|\psi\rangle).\end{equation} The transition (\ref{eq-transition}) means that if the quantum variables are currently in state $|\varphi\rangle$, then after we execute one step of quantum circuit $C$ on them, they will be in state $|\psi\rangle$, with the circuit remaining to execute being $C^\prime$. Formally, we have: 

\begin{defn}The operational semantics of circuits in $\mathbf{QC}$ with quantum variables in $X$ is the transition relation $\rightarrow\ \subseteq\mathcal{C}(X)\times\mathcal{C}(X)$ between configurations that is defined by the transitional rules given in Table \ref{circuit-rules}.
\begin{table*}[t]
\begin{equation*}\begin{split}&%({\rm SK})\ \ 
%\rightarrow (\downarrow,|\psi\rangle)\qquad\qquad\qquad\qquad
({\rm GA})\ \ 
(U[\overline{q}],|\psi\rangle)\rightarrow \left(
\downarrow,(U\otimes I_{X\setminus\overline{q}})|\psi\rangle\right)\qquad\qquad\qquad
({\rm SC})\ \ \frac{(C_1,|\psi\rangle)\rightarrow (
C_1^{\prime},|\psi^{\prime}\rangle)}{(
C_1;C_2,|\psi\rangle)\rightarrow (C_1^{\prime};C_2,|\psi^\prime\rangle)}\\
&({\rm QIF})\ \ \frac{|\psi\rangle=\alpha_0|0\rangle_{q}|\theta_0\rangle+\alpha_1|1\rangle_{q}|\theta_1\rangle\qquad\qquad (C_i,|\theta_i\rangle)\rightarrow^\ast(\downarrow,|\theta_i^\prime\rangle)\ (i=0,1)}{\left(\mathbf{qif}[q] \left(|0\rangle\rightarrow C_0\right)\square\left(|1\rangle\rightarrow C_1\right) \mathbf{fiq},|\psi\rangle\right)\rightarrow \left(\downarrow,\alpha_0|0\rangle_{q}|\theta_0^\prime\rangle+\alpha_1|1\rangle_{q}|\theta_1^\prime\rangle\right)}
\end{split}\end{equation*}
\caption{Transition Rules for Quantum Circuits. In rule (GA), $I_{X\setminus\overline{q}}$ stands for the identity operator on space $\hs_{X\setminus\overline{q}}$. In rule (SC), we assume $\downarrow;C_2=C_2.$ In rule (QIF), $\rightarrow^\ast$ denotes the reflexive and transitive closure of relation $\rightarrow$.}\label{circuit-rules}
\end{table*}
\end{defn}

The transition rules (GA) and (SC) are self-explanatory. But the rule (QIF) needs an careful explanation. First, for any state $|\psi\rangle$ in a Hilbert space $\hs_X$ with $q\in X$, we can always write it in the form of $|\psi\rangle=\alpha_0|0\rangle_{q}|\theta_0\rangle+\alpha_1|1\rangle_{q}|\theta_1\rangle$ with $|\theta_0\rangle,|\theta_1\rangle\in \hs_{X\setminus\{q\}}$, as done in the premise of the rule (QIF). Second, 
the condition $q\notin\mathit{qv}(C_0)\cup\mathit{qv}(C_1)$ in the clause (3) of Definition \ref{def-circ} warrants that $\mathit{qv}(C_0),\mathit{qv}(C_1)\subseteq X\setminus\{q\}$. Thus, $(C_0,|\theta_0\rangle),(C_1,|\theta_1\rangle)$ are eligible configurations.  
Third, let us write $\rightarrow^n$ for the composition of $n$ copies of $\rightarrow$. Then it is possible that $(C_0,|\theta_0\rangle)\rightarrow^{n_0}(\downarrow,|\theta_0^\prime\rangle)$ and $(C_1,|\theta_1\rangle)\rightarrow^{n_1}(\downarrow,|\theta_1^\prime\rangle)$ for $n_0\neq n_1$. So, the reflexive and transitive closure $\rightarrow^\ast$ of $\rightarrow$ is used in the premise of the rule (QIF). Finally, let $C\in\mathbf{QC}$ be a quantum circuit and $\mathit{qv}(C)\subseteq X$. Then the denotational semantics of $C$ over $X$ can be defined as operator $\llbracket C\rrbracket$ on $\hs_X$ as follows: $$\llbracket C\rrbracket|\varphi\rangle=|\psi\rangle\ {\rm if\ and\ only\ if}\ (C,|\varphi\rangle)\rightarrow^\ast(\downarrow,|\psi\rangle).$$ In particular, it is easy to show that the denotational semantics of quantum if-statement:  
\begin{equation}\begin{split}\label{qcase-3}\left\llbracket\mathbf{qif}[q] \left(|0\rangle\rightarrow C_0\right)\square\left(|1\rangle\rightarrow C_1\right) \mathbf{fiq}\right\rrbracket &=|0\rangle\langle 0|\otimes\llbracket C_0\rrbracket+|1\rangle\langle 1|\otimes\llbracket C_1\rrbracket =\left(\begin{array}{cc} \llbracket C_0\rrbracket & 0\\ 0 &\llbracket C_1\rrbracket\end{array}\right).
\end{split}\end{equation} 

For example, the ${\rm CNOT}$ gate with $q_1$ as the control qubit and $q_2$ as the target qubit can be written as $\mathbf{qif}[q_1] \left(|0\rangle\rightarrow I[q_2]\right)\square \left(|1\rangle\rightarrow X[q_2]\right)$, where $I$ and $X$ are the identity and ${\rm NOT}$ gates, respectively. Furthermore, if $\mathcal{U}=U(2)$, then $\mathbf{QC}$ is universal in the sense that every finite-dimensional unitary operator can be represented by a circuit in $\mathbf{QC}$, and if $\mathcal{U}=\{H,T\}$, then $\mathbf{QC}$ is approximately universal; that is, every finite-dimensional unitary operator can be approximated by a sequence of circuits in $\mathbf{QC}$.      

\subsection{Quantum Circuits Defined with Classical Variables}\label{sec-c-variables}

To enable quantum circuits be defined using classical expressions as their parameters, we assume a fixed classical programming language, and embed $\mathbf{QC}$ defined in the previous subsection into it. 
 \textit{It is always supposed that this classical language is deterministic.} The obtained quantum language is then denoted $\mathbf{QC}^+$.

\subsubsection{Syntax} Formally, the alphabet of $\mathbf{QC}^+$ consists of the alphabet of the chosen classical language, which we choose not to specify explicitly here for simplicity, together with the alphabet of $\mathbf{QC}$ (expanded with quantum array variables). Then we have:  
\begin{defn}\label{def-circ-1}Quantum circuits $C\in\mathbf{QC^+}$ with classical variables are defined by the syntax:
\begin{equation}\label{def-syntax-1}\begin{split}C::=\ \mathbf{skip} &\mid \overline{x}:=\overline{t}\mid U[\overline{q}]\ |\ C_1;C_2\\ &\mid \mathbf{if}\ b\ \mathbf{then}\ C_1\ \mathbf{else}\ C_2\ \mathbf{fi}\mid \mathbf{qif}[q] \left(|0\rangle\rightarrow C_0\right)\square\left(|1\rangle\rightarrow C_1\right) \mathbf{fiq}
%\mid \mathbf{while}\ b\ \mathbf{do}\ C\ \mathbf{od}
\end{split}\end{equation}
where $\overline{x}$ is a string of classical (simple or subscripted) variables, $\overline{t}$ is a string of classical expressions, $b$ is a Boolean expression, $q$ and $\overline{q}$ can be simple or subscripted quantum variables, and other conditions are the same as in Definition \ref{def-circ}.

The quantum variables $\mathit{qv}(C)$ in $C\in\mathbf{QC}^+$ are defined inductively as follows:\begin{enumerate} 
\item $\mathit{qv}(\mathbf{skip})=\mathit{qv}(\overline{x}:=\overline{t})=\emptyset$ and $\mathit{qv}(U[\overline{q}])=\overline{q}$;
\item $\mathit{qv}(C_1;C_2)=\mathit{qv}(\mathbf{if}\ b\ \mathbf{then}\ C_1\ \mathbf{else}\ C_2\ \mathbf{fi})=\mathit{qv}(C_1)\cup\mathit{qv}(C_2)$;
%\item $\mathit{qv}(\mathbf{while}\ b\ \mathbf{do}\ C\ \mathbf{od})= \mathit{qv}(C)$;
\item $\mathit{qv}(\mathbf{qif}[q] \left(|0\rangle\rightarrow C_0\right)\square\left(|1\rangle\rightarrow C_1\right) \mathbf{fiq})=\{q\}\cup\mathit{qv}(C_0)\cup \mathit{qv}(C_1).$
\end{enumerate} \end{defn}

As usual in classical programming, a conditional of the form $\mathbf{if}\ b\ \mathbf{then}\ C_1\ \mathbf{else}\ \mathbf{skip}\ \mathbf{fi}$ will be simply written as $\mathbf{if}\ b\ \mathbf{then}\ C_1\ \mathbf{fi}$. 

\subsubsection{Operational semantics}\label{sec-sem-q++}
To define operational semantics of $\mathbf{QC}^+$, we need to introduce several new notations for handling subscripted quantum variables. For a classical state $\sigma$ and a classical expression $t$, we write $\sigma(t)$ for the value of $t$ in $\sigma$. If $q$ is a simple quantum variable, then $\sigma(q)=q$, and if $q$ is a subscripted quantum variable $q^{\prime}[s_1,...,s_n]$, then $\sigma(q)$ denotes the quantum system $q^{\prime}[\sigma(s_1),...,\sigma(s_n)]$. For a sequence $\overline{q}=q_1...q_k$ of simple or subscripted quantum variable, $\sigma(\overline{q})$ denotes the composite system $\sigma(q_1)...\sigma(q_k)$. If $q_i=q_i^\prime[s_{i1},...,s_{in_i}]$ for $1\leq i\leq k$, then we write $\mathit{Dist}(\overline{q})$ for the following first-order logical formula:
\begin{equation}\label{def-dist}(\forall i,j\leq k)\left(i\neq j\wedge q^\prime_i=q^\prime_j\rightarrow (\exists l\leq n_i)\left(s_{il}\neq s_{jl}\right)\right).\end{equation}
Thus, $\sigma\models\mathit{Dist}(\overline{q})$ means that in the classical state $\sigma$, $\overline{q}$ denotes $k$ distinct quantum systems. We also need to modify the definition of configuration in order to accommodate classical variables. Let $X$ be a set of simple or array quantum variables. As in the previous subsection, we write $\hs_X=\bigotimes_{q\in X}\hs_q$. But here $q$ may be a quantum array variables; in this case, $\hs_q$ is defined by (\ref{h-space}). 
A configuration over $X$ is now defined as a triple $(C,\sigma,|\psi\rangle)$, where $C\in\mathbf{QC}^+$ is a circuit  with $\mathit{qv}(C)\subseteq X$, $\sigma$ is a state of classical variables, and $|\psi\rangle$ is a pure quantum state in $\hs_X$. For simplicity, we abuse a bit of notation and still use $\mathcal{C}(X)$ to denote the set of all configurations over $X$. Accordingly, a transition between configurations appears as 
\begin{equation}\label{eq-transition-1}(C,\sigma, |\varphi\rangle)\rightarrow (C^\prime,\sigma^\prime, |\psi\rangle), \end{equation} meaning that if the current state of the classical variables is $\sigma$, and the quantum variables are currently in state $|\varphi\rangle$, then after executing one step of quantum circuit $C$, the states of classical and quantum variables become $\sigma^\prime$ and $|\psi\rangle$, respectively, with the circuit remaining to execute being $C^\prime$. Then, we have: 

\begin{defn}\label{def-sem-classical}The operational semantics of circuits in $\mathbf{QC}^+$ with quantum variables in $X$ is the transition relation $\rightarrow\ \subseteq\mathcal{C}(X)\times\mathcal{C}(X)$ between configurations defined by the transitional rules given in Table \ref{circuit-rules-1}.
\begin{table*}[t]
\begin{equation*}\begin{split}&({\rm SK})\ \ 
(\mathbf{skip},\sigma, |\psi\rangle)\rightarrow (\downarrow,\sigma, |\psi\rangle)\qquad\qquad\qquad\quad ({\rm AS})\ \ (\overline{x}:=\overline{t},\sigma, |\psi\rangle)\rightarrow (\downarrow, \sigma[\overline{x}:=\sigma(\overline{t})], |\psi\rangle)\\ 
&({\rm GA})\ \ 
\frac{\sigma\models\mathit{Dist}(\overline{q})}{(U[\overline{q}], \sigma, |\psi\rangle)\rightarrow (
\downarrow,\sigma, U_{\sigma(\overline{q})}|\psi\rangle)}\qquad\qquad ({\rm SC})\ \ \frac{(C_1,\sigma, |\psi\rangle)\rightarrow (
C_1^{\prime},\sigma^\prime, |\psi^{\prime}\rangle)}{(
C_1;C_2,\sigma, |\psi\rangle)\rightarrow (C_1^{\prime};C_2,\sigma^\prime, |\psi^\prime\rangle)}\\ 
&({\rm IF})\ \ \ \frac{\sigma\models b}{(\mathbf{if}\ b\ \mathbf{then}\ C_1\ \mathbf{else}\ C_2\ \mathbf{fi},\sigma,|\varphi\rangle)\rightarrow (C_1,\sigma,|\varphi\rangle)}\quad\frac{\sigma\models\neg b}{(\mathbf{if}\ b\ \mathbf{then}\ C_1\ \mathbf{else}\ C_2\ \mathbf{fi},\sigma,|\varphi\rangle)\rightarrow (C_2,\sigma,|\varphi\rangle)}
\\
&({\rm QIF})\ \ \frac{|\psi\rangle=\alpha_0|0\rangle_{\sigma(q)}|\theta_0\rangle+\alpha_1|1\rangle_{\sigma(q)}|\theta_1\rangle\qquad (C_i,\sigma, |\theta_i\rangle)\rightarrow^\ast(\downarrow,\sigma^\prime, |\theta_i^\prime\rangle)\ (i=0,1)}{\left(\mathbf{qif}[q] \left(|0\rangle\rightarrow C_0\right)\square\left(|1\rangle\rightarrow C_1\right) \mathbf{fiq},\sigma, |\psi\rangle\right)\rightarrow \left(\downarrow,\sigma^\prime, \alpha_0|0\rangle_{\sigma(q)}|\theta_0^\prime\rangle+\alpha_1|1\rangle_{\sigma(q)}|\theta_1^\prime\rangle\right)}
\end{split}\end{equation*}
\caption{Transition Rules for Quantum Circuits with Classical Variables. In rule (AS), $\sigma[\overline{x}:=\sigma(\overline{t})]$ stands for the classical state obtained by updating the values of $\overline{x}$ with $\sigma(\overline{t})$ in $\sigma$.
In rule (GA), $U_{\sigma(q)}$ denotes that unitary operator acts on $\sigma(\overline{q})$, leaving other parts unchanged.}\label{circuit-rules-1}
\end{table*}
\end{defn}

The rules (SK), (SC), (AS) and (IF) are easy to understand. According to the rule (GA), the execution of unitary transformation $U[\overline{q}]$ in a circuit $C$ may fail if the premise $\mathit{Dist}(\overline{q})$ is not satisfied. More precisely, we say that unitary   
$U[\overline{q}]$ is an immediate command of $C$ if $C=U[\overline{q}];C^\prime$ or $C=\mathbf{qif}[q] \left(|0\rangle\rightarrow U[\overline{q}]; C_0\right)\square\left(|1\rangle\rightarrow C_1\right) \mathbf{fiq}$ or $C=\mathbf{qif}[q] \left(|0\rangle\rightarrow C_0\right)\square\left(|1\rangle\rightarrow U[\overline{q}]; C_1\right) \mathbf{fiq}$. A configuration $(C,\sigma,|\varphi\rangle)$ fails if $\sigma\not\models\mathit{Dist}(\overline{q})$ and some unitary $U[\overline{q}]$ is an immediate command of $C$. It eventually fails if $(C,\sigma,|\varphi\rangle)\rightarrow^\ast(C^\prime,\sigma^\prime,|\varphi^\prime\rangle)$ and $(C^\prime,\sigma^\prime,|\varphi^\prime\rangle)$ fails.  
It is easy to see that the execution failure of $C$ only depends on classical state $\sigma$ but not on quantum state $|\varphi\rangle$. So, we say that $(C,\sigma)$ is legitimate if for any $|\varphi\rangle$, $(C,\sigma,|\varphi\rangle)$ does not eventually fail. In addition, one design decision in the rule (QIF) needs an explanation. In its premise, when starting in the classical state $\sigma$, the executions of all branches $(C_i,\sigma,|\theta_i\rangle)$ $(i=0,1)$ are required to terminate in the same classical state $\sigma^\prime$. At the first glance, this is a very strong requirement and hard to meet in practical applications. Indeed, as we will see from rule (BS) in Table \ref{circuit-rules-1+} below, it can be easily achieved using local variables. 

\subsubsection{Local variables} 

In this subsection, we further introduce local classical variables into $\mathbf{QC}^+$ by extending its syntax with the following clause:
\begin{equation}\label{def-block}C::=\mathbf{begin\ local}\ \overline{x}:=\overline{t};C\ \mathbf{end},\end{equation} where $\overline{x}$ is a sequence of classical variables and $\overline{t}$ a sequence of classical expressions. A statement of the form (\ref{def-block}) is called a \textit{block statement}. It should be noted that the scope of local variables $\overline{x}$ in statement (\ref{def-block}) is explicitly given between $\mathbf{begin}$ and $\mathbf{end}$, and they are explicitly initialised by the assignment $\overline{x}:=\overline{t}$.  

The operational semantics of block statement (\ref{def-block}) is defined by the rule (BS) in Table \ref{circuit-rules-1+}. 
\begin{table*}[t]
\begin{equation*}({\rm BS})\ \ (\mathbf{begin\ local}\ \overline{x}:=\overline{t};C\ \mathbf{end}, \sigma, |\psi\rangle)\rightarrow (\overline{x}:=\overline{t};C;\overline{x}:=\sigma(\overline{x}),\sigma,|\psi\rangle)
\end{equation*}
\caption{Transition Rule for Local Variables.}\label{circuit-rules-1+}
\end{table*}
The rule (BS) is very similar to the rule defining the operational semantics of local variables in classical programs \cite{Apt09}. 
In the execution of block statement (\ref{def-block}) starting in classical state $\sigma$, the \textit{local variables} $\overline{x}$ are first initialised by the values $\sigma(\overline{t})$ of expressions $\overline{t}$, then the circuit $C$ within the statement is executed. After that, $\overline{x}$ resume their original values in $\sigma$. It is easy to see that if $\overline{x}$ is an empty sequence, then the block statement can be identified with circuit $C$ within it.  

\subsection{Quantum recursive programs without parameters}\label{sec-non-parameters}

Now we can define quantum recursive programs based on quantum circuits $\mathbf{QC}^+$ with classical variables. For a better understanding, let us first consider a simple class $\mathbf{RQC}^+$ of quantum recursive programs \textit{without} parameters.  

\subsubsection{Syntax} The alphabet of $\mathbf{RQC}^+$ is obtained by adding a set of \textit{procedure identifiers}, ranged over by symbols $P, P_1,P_2,...$ into the alphabet of $\mathbf{QC}^+$. Then the syntax of $\mathbf{RQC}^+$ is defined through  extending the syntax (\ref{def-syntax-1}) and (\ref{def-block}) of $\mathbf{QC}^+$ by adding the clause:
\begin{equation}%\label{def-syntax-2}
C::=P.\end{equation} It is assumed that quantum variables $\mathit{qv}(P)=\emptyset$. As in classical recursive programming (see for example, Chapter 4 of \cite{Apt09}), an occurrence of a procedure identifier in a program is called a \textit{procedure call}. We assume that each procedure identifier $P$ is defined by a \textit{declaration} of the form \begin{equation}\label{def-sem-2}P\Leftarrow C,\end{equation} where $C\in\mathbf{RQC}^+$ is called the procedure body. Note that in the declaration (\ref{def-sem-2}), $P$ may appear in the procedure body $C$; the occurrences of $P$ in $C$ are thus called \textit{recursive calls}. We assume a fixed set $\mathcal{D}$ of procedure declarations.  

We remember that in a quantum if-statement, a quantum coin should be external to the quantum circuits controlled by it. To enforce this rule in the circumstance of recursion, we introduce: 
\begin{defn}\label{def-well}A quantum recursive circuit $C\in\mathbf{RQC}^{+}$ is said to be well-defined if for any sub-circuit $\mathbf{qif}[q]\ (|0\rangle\rightarrow C_0)\square(|1\rangle\rightarrow C_1)\ \mathbf{fiq}$ of $C$, and for any procedure identifier $P$ in $C_0$ or $C_1$ with declaration $P\Leftarrow D\in\mathcal{D}$, we have $q\notin\mathit{qv}(D)$.\end{defn} 

The requirement in the above definition comes for compatibility with the condition $q\notin\mathit{qv}(C_0)\cup\mathit{qv}(C_1)$ in Definition \ref{def-circ}(3). In the sequel, all quantum circuits are assumed to be well-defined. 

\subsubsection{Operational semantics}To define the operational semantics of $\mathbf{RQC}^+$, we first generalise the notion of configuration $(C,\sigma, |\psi\rangle)$ by allowing $C\in\mathbf{RQC}^+$. Then the operational semantics is the transition relation $\rightarrow$ between configurations defined by the transition rules given in Tables \ref{circuit-rules-1} and \ref{circuit-rules-1+} together with the \textit{copy rule} (CR) in Table \ref{circuit-rules-2-0}. 
\begin{table*}[t]
\begin{equation}({\rm CR})\qquad \frac{P\Leftarrow C\in\mathcal{D}}{(P,\sigma, |\psi\rangle)\rightarrow (C,\sigma, |\psi\rangle)}\end{equation}
\caption{Transition Rule for Quantum Recursive Circuits without Parameters.}\label{circuit-rules-2-0}
\end{table*} 
Intuitively, the copy rule (CR) allows that a procedure call is dynamically replaced by the procedure body of its declaration given in declarations $\mathcal{D}$; that is, the procedure body is copied at the point of a call during the execution of a recursive program.  

\subsection{Quantum recursive programs with parameters}\label{sec-parameters}

In this subsection, we further expand the language $\mathbf{RQC}^+$ defined in the last subsection to $\mathbf{RQC}^{++}$ of quantum recursive circuits with \textit{classical} parameters.

\subsubsection{Syntax} We add a set of procedure identifiers $P, P_1,P_2,...$ into the alphabet of $\mathbf{QC}^{+}$. Each identifier $P$ is given an arity $\mathit{ar}(P)$. 
Then the syntax of $\mathbf{RQC}^{++}$ is defined by the syntax (\ref{def-syntax-1}) and (\ref{def-block}) of $\mathbf{QC}^+$ together with the following clause: 
\begin{align}\label{def-syntax-2}C::=\ & P(\overline{t})\end{align}
The above syntax of $\mathbf{RQC}^{++}$ is very similar to that of classical recursive programs with parameters given in \cite{Apt09}. In procedure call (\ref{def-syntax-2}), $P$ is a procedure identifier with $\mathit{ar}(P)=n$ and $\overline{t}=t_1,...,t_n$, where $t_1,...,t_n$ are classical expressions, called \textit{actual parameters}. Whenever $n=0$, procedure $P(\overline{t})$ degenerates to a procedure without parameters considered in the last subsection. 

As usual, each procedure identifier $P$ of $\mathit{ar}(P)=n$ is defined by a declaration of the form: \begin{equation}P(\overline{u})\Leftarrow C,\end{equation} where $\overline{u}=u_1,...,u_n$ are classical simple variables, called \textit{formal parameters}; and the procedure body $C\in\mathbf{RQC}^{++}.$ We assume a fixed set $\mathcal{D}$ of procedure declarations. The notion of well-defined quantum recursive circuit in Definition \ref{def-well} can be straightforwardly generalized to $\mathbf{QC}^{++}$. 

To demonstrate its expressive power, let us give a simple example that shows how controlled gate by multiple qubits can be properly programmed in our language $\mathbf{RQC}^{++}$: 

\begin{exam}[Controlled Gates]\label{exam-control-1} Let $U$ be a unitary operator. Then the controlled-$U$ gate on the section $q[m:n]$ with the first $(n-m)$ qubits as its control qubits and the last one as its target qubit is defined by \begin{equation}\label{def-controlled}C^{(\ast)}(U)|i_m,...,i_{n-1}\rangle|\psi\rangle =\begin{cases}
|i_m,...,i_{n-1}\rangle U|\psi\rangle\ &{\rm if}\ i_m=...=i_{n-1}=1;\\ |i_m,...,i_{n-1}\rangle|\psi\rangle &{\rm otherwise}
\end{cases}\end{equation} for any $i_m,...,i_{n-1}\in\{0,1\}$ and $|\psi\rangle\in\hs_2$.

The controlled gate (\ref{def-controlled}) can be written as the following recursive program:
\begin{equation}\label{c-gate}\begin{split}C^{(\ast)}(U)(m,n)\ \Leftarrow\ &\mathbf{if}\ m=n\\ &\ \ \ \mathbf{then}\ U[q[n]]\\
&\ \ \ \mathbf{else}\ \mathbf{qif}[q[m]]|0\rangle\rightarrow\mathbf{skip}\\ 
&\quad\ \qquad\qquad\square\ \ \ |1\rangle\rightarrow C^{(\ast)}(U)(m+1,n)\\ &\qquad\quad\ \mathbf{fiq}\\ 
&\mathbf{fi}.
\end{split}\end{equation}    
\end{exam}

A more sophisticated example is our motivating example of quantum Fourier transform (Example \ref{example-QFT}). Some other examples will be presented in Appendix \ref{appendix-examples}.

\subsubsection{Operational semantics} Similar to Definition \ref{def-well}, we can define the semantics of quantum recursive circuits $\mathbf{RQC}^{++}$ with parameters. 
Now configurations are triples $(C,\sigma,|\psi\rangle)$ with $C\in\mathbf{RQC}^{++}$ and $\sigma, |\psi\rangle$ are the same as in Section \ref{sec-non-parameters}. The semantics of $\mathbf{RQC}^{++}$ is then a transition relation between configurations defined by the rules in Tables \ref{circuit-rules-1} and \ref{circuit-rules-1+} together with the recursive rule (RC) in Table \ref{circuit-rules-2}. 
\begin{table*}[t]
\begin{equation*}({\rm RC})\ \ \frac{P(\overline{u})\Leftarrow C\in\mathcal{D}}{(P(\overline{t}),\sigma, |\psi\rangle)\rightarrow (\mathbf{begin\ local}\ \overline{u}:=\overline{t};C\ \mathbf{end},\sigma, |\psi\rangle)}
 \end{equation*}
\caption{Transition Rule for Quantum Recursive Circuits with Parameters.}\label{circuit-rules-2}
\end{table*} 
The transition rule (RC) for quantum recursive circuits with parameters is essentially a copy rule. The main difference between it and the copy rule (CR) in Table \ref{circuit-rules-2-0} is that a block statement around the procedure body $C$ is introduced to limit the scope of formal parameters $\overline{u}$. This block statement guarantees that the values of the formal parameters $\overline{u}$ are not changed at the end of the procedure call, and the formal parameters $\overline{u}$ are no longer accessible after the procedure call terminates.       

 The following lemmas shows that for any given initial state $\sigma$ of classical variables, then (the operational semantics of) each quantum circuit $C\in\mathbf{RQC}^{++}$ actually induces a unitary operator $\llbracket C\rrbracket_\sigma$ on the quantum variables, provided $(C,\sigma)$ is legitimate (see the paragraph after Definition \ref{def-sem-classical}). Moreover, the family $\llbracket C\rrbracket=\left\{\llbracket C\rrbracket_\sigma\right\}$ of unitary operators indexed by the states of classical variables can be understood as the denotational semantics of $C$.    
\begin{lem}\label{lem-linear1}
For any quantum circuit $C\in\mathbf{RQC}^{++}$ and for any classical state $\sigma$, we have: 
\begin{enumerate}\item If for some $|\varphi_0\rangle\in\hs_{\mathit{qv}(C)}$, it holds that $(C,\sigma,|\varphi_0\rangle)\rightarrow^\ast (\downarrow,\sigma^\prime,|\psi_0\rangle)$ for some $\sigma^\prime$ and $|\psi_0\rangle$, then for all $|\varphi\rangle\in\hs_{\mathit{qv}(C)}$, we have for some $|\psi\rangle$: \begin{equation}\label{det-computing}(C,\sigma,|\varphi\rangle)\rightarrow^\ast (\downarrow,\sigma^\prime,|\psi\rangle).\end{equation} \item In this case, we define mapping $\llbracket C\rrbracket_\sigma:\hs_{\mathit{qv}(C)}\rightarrow\hs_{\mathit{qv}(C)}$ by 
\begin{equation}\label{det-computing1}\llbracket C\rrbracket_\sigma(|\varphi\rangle)=|\psi\rangle\ {\rm if}\ (C,\sigma,|\varphi\rangle)\rightarrow^\ast(\downarrow,\sigma^\prime,|\psi\rangle).\end{equation} Then $\llbracket C\rrbracket_\sigma$ is well-defined and it is a unitary operator on $\hs_{\mathit{qv}(C)}.$
\end{enumerate}\end{lem}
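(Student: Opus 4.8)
The plan is to prove both parts simultaneously by a single well-founded induction, exploiting the crucial structural feature of $\mathbf{RQC}^{++}$ that its \emph{classical} control flow never branches on the quantum state. Inspecting the rules in Tables \ref{circuit-rules-1}, \ref{circuit-rules-1+} and \ref{circuit-rules-2}, every side condition that decides which rule fires and what the next program-part and classical-part are --- namely $\sigma\models\mathit{Dist}(\overline{q})$ in (GA), $\sigma\models b$ in (IF), the declaration lookup in (RC), and (as I will establish inductively) the requirement in (QIF) that both branches reach a common $\sigma'$ --- depends only on the classical data $(C,\sigma)$. The quantum component is only ever hit by a unitary ($U_{\sigma(\overline{q})}$ in (GA)) or split and recombined along the quantum coin in (QIF). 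Hence the ``classical skeleton'' of a terminating computation is identical for every quantum input, which is precisely the assertion of part 1, while the quantum map it induces is assembled entirely out of unitaries, which will yield part 2.

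Because a procedure body may call itself, plain structural induction on $C$ fails, so I would instead induct on $N$, the total number of rule instances occurring in the derivation of a terminating computation $(C,\sigma,|\varphi_0\rangle)\rightarrow^\ast(\downarrow,\sigma',|\psi_0\rangle)$, counting the $\rightarrow^\ast$ sub-derivations appearing in the premise of (QIF) recursively. The induction hypothesis, strengthened to carry the derivation size explicitly, asserts for all terminating derivations of size $<N$: that termination on one quantum input forces termination on every input \emph{using the same number of rule instances} and reaching the same $\sigma'$ (part 1), and that the induced map $\llbracket C\rrbracket_\sigma$ is a unitary operator on $\hs_{\mathit{qv}(C)}$ (part 2). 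Carrying ``the same number of rule instances'' is exactly what keeps the induction well-founded: it guarantees that when I invoke the hypothesis on a sub-computation for an \emph{arbitrary} quantum input rather than only the witnessing $|\varphi_0\rangle$, that sub-computation still has size $<N$.

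The case analysis on the first rule fired is then largely routine. The base cases (SK), (AS) and (GA) give, respectively, the identity, the identity on the quantum part, and $U_{\sigma(\overline{q})}$ (with the convention that a violated $\mathit{Dist}(\overline{q})$ makes the hypothesis of part 1 vacuous). For (SC) I would apply the hypothesis to the (strictly smaller) derivation of $C_1$ to obtain a unitary $\llbracket C_1\rrbracket_\sigma$ and an intermediate classical state $\sigma_1$, then apply it to $C_2$ started in $\sigma_1$, so that $\llbracket C\rrbracket_\sigma=\llbracket C_2\rrbracket_{\sigma_1}\circ\llbracket C_1\rrbracket_\sigma$ is a composition of two unitaries and hence unitary. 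The rules (IF), (BS) and (RC) each take a single classical step to a strictly smaller configuration and reduce directly to the hypothesis.

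The only genuinely new case is (QIF), and I expect it to be the main obstacle. Writing the witness as $|\varphi_0\rangle=\alpha_0|0\rangle_{\sigma(q)}|\theta_0\rangle+\alpha_1|1\rangle_{\sigma(q)}|\theta_1\rangle$ (legitimate since the well-definedness condition of Definition \ref{def-well} forces $q\notin\mathit{qv}(C_0)\cup\mathit{qv}(C_1)$), the fact that the rule fired means each branch $(C_i,\sigma,|\theta_i\rangle)$ terminated in a common $\sigma'$ through a strictly smaller sub-derivation. Applying the hypothesis to $C_0$ and $C_1$ shows each branch in fact terminates, on \emph{every} quantum input, in a \emph{fixed} classical state $\sigma_i'$ with $\llbracket C_i\rrbracket_\sigma$ unitary; since the rule fired at all, $\sigma_0'=\sigma_1'=\sigma'$. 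This dichotomy --- the branch outcomes are classically predetermined, so either they agree and (QIF) fires for all inputs, or they disagree and it fires for none --- delivers part 1. For part 2, the uniqueness of the decomposition along $|0\rangle_{\sigma(q)},|1\rangle_{\sigma(q)}$ together with the linearity supplied by the hypothesis shows that the pointwise-defined semantics coincides with the block-diagonal operator $|0\rangle\langle 0|\otimes\llbracket C_0\rrbracket_\sigma+|1\rangle\langle 1|\otimes\llbracket C_1\rrbracket_\sigma$ of (\ref{qcase-3}), which is unitary because both diagonal blocks are. The two delicate points are thus justifying that the branches must agree on the final classical state and that the pointwise map genuinely equals this block-diagonal unitary; the secondary subtlety, already addressed above, is grounding the induction on derivation size rather than on program structure.
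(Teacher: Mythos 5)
Your proposal is correct, but it cannot be checked against a detailed argument in the paper, because the paper offers none: the proof of Lemma \ref{lem-linear1} is the single sentence that it is ``routine by induction on the length of $C$,'' with all details omitted. Measured against that sketch, your argument fills in what the authors gesture at, and your choice of induction measure is a genuine (and necessary) correction rather than a cosmetic variant: literal induction on the length of $C$ is not well-founded in $\mathbf{RQC}^{++}$, since the copy rule (RC) in Table \ref{circuit-rules-2} replaces a call $P(\overline{t})$ by a block statement wrapping the procedure body, which may be longer than the call itself. Your induction on the total number of rule instances in a terminating derivation, counting the $\rightarrow^\ast$ premises of (QIF) recursively, together with the strengthened hypothesis that the derivation size is \emph{independent of the quantum input}, is exactly what makes the recursive invocations of the hypothesis on sub-computations for arbitrary quantum inputs legitimate; this strengthening is the key technical idea and is consistent, since the classical skeleton of the computation never branches on the quantum state. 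Your treatment of the one delicate case, (QIF), is also sound: the branch terminal states are classically predetermined by the induction hypothesis, so the rule's agreement requirement on $\sigma^\prime$ holds for all inputs iff it holds for one, and the non-uniqueness of the pair $(\alpha_i,|\theta_i\rangle)$ in the coin decomposition is harmless because the products $\alpha_i|\theta_i\rangle$ are unique and the inductively obtained $\llbracket C_i\rrbracket_\sigma$ are linear; the resulting pointwise map is the block-diagonal operator of equation (\ref{qcase-3}), hence unitary. The only points left implicit, both minor, are the bookkeeping that $\llbracket C_1\rrbracket_\sigma$ and $\llbracket C_2\rrbracket_{\sigma_1}$ in the (SC) case must be tensored with the identity on $\hs_{\mathit{qv}(C)\setminus\mathit{qv}(C_i)}$ before being composed, and the extraction of a standalone derivation of $C_1$ from the prefix of the derivation of $C_1;C_2$; both are standard. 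In short, where the paper asserts routineness, you have identified and repaired the one respect in which the asserted induction is not routine.
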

\begin{proof} The proof is routine by induction on the length of $C$. We omit the tedious details here. 
\end{proof}

It should be noted from the operational semantics of $\mathbf{RQC}^{++}$ that the computation (\ref{det-computing}) is deterministic, since the underlying classical programming language is assumed to be deterministic. Therefore, the terminal classical state $\sigma^\prime$ in equation (\ref{det-computing1}) is uniquely determined by the initial classical state $\sigma$. This is why the notation $\llbracket C\rrbracket_\sigma$ has only the subscript $\sigma$ but not $\sigma^\prime$.   

\section{Correctness Formulas}\label{sec-correctness}

In this section, we define Hoare triples for specifying correctness of quantum recursive circuits in $\mathbf{RQC}^{++}$ described in the previous section. 

\subsection{Parameterized Quantum States}\label{sec-parameter}

Since a quantum circuit $C\in\mathbf{RQC}^{++}$ is defined with classical variables, as already mentioned in Section \ref{Contri}, quantum states $|\varphi\rangle$ in the precondition and $|\psi\rangle$ in the postcondition of correctness formula (\ref{Hoare-00}) must be parameterized by the states of classical variables. Formally, let $q:T_1\times ...\times T_n\rightarrow\hs$ be a array quantum variable. Then for any values $v_1\in T_1,...,v_n\in T_n$, array element $q^\prime=q[v_1,...,v_n]$ has the Hilbert space $\hs_{q^\prime}=\hs$. We write $E(q)=\{q[v_1,...,v_n]|v_1\in T_1,...,v_n\in T_n\}$. We further write $E(\mathit{QV})=\bigcup_{q\in\mathit{QV}}E(q)$ for the set of all quantum simple variables and array elements in $\mathit{QV}$. Then we have:

\begin{defn}Let $\Sigma$ be the set of states of classical variables, and let $S:\Sigma\rightarrow 2^{E(\mathit{QV})}$ be a mapping from classical states to subsets of quantum variables. Then a parameterized quantum state of signature $S$ is a (partial) function $|\varphi\rangle:\sigma\mapsto |\varphi\rangle(\sigma)$ such that for each $\sigma\in\Sigma$, if $|\varphi\rangle(\sigma)$ is defined then it a pure state in a Hilbert space $\hs_{S(\sigma)}=\bigotimes_{q\in S(\sigma)}\hs_q$ of quantum variables in $S(\sigma)$.\end{defn}

We use the same notation $|\varphi\rangle, |\psi\rangle,...$ for ordinary and parameterized quantum states since every quantum state $|\varphi\rangle$ can be seen as a special parameterized quantum state, also denoted $|\varphi\rangle$, defined by $|\varphi\rangle(\sigma)\equiv |\varphi\rangle$ for all classical states $\sigma$. Various operations of quantum states can be extended point-wise to parameterized quantum states. For example, if $|\varphi\rangle,|\psi\rangle$ are parameterized quantum states, $\alpha,\beta$ are complex numbers, and $U$ is a unitary operator, then $\alpha |\varphi\rangle+\beta|\psi\rangle$ is defined by $(\alpha |\varphi\rangle+\beta|\psi\rangle)(\sigma)=\alpha |\varphi\rangle(\sigma)+\beta|\psi\rangle(\sigma)$, and $U|\varphi\rangle$ is defined by $(U|\varphi\rangle)(\sigma)=U|\varphi\rangle(\sigma)$ for all $\sigma$. 

Let $|\varphi\rangle$ and $|\psi\rangle$ are two parameterized quantum states. For a given classical state $\sigma\in\Sigma$, we say that $|\varphi\rangle$ equals $|\psi\rangle$ in $\sigma$, written $\sigma\models |\varphi\rangle=|\psi\rangle$, if $|\varphi\rangle(\sigma)=|\psi\rangle(\sigma)$. 
 
 \begin{exam}\label{exam-GHZ}\begin{enumerate}\item Consider qubit array section $q[m:n]$. Then the equal superposition and the GHZ state over it can be seen as parameterized quantum states:
  \begin{align*}&|S\rangle: (m,n)\mapsto |S\rangle(m,n)=\frac{1}{\sqrt{2^{n-m+1}}}\sum_{x\in\{0,1\}^{n-m+1}}|x\rangle\in\bigotimes\hs_2^{\otimes (n-m+1)},\\  &|\mathit{GHZ}\rangle:(m,n)\mapsto |\mathit{GHZ}\rangle(m,n)=\frac{1}{\sqrt{2}}\left(|0\rangle^{\otimes (n-m+1)}+|1\rangle^{\otimes (n-m+1)}\right)\in\bigotimes\hs_2^{\otimes (n-m+1)}.\end{align*}
\item If a classical state $\sigma$ satisfies $\sigma(m)=\sigma(n)$, then we have $\sigma\models |S\rangle=|\mathit{GHZ}\rangle$.
\end{enumerate}\end{exam}

We can further define a syntax for parameterized quantum states so that they can be described in a formal language (and thus symbolized). Due to the limitation of space, we defer it to Appendix \ref{formal-states}. Here, we choose to use the above definition because it is easy to use and understand for our application examples given in Section \ref{sec-examples} and Appendix \ref{appendix-examples}. 

\subsection{Quantum Hoare Triples}

Now we can define the notion of correctness formula, usually called Hoare triple, for specifying the correctness of quantum circuits with classical variables as their parameters.  

\begin{defn}A Hoare triple for quantum recursive circuits is defined of the form \begin{equation}\label{Hoare-0+}\{A,|\varphi\rangle\}\ C\ \{B,|\psi\rangle\},\end{equation} where: \begin{enumerate}\item $C\in\mathbf{RQC}^{++}$ is a quantum recursive circuit; \item $A$ and $B$ are first-order logical formulas over classical variables; and \item $|\varphi\rangle$ and $|\psi\rangle$ are parameterized quantum states. 
\end{enumerate}
\end{defn}

The precondition $\{A,|\varphi\rangle\}$ and postcondition $\{B,|\psi\rangle\}$ are both \textit{classical-quantum assertions} in the following sense: The logical formulas $A$ in the precondition $\{A,|\varphi\rangle\}$ and $B$ in the postcondition $\{B,|\psi\rangle\}$ are used to specify the properties of the initial and terminal states, respectively, of classical variables, whereas $|\varphi\rangle$ and $|\psi\rangle$ are used to denote the initial and terminal states, respectively, of quantum variables (parameterized by classical variables).  

Intuitively, Hoare triple (\ref{Hoare-0+}) means that if we start to execute quantum circuit $C$ from an initial classical state $\sigma$ that satisfies precondition $A$, then $C$ transforms quantum state $|\varphi\rangle(\sigma)$ to $|\psi^\prime\rangle$, and at the end of the execution, the classical variables must be in a state $\sigma^\prime$ that satisfies postcondition $B$ and $|\psi^\rangle=|\psi\rangle(\sigma^\prime)$. As usual, we distinguish partial and total correctness based on whether the termination of execution is required or not. Formally, we first note that the classical programming language assumed in this paper should be defined upon a first-order logical language, called the \textit{assertion language}. Accordingly, the semantics of Hoare triple (\ref{Hoare-0+}) is defined upon the semantics of this assertion language. Then we have:  

\begin{defn}\label{def-hoare-semantics} Given an interpretation $\mathbb{I}$ of the assertion language. \begin{enumerate}\item A Hoare triple $\{A,|\varphi\rangle\}\ C\ \{B,|\psi\rangle\}$ is true in $\mathbb{I}$ in the sense of partial correctness, written $$\mathbb{I} \models_\mathit{par}\{A,|\varphi\rangle\}\ C\ \{B,|\psi\rangle\},$$ if for any state $\sigma$ of classical variables, whenever $\sigma\models_\mathbb{I} A$ (in the sense of first-order logic), and $(C,\sigma,|\varphi\rangle(\sigma))\rightarrow^\ast (\downarrow,\sigma^\prime,|\psi^\prime\rangle)$ for some state $\sigma^\prime$ of classical variables and for some quantum state $|\psi^\prime\rangle$, then we have:\begin{enumerate} 
\item $\sigma^\prime\models_\mathbb{I} B;$ and \item $|\psi^\prime\rangle=|\psi\rangle(\sigma^\prime)$. 
\end{enumerate}
\item A Hoare triple $\{A,|\varphi\rangle\}\ C\ \{B,|\psi\rangle\}$ is true in the sense of total correctness, written $$\mathbb{I}\models_\mathit{tot}\{A,|\varphi\rangle\}\ C\ \{B,|\psi\rangle\},$$ if for any state $\sigma$ of classical variables, whenever $\sigma\models_\mathbb{I} A$, then for some state $\sigma^\prime$ of classical variables and for some quantum state $|\psi^\prime\rangle$, we have:\begin{enumerate}\item $(C,\sigma,|\varphi\rangle(\sigma))\rightarrow^\ast (\downarrow,\sigma^\prime,|\psi\rangle)$; 
\item $\sigma^\prime\models_\mathbb{I} B;$ and \item $|\psi^\prime\rangle=|\psi\rangle(\sigma^\prime)$. 
\end{enumerate}
\end{enumerate}
\end{defn}

We need several more notations in the subsequent discussions. By the judgement: $$\models_\mathit{par}\{A,|\varphi\rangle\}\ C\ \{B,|\psi\rangle\},$$ we mean that $\mathbb{I} \models_\mathit{par}\{A,|\varphi\rangle\}\ C\ \{B,|\psi\rangle\}$ for all interpretations $\mathbb{I}$ of the assertion language. Similarly, we can define $\models_\mathit{tot}\{A,|\varphi\rangle\}\ C\ \{B,|\psi\rangle\}$. Furthermore, we often write: $$A\models_\mathit{par}\{|\varphi\rangle\}\ C\ \{|\psi\rangle\}\ ({\rm respectively,}\ A\models_\mathit{tot}\{|\varphi\rangle\}\ C\ \{|\psi\rangle\})$$ if it holds that $\models_\mathit{par}\{A,|\varphi\rangle\}\ C\ \{\mathit{true},|\psi\rangle\}$ (respectively, $\models_\mathit{tot}\{A,|\varphi\rangle\}\ C\ \{\mathit{true},|\psi\rangle\}$). 

To illustrate the notions of correctness introduced above, let us consider a simple example: 

\begin{exam}\label{exam-control-2} Let us introduce a classical bit array variable $i$ of the type $\mathbf{integer}\rightarrow\mathbf{Boolean}$. We define two parameterized quantum states:
\begin{align*}&|\Phi\rangle:(i,m,n)\mapsto|\Phi\rangle(i,m,n)=|i[m:n]\rangle\in\hs_2^{\otimes (n-m+1)}\\
&|\Psi\rangle:(i,m,n)\mapsto|\Psi\rangle(i,m,n)=|i[m:n-1]\rangle U|i[n]\rangle\in\hs_2^{\otimes (n-m+1)},\end{align*}
where we use the convention that the $|i[m:n-1]\rangle$ part is omitted if $m>n-1$.
Then the correctness of the controlled gate in Example \ref{exam-control-1}, i.e. quantum recursive program (\ref{c-gate}), can be specified by the following two Hoare triples: 
\begin{align}\label{c-gate-1}&(m\leq n)\wedge(\forall m\leq k<n)(i[k]=1)\models_\mathit{tot} \left\{|\Phi\rangle\right\}\ C^{(\ast)}(m,n)\ \left\{|\Psi\rangle\right\};\\
\label{c-gate-2}&(m\leq n)\wedge(\exists m\leq k<n)(i[k]=0)\models_\mathit{tot} \left\{|\Phi\rangle\right\}\ C^{(\ast)}(m,n)\ \left\{|\Phi\rangle\right\}.\end{align}\end{exam}

A proof of (\ref{c-gate-1}) and (\ref{c-gate-2}) using our proof system given in the next section is presented in Appendix \ref{appendix-examples}.

\section{Proof System for Quantum Recursive Circuits}\label{proof-system}

In this section, we present a set of inference rules for proving the correctness of quantum circuits recursively defined in the programming language introduced in Section \ref{sec-language}. These rules will be organised into two proof (sub)systems, one for partial correctness and one for total correctness.   

\subsection{Structural Rules}\label{sec-struct-proof}

The first group of our proof rules are three structural rules shown in Table \ref{circuit-rules-proof-0}: (Frame), (Linearity) and (Consequence). 
\begin{table*}[t]
\begin{equation*}\begin{split}&({\rm Frame})\ \ \ \frac{\{A,|\varphi\rangle\}\ C\ \{B,|\psi\rangle\}\qquad \overline{q}\cap\mathit{qv}(C)=\emptyset}{\left\{A,|\varphi\rangle|\theta\rangle_{\overline{q}}\right\}\ C\ \left\{B,|\psi\rangle|\theta\rangle_{\overline{q}}\right\}}\\ 
&({\rm Linearity})\ \ \frac{\left\{A,|\varphi_i\rangle\right\}\ C\ \left\{B,|\psi_i\rangle\right\}\ (i=1,2)}{\left\{A,\alpha_1|\varphi_1\rangle+\alpha_2|\varphi_2\rangle\right\}\ C\ \left\{B,\alpha_1|\psi_1\rangle+\alpha_2|\psi_2\rangle\right\}}
 \\ &({\rm Consequence})\ \ \frac{\{A,|\varphi\rangle\}\models \{A^\prime,|\varphi^\prime\rangle\}\qquad \{A^\prime,|\varphi^\prime\rangle\}\ C\ \{B^\prime,|\psi^\prime\rangle\}\qquad \{B^\prime,|\psi^\prime\rangle\}\models \{B,|\psi\rangle\}}{\{A,|\varphi\rangle\}\ C\ \{B,|\psi\rangle\}}
\end{split}\end{equation*}
\caption{Structural Rules. %In rule (Frame), for two parameterized quantum states $|\varphi\rangle$ and $|\psi\rangle$, we use $|\varphi\rangle|\psi\rangle$ to denote the parameterized quantum state defined by $(|\varphi\rangle|\psi\rangle)(\sigma)=|\varphi\rangle(\sigma)|\psi\rangle(\sigma)$ for every classical state $\sigma$.
}\label{circuit-rules-proof-0}
\end{table*}
Intuitively, they are explained as follows: \begin{itemize}\item The rule (Frame) allows us to adapt a proved correctness into a larger environment with more quantum variables. It is interesting to note that a similar adaptation can be freely done in classical Hoare logic without an explicit statement of frame rule.  
\item The rule (Linearity) is introduced to define the linearity of quantum circuits. 
\item As in the classical Hoare logic, the rule (Consequence) enables us to strengthen the precondition and to weaken the postcondition in a Hoare triple. But the entailment $\models$ used here is defined between two classical-quantum assertions:  
\end{itemize}

\begin{defn}We say that condition $\{A,|\varphi\rangle\}$ entails condition $\{B,|\psi\rangle\}$, written $$\{A,|\varphi\rangle\}\models\{B,|\psi\rangle\},$$ if $A\models B$ and $A\models |\varphi\rangle=|\psi\rangle$ (that is, for any $\sigma\models A$, we have $\sigma\models |\varphi\rangle=|\psi\rangle$, i.e. $|\varphi\rangle(\sigma)=|\psi\rangle(\sigma)$).
\end{defn}

Here, $A\models |\varphi\rangle=|\psi\rangle$ should be understood as follows: Both $|\varphi\rangle$ and $|\psi\rangle$ may be given with certain classical variables as their parameters. Then $A\models |\varphi\rangle=|\psi\rangle$ means that whenever the classical variables in $|\varphi\rangle$ and $|\psi\rangle$ satisfy condition $A$, then $|\varphi\rangle$ and $|\psi\rangle$ are the same. 

\subsection{Proof Rules for Basic Program Constructs}

The second group of our proof rules are six construct-specific rule given in Table \ref{circuit-rules-proof-1}, each of which is designed for proving the correctness of one program construct defined in the syntax (\ref{def-syntax-1}). 
\begin{table*}[t]
\begin{equation*}\begin{split}&({\rm SK-P})\ \ \{A,|\varphi\rangle\}\ \mathbf{skip}\ \{A,|\varphi\rangle\}\qquad\qquad ({\rm AS-P})\ \ \left\{A[\overline{x}:=\overline{t}],|\varphi\rangle\right\}\ \overline{x}:=\overline{t}\ \{A,|\varphi\rangle\}\\ 
&({\rm GA-P})\ \ \left\{A, U^\dag|\varphi\rangle\right\}\ U[\overline{q}]\ \{A,|\varphi\rangle\}\\ 
&({\rm SC-P})\ \ \frac{\{A,|\varphi\rangle\}\ C_1\ \{B,|\psi\rangle\}\qquad \{B,|\psi\rangle\}\ C_2\ \{D,|\theta\rangle\}}{\{A,|\varphi\rangle\}\ C_1;C_2\ \{D,|\theta\rangle\}}\\
&({\rm IF-P})\ \ \frac{\{A\wedge b,|\varphi\rangle\}\ C_1\ \{B,|\psi\rangle\}\qquad \{A\wedge\neg b,|\varphi\rangle\}\ C_2\ \{B,|\psi\rangle\}}{\{A,|\varphi\rangle\}\ \mathbf{if}\ b\ \mathbf{then}\ C_1\ \mathbf{else}\ C_2\ \mathbf{fi}\ \{B,|\psi\rangle\}}\\
&({\rm QIF-P})\ \ \frac{\{A,|\varphi_i\rangle\}\ C_i\ \{B,|\psi_i\rangle\}\qquad \mathit{cv}(q)\cap\mathit{change}(C_i)=\emptyset\ (i=0,1)}{\begin{array}{ccc}\left\{A,\alpha_0|0\rangle_q|\varphi_0\rangle+\alpha_1|1\rangle_q|\varphi_1\rangle\right\}\qquad\qquad\qquad\qquad\qquad\qquad\qquad\qquad\qquad\\ \mathbf{qif}[q] (|0\rangle\rightarrow C_0)\ \square\ (|1\rangle\rightarrow C_1)\ \mathbf{fiq}\\ \qquad\qquad\qquad\qquad\qquad\qquad\qquad\qquad\qquad\left\{B,\alpha_0|0\rangle_q|\psi_0\rangle+\alpha_1|1\rangle_q|\psi_1\rangle\right\}\end{array}}\\
\end{split}\end{equation*}
\caption{Proof Rules for Quantum Circuits with Classical Variables.}\label{circuit-rules-proof-1}
\end{table*}

The first five rules in Table \ref{circuit-rules-proof-1} are easy to understand. Indeed, they are the corresponding rules in either classical or quantum Hoare logic adapted to the Hoare triples of the form (\ref{Hoare-00}) for quantum circuits with classical variables as their parameters. 

The rule (QIF-P) in Table \ref{circuit-rules-proof-1} has not been introduced in the previous literature. It characterises the behaviour of quantum if-statement. Here, $\mathit{cv}(q)$ denotes the set of classical variables appearing in $q$; that is, if $q$ is a simple quantum variable, then $\mathit{cv}(q)=\emptyset$, and if $q=q^\prime[t_1,...,t_n]$ is a subscripted quantum variable, then $\mathit{cv}(q)=\bigcup_{i=1}^n\mathit{var}(t_i)$. Also, $\mathit{change}(C_i)$ denotes the set of classical variables changed by program $C_i$. Indeed, the rule (QIF-P) can be simplified to   
\begin{equation*}
   ({\rm QIF-P-S})\ \ \frac{\{A,|\varphi\rangle\}\ C_i\ \{B,|\psi\rangle\}\qquad \mathit{cv}(q)\cap\mathit{change}(C_i)=\emptyset\ (i=0,1)}{\{A,|i\rangle|\varphi\rangle\}\ \mathbf{qif}[q] (|0\rangle\rightarrow C_0)\ \square\ (|1\rangle\rightarrow C_1)\ \mathbf{fiq}\ \{B,|i\rangle|\psi\rangle\}}\ (i=0,1)
\end{equation*} Obviously, (QIF-P) can be derived from (QIF-P-S) with the rule (Linearity). But (QIF-P) is more convenient in practical applications.

\subsection{Proof Rules for Parameterisation}\label{sec-rule-para}

The quantum recursive programs considered in this paper are defined using classical terms as parameters. So, we introduce two proof rules in Table \ref{circuit-rules-proof-2} to handle this mechanism of parameterisation.
\begin{table*}[t]
\begin{equation*}\begin{split}&({\rm Block})\ \ \frac{\{A,|\varphi\rangle\}\ \overline{x}:=\overline{t};C\ \{B,|\psi\rangle\}\qquad \overline{x}\cap\mathit{free}(B) =\emptyset\qquad \mathit{Inv}(|\psi\rangle: \overline{x})}{\{A,|\varphi\rangle\}\ \mathbf{begin\ local}\ \overline{x}:=\overline{t};C\ \mathbf{end}\ \{B,|\psi\rangle\}}\\ 
\\ 
&({\rm Instantiation})\ \ \frac{ 
\{A,|\varphi\rangle\}\ P(\overline{x})\ \{B,|\psi\rangle\}\qquad \overline{x}\cap\mathit{var}(\mathcal{D})=\mathit{var}(\overline{t})\cap\mathit{change}(\mathcal{D})=\emptyset}{\{A[\overline{x}:=\overline{t}],|\varphi\rangle[\overline{x}:=\overline{t}]\}\ P(\overline{t})\ \{B[\overline{x}:=\overline{t}],|\psi\rangle[\overline{x}:=\overline{t}]\}}\\ \\ 
\end{split}\end{equation*}
\caption{Proof Rules for Blocks and Instantiation.}\label{circuit-rules-proof-2}
\end{table*} Their intuitive meanings are explained as follows:
\begin{itemize}\item The rule (Block) in Table \ref{circuit-rules-proof-2} enables us to localise a family of classical variables in a block statement provided that they are not free variables in the postcondition of an intended correctness formula. Here, (i) $\mathit{free}(B)$ stands for the set of free (classical) variables in first-order logical formula $B$; and (ii)  $\mathit{Inv}(|\psi\rangle: \overline{x})$ means
that $|\psi\rangle$ is invariant over $\overline{x}$; that is, $|\psi\rangle(\sigma_1)=|\psi\rangle(\sigma_2)$ for all classical states $\sigma_1$ and $\sigma_2$ that coincide everywhere except $\overline{x}$.   
\item The rules (Instantiation) in Table \ref{circuit-rules-proof-2} allows us to instantiating free (classical) variables with terms under certain conditions for the freedom of variable conflicts. Here, (a) $A[\overline{x}:=\overline{t}]$ and $B[\overline{x}:=\overline{t}]$ stand for the substitution of classical variables $\overline{x}$ by classical expressions $\overline{t}$ in first-order logical formulas $A$ and $B$, respectively; (b) For a classical state $\sigma$, a sequence $\overline{x}$ of classical variables and a sequence of classical values, $\sigma[\overline{x}:=\overline{v}]$ denotes the classical state defined by $\sigma[\overline{x}:=\overline{v}](y)=\sigma(y)$ if $y\notin\overline{x}$ and $\sigma[\overline{x}:=\overline{v}](x)=v$ if $x\in\overline{x}$. 
Then $|\varphi\rangle[\overline{x}:=\overline{t}]$ denotes the parameterized quantum state defined by $(|\varphi\rangle[\overline{x}:=\overline{t}])(\sigma)=|\varphi\rangle(\sigma[\overline{x}:=\sigma(\overline{t})])$ for every classical state $\sigma$. 
The parameterized quantum state $|\psi\rangle[\overline{x}:=\overline{t}]$ is defined similarly;  
and (c) $\mathit{var}(\mathcal{D})$ and $\mathit{change}(\mathcal{D})$ stand for the sets of classical variables that appear in and are  changed, respectively, by the programs in $\mathcal{D}$.
\end{itemize}

\subsection{Proof Rules for Quantum Recursion}\label{sec-rec-proof}

In this subsection, we introduce two key rules in our proof system for verification of the correctness of quantum recursive circuits. They are given in Table \ref{circuit-rules-proof-3} for reasoning about recursion in quantum circuits: (Recursion-Par) for partial correctness and (Recursion-Tot) for total correctness. 
These two rules need some careful explanations: \begin{itemize}
\item Note that provability notation $\vdash$ appears in the premises of (Recursion-Par) and (Recursion-Tot). For any set $\Sigma$ of Hoare triples, by the judgement: \begin{equation}\label{provability}\Sigma\vdash\{A,|\varphi\rangle\}\ C\ \{B,|\psi\rangle\}\end{equation} we mean that $\{A,|\varphi\rangle\}\ C\ \{B,|\psi\rangle\}$ can be derived from assumptions $\Sigma$ using the proof rules given in Tables \ref{circuit-rules-proof-0} and \ref{circuit-rules-proof-1}. In particular, $\vdash$ does not depend on the declarations $\mathcal{D}$.   

\item It is easy to see that the essential part of (Recursion-Par) is actually the special case:  
\begin{equation*}({\rm Recursion-Par-E})\ \ \frac{\begin{split}&\{A_i,|\varphi_i\rangle\}\ P_i(\overline{t_i})\ \{B_i,|\psi_i\rangle\} (i=1,...,n)\vdash \{A_j,|\varphi_j\rangle\}\\ &\qquad\qquad\qquad\qquad\mathbf{begin\ local}\ \overline{u_j}:=\overline{t_j};C_j\ \mathbf{end}\ \{B_j,|\psi_j\rangle\}\ {\rm for}\ j=1,...,n
\end{split}}{\{A_j, |\varphi_j\rangle\}\ P_j(\overline{t_j})\ \{B_j,|\psi_j\rangle\}\ {\rm for}\ j=1,...,n}\end{equation*}
The rule (Recursion-Par-E) means that if one can establish the correctness of each procedure body $C_j$ (in a block) from the assumption of the correctness of all procedure calls $P_i(\overline{t_i})$ $(i=1,...,n)$, then the correctness of each procedure call $P_j(\overline{t_j})$ is proved.    

\item The only difference between the rules (Recursion-Tot) for total correctness and (Recursion-Par) for partial correctness is the \textit{ranking function} $r$, which is used in the standard way as in reasoning about total correctness of classical programs. 
\end{itemize}

\begin{table*}[t]
\begin{equation*}\begin{split}&({\rm Recursion-Par})\ \ \frac{\begin{split}&\{A_i,|\varphi_i\rangle\}\ P_i(\overline{t_i})\ \{B_i,|\psi_i\rangle\} (i=1,...,n)\vdash \{A,|\varphi\rangle\}\ C\ \{B,|\psi\rangle\}\\ &\{A_i,|\varphi_i\rangle\}\ P_i(\overline{t_i})\ \{B_i,|\psi_i\rangle\} (i=1,...,n)\vdash \{A_j,|\varphi_j\rangle\}\\ &\qquad\qquad\qquad\qquad\mathbf{begin\ local}\ \overline{u_j}:=\overline{t_j};C_j\ \mathbf{end}\ \{B_j,|\psi_j\rangle\}\ {\rm for}\ j=1,...,n
\end{split}}{\{A, |\varphi\rangle\}\ C\ \{B,|\psi\rangle\}}\\ \\ 
&({\rm Recursion-Tot})\ \ \frac{\begin{split}&\{A_i,|\varphi_i\rangle\}\ P_i(\overline{t_i})\ \{B_i,|\psi_i\rangle\} (i=1,...,n)\vdash \{A,|\varphi\rangle\}\ C\ \{B,|\psi\rangle\}\\ &\{A_i\wedge r<z,|\varphi_i\rangle\}\ P_i(\overline{t_i})\ \{B_i,|\psi_i\rangle\} (i=1,...,n)\vdash \{A_j\wedge r=z,|\varphi_j\rangle\}\\ &\qquad\qquad\qquad\qquad\qquad\mathbf{begin\ local}\ \overline{u_j}:=\overline{t_j};C_j\ \mathbf{end}\ \{B_j,|\psi_j\rangle\}\ {\rm for}\ j=1,...,n\\ 
& A_i\rightarrow r\geq 0\ {\rm for}\ i=1,...,n\end{split}}{\{A, |\varphi\rangle\}\ C\ \{B,|\psi\rangle\}}\end{split}\end{equation*}
\caption{Proof Rules for Quantum Recursive Circuits. In rules (Recursion-Par) and (Recursion-Tot), it is assumed that $\mathcal{D}=\left\{P(\overline{u_i})\Leftarrow C_i\mid i=1,...,n)\right\}$. In rule (Recursion-Tot), $r$ is an integer expression, called \textit{ranking function}, and $z\notin\mathit{var}(r)\cup\bigcup_{i=1}^n[\mathit{free}(A_i)\cup\mathit{free}(B_i)\cup \mathit{var}(C_i)]$ is an integer variable.}\label{circuit-rules-proof-3}
\end{table*}

\subsection{Auxiliary Rules}

In classical programming, it was pointed out in \cite{Apt81} that the recursion rules themselves are not adequate for proving all correct recursive programs, and they need to be used together with several auxiliary rules. The same is true for recursive quantum circuits (see the examples in Appendices \ref{subsec-qsp} and \ref{subsec-qram} below). The needed auxiliary rules for our purpose of verifying recursive quantum circuits are presented in Table \ref{circuit-rules-proof-4}. Here,  $\mathit{change}(C)$ denotes the set of classical variables modified by $C$ (and the procedure bodies called in $C$). In rule (Invariance), $\mathit{qv}(|\varphi\rangle)\cap\mathit{qv}(C)=\emptyset$ means that $|\varphi\rangle$ is a parameterized quantum state of signature $S$ such that for all classical states $\sigma$, any quantum variables in $\mathit{qv}(C)$ does not appear in $S(\sigma)$.  
\begin{table*}[t]
\begin{equation*}\begin{split}&({\rm Invariance})\ \ \frac{\mathit{free}(A)\cap\mathit{change(C)=\emptyset\qquad \mathit{qv}(|\varphi\rangle)\cap\mathit{qv}(C)=\emptyset}}{\{A, |\varphi\rangle\}\ C\ \{A,|\varphi\rangle\}}\\  
&({\rm Disjunction})\ \ \frac{\left\{A_1,|\varphi\rangle\right\}\ C\ \{B,|\psi\rangle\}\qquad \left\{A_2,|\varphi\rangle\right\}\ C\ \{B,|\psi\rangle\}}{\left\{A_1\vee A_2,|\varphi\rangle\right\}\ C\ \{B,|\psi\rangle\}}\\
&({\rm Conjunction})\ \ \frac{\left\{A_1,|\varphi\rangle\right\}\ C\ \left\{B_1,|\psi\rangle\right\}\qquad \left\{A_2,|\varphi\rangle\right\}\ C\ \left\{B_2,|\psi\rangle\right\}}{\left\{A_1\wedge A_2,|\varphi\rangle\right\}\ C\ \left\{B_1\wedge B_2,|\psi\rangle\right\}}\\
&(\exists{\rm -Introduction})\ \ \frac{\{A,|\varphi\rangle\}\ C\ \{B,|\psi\rangle\}\qquad x\notin\mathit{var}(C)\cup\mathit{free}(B)\qquad \mathit{Inv}(|\psi\rangle:x)}{\{(\exists x)A,|\varphi\rangle\}\ C\ \{B,|\psi\rangle\}}\\
&({\rm Invariance-Con})\ \ \frac{\{A,|\varphi\rangle\}\ C\ \{B,|\psi\rangle\}\qquad \mathit{free}(A^\prime)\cap\mathit{change}(C)=\emptyset}{\{A\wedge A^\prime, |\varphi\rangle\}\ C\ \{B\wedge A^\prime,|\psi\rangle\}}\\
&({\rm Substitution})\ \ \frac{\{A,|\varphi\rangle\}\ C\ \{B,|\psi\rangle\}\qquad \left[\overline{x}\cup\mathit{var}(\overline{t})\right]\cap\mathit{change}(C)=\emptyset\qquad \mathit{Inv}(|\psi\rangle: \overline{x})}{\left\{A[\overline{x}:=\overline{t}],|\varphi\rangle\right\}\ C\ \left\{B[\overline{x}:=\overline{t}],|\psi\rangle\right\}}
\end{split}\end{equation*}
\caption{Auxiliary Rules.}\label{circuit-rules-proof-5}
\end{table*}
It should be pointed out that the soundness of the rule (Conjunction) depends on the assumption that the classical programming language upon which $\mathbf{RQC}^{++}$ is defined is deterministic (see Appendix \ref{proof-auxiliary}).  

\subsection{Proof System} To conclude this section, let us organise the proof rules presented above into two proof (sub)systems: \begin{itemize}\item The \textbf{\textit{proof (sub)system}} $\mathcal{PS}_\mathit{par}$ \textbf{\textit{for partial correctness}} consists of all rules in Tables \ref{circuit-rules-proof-0}, \ref{circuit-rules-proof-1} and \ref{circuit-rules-proof-2}, the rule (Recursion-Par) in Table \ref{circuit-rules-proof-3}, and all rules in Table \ref{circuit-rules-proof-5}. Let $\vdash_\mathit{par}$ stand for the provability in the proof system $\mathcal{PS}_\mathit{par}$.

\item The \textbf{\textit{proof (sub)system}} $\mathcal{PS}_\mathit{tot}$ \textbf{\textit{for total correctness}} consists of all rules in Tables \ref{circuit-rules-proof-0}, \ref{circuit-rules-proof-1}  and \ref{circuit-rules-proof-2}, the rule (Recursion-Tot) in Table \ref{circuit-rules-proof-3}, and all rules in Table \ref{circuit-rules-proof-5} except the rule (Invariance). Let $\vdash_\mathit{tot}$ stand for the provability in the proof system $\mathcal{PS}_\mathit{tot}$.\end{itemize}

\section{Soundness and Completeness Theorems}\label{sec-completeness}\label{sec-sound-complete}

In this section, we establish the soundness and (relative) completeness of the proof system $\mathcal{PS}_\mathit{par}$ for partial correctness and the proof system $\mathcal{PS}_\mathit{tot}$ for total correctness of quantum recursive circuits in $\mathbf{RQC}^{++}$.  

First, we can prove that both $\mathcal{PS}_\mathit{par}$ and $\mathcal{PS}_\mathit{tot}$ are sound:

\begin{thm}[Soundness]\label{soundness}For any quantum circuit $C\in\mathbf{RQC}^{++}$, first-order logical formulas $A,B$ and quantum states $|\varphi\rangle,|\psi\rangle$: \begin{enumerate}\item If $\vdash_\mathit{par}\{A,|\varphi\rangle\}\ C\ \{B,|\psi\rangle\}$, then $\models_\mathit{par}\{A,|\varphi\rangle\}\ C\ \{B,|\psi\rangle\}$; 
\item If $\vdash_\mathit{tot}\{A,|\varphi\rangle\}\ C\ \{B,|\psi\rangle\}$, then $\models_\mathit{tot}\{A,|\varphi\rangle\}\ C\ \{B,|\psi\rangle\}$. 
\end{enumerate}\end{thm}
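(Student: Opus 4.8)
The plan is to prove soundness by structural induction on the derivation $\vdash_{\mathit{par}}$ (respectively $\vdash_{\mathit{tot}}$), showing that each proof rule in the system is \emph{locally sound}: if the semantic validity $\models$ holds for every premise, then it holds for the conclusion. The base cases are the axioms (SK-P), (AS-P), and (GA-P), which I would verify directly against Definition \ref{def-hoare-semantics} by unfolding the one-step operational semantics from Table \ref{circuit-rules-1}; for (GA-P) the crucial observation is that the gate application is a unitary $U$, so $\llbracket U[\overline{q}]\rrbracket_\sigma = U_{\sigma(\overline{q})}$ maps $U^\dag|\varphi\rangle(\sigma)$ to $|\varphi\rangle(\sigma)$, matching the asserted pre/postconditions. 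The inductive cases for the deterministic classical constructs (SC-P), (IF-P), (Block), (Instantiation), and the structural and auxiliary rules are essentially the standard Hoare-logic arguments, adapted so that the quantum component $|\psi\rangle(\sigma')$ is tracked alongside the classical postcondition $B$; these I would handle by the familiar composition-of-transitions reasoning and defer the routine bookkeeping.

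The genuinely new soundness arguments, which I would treat carefully, are (Linearity), (QIF-P), and the two recursion rules. For (Linearity) and (QIF-P) I would appeal to the denotational characterisation in Lemma \ref{lem-linear1}: since each $\llbracket C\rrbracket_\sigma$ is a \emph{unitary} (hence linear) operator, the validity of $\{A,|\varphi_i\rangle\}\,C\,\{B,|\psi_i\rangle\}$ for $i=1,2$ forces $\llbracket C\rrbracket_\sigma|\varphi_i\rangle(\sigma)=|\psi_i\rangle(\sigma')$, and linearity immediately gives the conclusion for the superposition $\alpha_1|\varphi_1\rangle+\alpha_2|\varphi_2\rangle$. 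For (QIF-P), the key is the denotational semantics \eqref{qcase-3} of the quantum if-statement, namely $\llbracket\mathbf{qif}\cdots\mathbf{fiq}\rrbracket=|0\rangle\langle 0|\otimes\llbracket C_0\rrbracket+|1\rangle\langle 1|\otimes\llbracket C_1\rrbracket$; the side condition $\mathit{cv}(q)\cap\mathit{change}(C_i)=\emptyset$ guarantees that evaluating the subscript of the coin $q$ is unaffected by the branch executions, so the two branches act coherently on the respective components and the block-diagonal structure yields the asserted conclusion.

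The main obstacle will be the soundness of (Recursion-Par) and (Recursion-Tot), which cannot be argued by a single induction on the derivation because the premise itself is a derivation \emph{from assumptions} $\Sigma\vdash\cdots$ involving the procedure-call triples. The standard treatment (following Chapters 4--5 of \cite{Apt09}) is to reduce validity of a recursive program to validity of its finite syntactic approximations obtained by bounding the recursion depth: define the $k$-th approximant by replacing each procedure body's inner calls with a diverging (or $\mathbf{skip}$-guarded) stub after $k$ unfoldings, prove by induction on $k$ that each approximant satisfies the claimed Hoare triple, and then use the fact that the operational semantics of $P_j(\overline{t_j})$ via the copy rule (RC) coincides, on any terminating computation, with that of some finite approximant. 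For partial correctness, the plan is to establish the meta-level implication that if the assumption triples hold for all approximants of depth $<k$ then the body triples hold for depth $k$, using the soundness of $\vdash$ (the assumption-free rules) already proved; for total correctness, the ranking function $r$, together with the side conditions $A_i\rightarrow r\geq 0$ and the strict decrease $r<z$ recorded in the premise, provides the well-founded measure that bounds the recursion depth and hence guarantees termination. The delicate points I expect to require care are the correct handling of the block statement $\mathbf{begin\ local}\ \overline{u_j}:=\overline{t_j};C_j\ \mathbf{end}$ (so that formal-parameter scoping via rule (BS) is respected by the approximation scheme) and ensuring the fresh variable $z$ genuinely does not interfere with any $A_i,B_i$ or $C_i$, as stipulated in the side condition of (Recursion-Tot).
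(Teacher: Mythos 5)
Your proposal is correct and follows essentially the same route as the paper's own (direct) soundness proof: rule-by-rule local validity for the non-recursive rules, with the recursion rules handled via soundness of $\vdash$ under assumptions, syntactic approximations $C^0\equiv\mathbf{abort}$, $C^{k+1}\equiv C[C_1^k/P_1,\ldots,C_n^k/P_n]$ and induction on the unfolding depth (your ``meta-level implication'' is the paper's Lemma \ref{promise-sound} combined with Lemma \ref{recursive-transition}), and for total correctness an induction on the bound $m$ in $r<m$ supplied by the ranking function, including the two delicate points you flag (commutation of approximation with the block statement and freshness of $z$). The only cosmetic difference is that for (Linearity) and (QIF-P) you invoke the denotational unitarity of $\llbracket C\rrbracket_\sigma$ (Lemma \ref{lem-linear1}) and the block-diagonal form of the quantum if-statement, whereas the paper argues operationally via Lemma \ref{lem-linear} and the (QIF) transition rule with the same side condition $\mathit{cv}(q)\cap\mathit{change}(C_i)=\emptyset$ yielding $\sigma(q)=\sigma'(q)$ --- the two arguments are interchangeable.
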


Now we turn to consider the completeness issue of our logic system for recursive quantum circuits. As is well-known, one can only establish the completeness of a logic for programs including iteration and recursion relative to the theory of an expressive interpretation (see \cite{LS87}, Definition 8.10) of the assertion language. Our recursive definitions of quantum circuits are embedded into a classical programming language. Therefore, what we can expect is also a relative completeness of our logic.   
As shown in \cite{LS87, AB90}, the assertion language $\mathcal{L}$ of the classical programming language used in this paper can be expanded to a first-order language $\mathcal{L}^+$ that enjoys an \textit{arithmetic interpretation} (see \cite{AB90}, Definition 2.10), which contains the standard model of Peano arithmetic and is expressive, and in particular, can define a coding of the finite sequences of elements in the domain.  

Given an interpretation $\mathbb{I}$ of the assertion language. We write $\mathit{Th}(\mathbb{I})=\{A\in\mathcal{L}^+\mid \mathbb{I}\models A\}$ for the theory of $\mathbb{I}$. Furthermore, we use $\mathit{Th}(\mathbb{I}) \vdash_\mathit{par}\{A,|\varphi\rangle\}\ C\ \{B,|\psi\rangle\}$ to denote that the Hoare triple $\{A,|\varphi\rangle\}\ C\ \{B,|\psi\rangle\}$ can be derived from the rules in our proof system $\mathcal{PS}_\mathit{par}$ for partial correctness, together with the assertions in $\mathit{Th}(\mathbb{I})$ as axioms. Likewise, we can define the provability: $\mathit{Th}(\mathbb{I}) \vdash_\mathit{tot}\{A,|\varphi\rangle\}\ C\ \{B,|\psi\rangle\}.$ Then the relative completeness of our proof systems $\mathcal{PS}_\mathit{par}$ and $\mathcal{PS}_\mathit{tot}$ can be stated as the following: 

\begin{thm}[Completeness]\label{completeness} For any arithmetic interpretation $\mathbb{I}$ of the assertion language, we have:
\begin{enumerate}\item  If $\mathbb{I} \models_\mathit{par}\{A,|\varphi\rangle\}\ C\ \{B,|\psi\rangle\}$, then $\mathit{Th}(\mathbb{I}) \vdash_\mathit{par}\{A,|\varphi\rangle\}\ C\ \{B,|\psi\rangle\}$;
\item  If $\mathbb{I} \models_\mathit{tot}\{A,|\varphi\rangle\}\ C\ \{B,|\psi\rangle\}$, then $\mathit{Th}(\mathbb{I}) \vdash_\mathit{tot}\{A,|\varphi\rangle\}\ C\ \{B,|\psi\rangle\}.$
\end{enumerate}
\end{thm}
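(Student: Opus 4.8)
The plan is to transcribe the classical template for relative completeness of recursive procedures (Chapters~4--5 of \cite{Apt09}, which the paper explicitly generalises) into the classical-quantum setting. The backbone is the notion of a \emph{most general correctness formula} (MGF) attached to each declared procedure, together with the expressive power of the arithmetic interpretation $\mathbb{I}$. Because the underlying classical language is deterministic and $\mathbb{I}$ is arithmetic, the standard Gödel coding of finite sequences lets me express, inside $\mathcal{L}^+$, the input--output relation of the classical part of any $C\in\mathbf{RQC}^{++}$ (i.e. the terminal classical state as a definable function of the initial one). This provides the classical skeleton of the MGF exactly as in the purely classical completeness argument, and reduces the problem to handling the quantum component.

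The genuinely new ingredient is the parameterized output quantum state in the postcondition. Here I would lean on Lemma~\ref{lem-linear1}: for every legitimate $(C,\sigma)$ the semantics $\llbracket C\rrbracket_\sigma$ is a \emph{unitary}, hence linear, operator on $\hs_{\mathit{qv}(C)}$. The key move is therefore to establish the MGF only on computational basis inputs: treating the basis index as an extra classical parameter, the amplitudes of $\llbracket C\rrbracket_\sigma|i\rangle$ in the computational basis are functions of $\sigma$ and $i$, and --- under the assumption that the entries of the elementary gates in $\mathcal{U}$ are expressible in $\mathcal{L}^+$ --- these amplitudes are themselves definable in the arithmetic interpretation. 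The rule (Linearity) then lifts the basis-state MGF to arbitrary superposition inputs, so a single parameterized MGF captures the full action of $\llbracket C\rrbracket_\sigma$. This reduction from genuine Hilbert-space vectors to finitely many definable amplitudes is what makes the first-order assertion language adequate for the quantum postconditions.

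With expressiveness secured, I would run the two-stage argument characteristic of the recursion rule. First, taking the MGFs of all procedures $P_1,\dots,P_n$ as the assumptions on the left of the assumption-based derivability $\vdash$, I would prove by structural induction on each procedure body $\mathbf{begin\ local}\ \overline{u_j}:=\overline{t_j};C_j\ \mathbf{end}$ that it satisfies its own MGF, discharging each syntactic construct via its dedicated rule (GA-P), (SC-P), (IF-P), (QIF-P), (Block), (Instantiation), and validating the arising entailments inside $\mathit{Th}(\mathbb{I})$ through (Consequence). An application of (Recursion-Par) then discharges the assumptions and yields the MGFs outright. Finally, any valid triple $\{A,|\varphi\rangle\}\ C\ \{B,|\psi\rangle\}$ is recovered from the MGFs by one further structural induction on $C$, closing with (Consequence) together with the auxiliary rules of Table~\ref{circuit-rules-proof-5}.

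For the total-correctness statement the argument is identical except that each MGF must be reinforced by a \emph{ranking function} $r$. I would take $r$ to measure the remaining recursion depth, which is definable in $\mathcal{L}^+$ precisely because termination holds and the computation length is a definable function of the initial classical state; then the side conditions $A_i\rightarrow r\geq 0$ and the strict-decrease premise of (Recursion-Tot) are verified inside $\mathit{Th}(\mathbb{I})$. The main obstacle I anticipate is exactly the expressibility of the quantum output: unlike the classical case the postcondition carries a genuine vector in a (parameter-dependent, possibly high-dimensional) Hilbert space, and making the reduction to basis states via (Linearity) fully rigorous --- while confirming that the resulting amplitudes remain within the expressive reach of the arithmetic interpretation --- is where the quantum-specific care concentrates. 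Once that step is airtight, the remainder is a faithful, if laborious, adaptation of the classical completeness proof.
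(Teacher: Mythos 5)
Your proposal takes the \emph{direct} route: re-run the classical MGF-based completeness argument inside the classical--quantum setting, securing expressiveness by hand. The paper's actual proof is different in kind: it is a \emph{lifting} argument. The assertion language $\mathcal{L}^+$ is expanded to $\mathcal{L}^{++}$ by adjoining new classical variables $X,X',\ldots$ whose values are vectors of the Hilbert spaces $\hs_{\overline{q}}$ (viewed as classical domains) and new function symbols $F_U$ encoding the gates; each $C\in\mathbf{RQC}^{++}$ is translated into a classical recursive program $S(C)$ with $\mathbb{I}\models_\mathit{par}\{A,|\varphi\rangle\}\ C\ \{B,|\psi\rangle\}$ iff $\mathbb{I}^\ast\models_\mathit{par}\{A\wedge X=|\varphi\rangle\}\ S(C)\ \{B\wedge X'=|\psi\rangle\}$; the extended interpretation $\mathbb{I}^\ast$ is made arithmetic by extending the sequence coding to the new vector elements; the classical relative-completeness theorem (\cite{AB90}, \cite{Francez}) is invoked as a black box; and the resulting classical derivation is translated back rule-by-rule into $\mathcal{PS}_\mathit{par}$ (similarly for $\vdash_\mathit{tot}$). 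The paper explicitly mentions a direct proof in your style (via weakest preconditions) as an alternative but declines to elaborate it.

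As written, your direct route has a genuine gap at precisely the point you flag, and it is not merely a matter of ``care.'' First, your expressiveness step assumes that the entries of the elementary gates in $\mathcal{U}$ are definable in $\mathcal{L}^+$ over $\mathbb{I}$. The theorem carries no such hypothesis: $\mathcal{U}$ may be all of $U(2)$, so gate entries are arbitrary complex numbers that the arithmetic interpretation cannot define, and hence the amplitudes of $\llbracket C\rrbracket_\sigma|i\rangle$ need not be expressible. Your argument therefore proves a strictly weaker statement. The paper evades this by enlarging the \emph{interpretation} (vectors become atomic domain elements, gates become new function symbols) rather than asking the original interpretation to define the amplitudes. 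Second, the lift from basis-state MGFs to general inputs via (Linearity) does not go through uniformly: (Linearity) combines a fixed finite number of triples, whereas the precondition $|\varphi\rangle$ is a parameterized state whose decomposition into computational basis components has a length depending on the classical state $\sigma$ (e.g.\ a uniform superposition over $2^{n-m+1}$ basis states with $m,n$ program parameters). A single derivation must cover all $\sigma\models A$, so no fixed number of applications of (Linearity) reconstructs it, and none of the auxiliary rules of Table \ref{circuit-rules-proof-5} supplies the needed parameter-indexed summation. The paper's encoding sidesteps this entirely by treating the whole vector as one classical value $X$. Your total-correctness ranking function (recursion depth definable from the initial classical state) is unproblematic, since failure and termination depend only on classical data; but until the two points above are repaired, the proposal is a plausible programme rather than a proof.
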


For better readability, the lengthy and involved proofs of Theorems \ref{soundness} and \ref{completeness} are deferred to Appendix \ref{proof-sound}.

\section{Illustrative Examples}\label{sec-examples}

Now we are going to present a series of examples in order to illustrate how the proof rules developed in Section \ref{proof-system} can be used to verify recursive quantum circuits defined in Section \ref{sec-language}. Duo to the limited space, in this section, we only prove the correctness of quantum Fourier transform $\mathit{QFT}(m,n)$ in Example~\ref{example-QFT} and thus answer Question \ref{ques-1}. Other examples are deferred to Appendix \ref{appendix-examples}, including formal verification of (multi-qubit) controlled gates (Example \ref{exam-control-2}), a quantum circuit generating (multi-qubit) GHZ (Greenberger-Horne-Zeilinger) states, as well as more sophisticated quantum algorithms — quantum state preparation, and quantum random-access memories (QRAMs).

We introduce a classical bit array $j$ of type $\mathbf{integer}\rightarrow\mathbf{Boolean}$.
Slightly abusing the notation, we also use $j[m:n]$ to denote $\sum_{l=m}^n j[l]\cdot 2^{n-l}$
and $0.j[m:n]$ to denote $\sum_{l=m}^n j[l]\cdot 2^{m-l-1}$.
We can define the following parameterized quantum states:
\begin{equation*}
%&|J\rangle: (j,m,n)\mapsto |j[m:n]\rangle;\qquad\qquad 
|\Psi\rangle:(j,m,n)\mapsto\frac{1}{\sqrt{2^{n-m+1}}}\bigotimes_{l=1}^{n-m+1}\left(|0\rangle+e^{2\pi i 0.j[n-l+1:n]} |1\rangle\right)\in\hs_2^{\otimes (n-m+1)}.
%&|\Psi'\rangle:(j,m,n)\mapsto\frac{1}{\sqrt{2^{n-m}}}\bigotimes_{l=1}^{n-m}\left(|0\rangle+e^{2\pi i 0.j[n-l+1:n]} |1\rangle\right)\in\hs_2^{\otimes (n-m)}.
\end{equation*} 
Then the correctness of $\mathit{QFT}(m,n)$ can be specified by the following judgement:
\begin{equation}\label{QFT-correct}m\leq n\models_\mathit{tot}\left\{|j[m:n]\rangle\right\}\ \mathit{QFT}(m,n)\ \left\{|\Psi\rangle(j,m,n)\right\},
\end{equation}
where $|j[m:n]\rangle$ and $|\Psi\rangle(j,m,n)$ are two parameterized state with explicit parameters $j,m,n$.

In the following, we verify (\ref{QFT-correct}) using the proof rules presented in Section \ref{proof-system}. For a better understanding, 
let us set a parameterized quantum state
\begin{equation*}
    |J'\rangle : (j,m,n,\theta) \mapsto (S_{(j[m+1:n]+\theta)/2^{n-m}}|j[m]\rangle)|j[m+1:n]\rangle\in\hs_2^{\otimes (n-m+1)},
\end{equation*}
%\new{unitary gate $S'_{n,\theta}$ to be $S_{(j[m+1:n]+\theta)/2^{n-m}}$ if $n>m$
%and $S_{\theta}$ otherwise,
%J[m:n]=j_m,...,j_n$  set:
%\begin{align*} &|J\rangle_{m,n}=|j_m,...,j_n\rangle,\qquad |J'\rangle_{m,n,\theta}= S_{(j_{m+1}\ldots j_n+\theta)/2^{n-m}}|j_m\rangle |j_{m+1}\ldots j_n\rangle, \\
%&|\Psi\rangle_{m,n}=\frac{1}{\sqrt{2^{n-m+1}}}\bigotimes_{l=1}^{n-m+1}\left(|0\rangle+e^{2\pi i 0.j_{n-l+1}...j_n}|1\rangle\right). 
%\end{align*} 
%\begin{equation*}
    %&|J'\rangle: (j,m,n,\theta) \mapsto (S_{(j[m+1:n]+\theta)/2^{n-m}}|j[m]\rangle ) |j[m+1:n]\rangle
%\end{equation*}
where $S_\theta$ is defined in (\ref{eq:gate_S}).
 To apply the rule (Recursion-Tot), 
we define ranking function $r=n-m$ and introduce the assumptions:
\begin{align}
    \label{eq:QFT-ass}
    \vdash_{\mathit{tot}}&\left\{m\leq n\wedge r<z, |j[m:n]\rangle\right\} \mathit{QFT}(m,n)\left\{\mathit{true}, |\Psi\rangle(j,m,n)\right\},\\
    \label{eq:Rot-ass}
    \vdash_{\mathit{tot}}&\left\{m\leq n \wedge r<z, |j[m:n]\rangle\right\}
    \mathit{Rot}(m,n,\theta)\left\{\mathit{true}, |J'\rangle(j,m,n,\theta)\right\},\\
    \label{eq:Shift-ass}
    \vdash_{\mathit{tot}}&\left\{m\leq n\wedge r<z, |j[m:n]\rangle\right\} \mathit{Shift}(m,n)\left\{\mathit{true}, |j[n]\rangle |j[m:n-1]\rangle\right\},
\end{align}
where $z$ is a fresh variable.
Note that the quantum states in (\ref{eq:QFT-ass}), (\ref{eq:Rot-ass}) and (\ref{eq:Shift-ass}) (as well as in the rest of the proof)
are all parameterized.
Since $m\leq n\rightarrow r\geq 0$ is obviously true,
with the rule (Block) and certain substitution of variables, it suffices to prove that
\begin{align}
    \label{eq:QFT-goal}
    \vdash_{\mathit{tot}}&\left\{m\leq n \wedge r=z, |j[m:n]\rangle\right\}\mathbf{if}\ldots \mathit{QFT}\ldots\mathbf{fi}\left\{\mathit{true}, |\Psi\rangle(j,m,n)\right\},\\
    \label{eq:Rot-goal}
    \vdash_{\mathit{tot}}&\left\{m\leq n\wedge r=z, |j[m:n]\rangle\right\}\mathbf{if}\ldots \mathit{Rot}\ldots \mathbf{fi}\left\{\mathit{true}, |J'\rangle (j,m,n,\theta)\right\},\\
    \label{eq:Shift-goal}
    \vdash_{\mathit{tot}}&\left\{m\leq n\wedge r=z, |j[m:n]\rangle\right\}\mathbf{if}\ldots \mathit{Shift}\ldots \mathbf{fi}\left\{\mathit{true},|j[n]\rangle |j[m:n-1]\rangle\right\},
\end{align}
where $\mathbf{if}\ldots \mathit{QFT}\ldots \mathbf{fi}$, $\mathbf{if}\ldots \mathit{Shift}\ldots \mathbf{fi}$ and $\mathbf{if}\ldots \mathit{Rot}\ldots \mathbf{fi}$
stand for the right hand sides of the programs in (\ref{p-QFT-1}), (\ref{p-QFT-m}) and (\ref{p-QFT-2}), respectively.

\textbf{Step I}: 
Using the rules (GA-P), (SK-P) and (Consequence), we can prove: 
\begin{align}
    \label{eq:QFT-base}
    \vdash_{\mathit{tot}}&\left\{m= n \wedge r=z, |j[m:n]\rangle\right\}S_{0}[q[m]]\left\{\mathit{true}, |\Psi\rangle(j,m,n)\right\},\\
    \label{eq:Rot-base}
    \vdash_{\mathit{tot}}&\left\{m= n \wedge r=z, |j[m:n]\rangle\right\}S_{\theta}[q[m]]\left\{\mathit{true}, |J'\rangle (j,m,n,\theta)\right\},\\
    \label{eq:Shift-base}
    \vdash_{\mathit{tot}}&\left\{m= n\wedge r=z, |j[m:n]\rangle\right\}\mathbf{skip}\left\{\mathit{true}, |j[n]\rangle|j[m:n-1]\rangle\right\}.
\end{align}

\textbf{Step II}: 
Note that $m<n\models m+1\leq n$, $m<n\models m\leq n-1$ and $r=z\models r-1<z$.
Applying the rules (Instantiation) and (Consequence) to assumptions (\ref{eq:QFT-ass}), (\ref{eq:Rot-ass}) and (\ref{eq:Shift-ass}), we obtain:
\begin{align}
    \label{eq:QFT-ins}
    \vdash_{\mathit{tot}}&\left\{m< n\wedge r=z, |j[m+1:n]\rangle\right\} \mathit{QFT}(m+1,n)\left\{\mathit{true}, |\Psi\rangle(j,m+1,n)\right\},\\
    \label{eq:Rot-ins}
    \vdash_{\mathit{tot}}&\left\{m< n\wedge r=z, |j[m:n-1]\rangle\right\} \mathit{Rot}(m,n-1, \theta')\left\{\mathit{true}, |J'\rangle(j,m,n-1,\theta')\right\},\\
    \label{eq:Shift-ins}
    \vdash_{\mathit{tot}}&\left\{m< n\wedge r=z, |j[m+1:n]\rangle\right\}\mathit{Shift}(m+1,n)\left\{\mathit{true}, |j[n]\rangle |j[m+1:n-1]\rangle\right\},
\end{align}
where $\theta'=(\theta+j[n])/2$.
Note that $\mathit{free}(m<n \wedge r=z)\cap \mathit{change}\left(\mathit{QFT}(m+1,n)\right)=\emptyset$. 
Using the rule (Invariance-Con) we can strengthen (\ref{eq:QFT-ins}) to
\begin{align}
    \label{eq:QFT-ins-s}
    \vdash_{\mathit{tot}}\left\{m<n\wedge r=z, |j[m+1:n]\rangle\right\} \mathit{QFT}(m+1,n)\left\{m< n \wedge r=z, |\Psi\rangle(j,m+1,n)\right\}.
\end{align}
Similarly, using the rules (Instantiation), (Consequence) and (Invariance-Con) on (\ref{eq:Rot-ass}), we have
\begin{equation}
    \label{eq:Rot-ins-s}
    \vdash_{\mathit{tot}}\left\{m< n\wedge r=z, |j[m:n]\rangle\right\} \mathit{Rot}(m,n, 0)\left\{m< n \wedge r=z, |J'\rangle(j,m,n,0)\right\}.
\end{equation}

\textbf{Step III}:
Let us first derive (\ref{eq:Shift-goal}).
From the rule (GA-P), we have: 
\begin{equation}
    \label{eq:Shift-swap}
    \begin{split}
        \vdash_{\mathit{tot}}&\left\{m<n\wedge r=z, |j[m:n]\rangle\right\}\mathit{Swap}[q[m],q[n]]\\
        &\qquad\qquad\left\{m<n\wedge r=z, |j[n]\rangle|j[m+1:n-1]\rangle |j[m]\rangle\right\}.
    \end{split}
\end{equation}
By the rules (SC-P), (Frame) and (Consequence) we can combine (\ref{eq:Shift-ins}) and (\ref{eq:Shift-swap}) to obtain: 
\begin{equation}
    \label{eq:Shift-in-if}
    \begin{split}
        \vdash_{\mathit{tot}}&\left\{m< n\wedge r=z, |j[m:n]\rangle\right\}\\
        &\qquad\qquad\mathit{Swap}[q[m],q[n]];\mathit{Shift}(m+1,n)\left\{\mathit{true}, |j[n]\rangle |j[m:n-1]\rangle\right\}.
    \end{split}
\end{equation}
Then applying the rule (IF-P) to (\ref{eq:Shift-base}) and (\ref{eq:Shift-in-if}) yields (\ref{eq:Shift-goal}).

\textbf{Step IV}:
Similarly, let us derive (\ref{eq:Rot-goal}).
Using the rules (QIF-P), (Invariance-Con) and (Consequence) on (\ref{eq:Rot-ins}), we have
\begin{equation}
    \label{eq:Rot-qif}
    \begin{split}
        \vdash_{\mathit{tot}}&\left\{m< n\wedge r=z, |j[m:n]\rangle\right\}\mathbf{qif}\ldots\mathbf{fiq}\left\{\mathit{true}, |J'\rangle (j,m,n,\theta)\right\},
    \end{split}
\end{equation}
where %$|\phi\rangle = \frac{1}{\sqrt{2}}\left(|0\rangle + e^{2\pi i 0.j_mj_{m+1}\ldots j_n}|1\rangle\right)$, and 
$\mathbf{qif}\ldots\mathbf{fiq}$ denotes the segment of program (\ref{p-QFT-m}) between $\mathbf{qif}$ and $\mathbf{fiq}$,
and we use $ |J'\rangle (j,m,n-1,\theta')|j[n]\rangle = |J'\rangle (j,m,n,\theta)$.
Applying the rule (IF-P) to (\ref{eq:Rot-base}) and (\ref{eq:Rot-qif}) yields (\ref{eq:Rot-goal}).

\textbf{Step V}:
Now let us derive (\ref{eq:QFT-goal}).
%Note that $m+1\leq n\models m\leq n-1$ and $m\leq n-1\models m+1\leq n$.
Note that $|J'\rangle (j,m,n,0)=\frac{1}{\sqrt{2}}(|0\rangle + e^{2\pi i 0.j[m:n]}|1\rangle) |j[m+1:n]\rangle$.
We can then apply the rules (SC-P), (Frame) and (Consequence) to combine (\ref{eq:Rot-ins-s}), (\ref{eq:QFT-ins-s}) and (\ref{eq:Shift-ass}) to obtain:
 \begin{equation}
    \label{eq:QFT-else}
    \begin{split}
        \vdash_{\mathit{tot}}\left\{m< n\wedge r=z, |j[m:n]\rangle\right\}\mathit{Rot}(m,n,0);\mathit{QFT}(m+1,n);\mathit{Shift}(m,n)\left\{\mathit{true}, |\Psi\rangle(j,m,n)\right\}.
    \end{split}
\end{equation}
%Note that $|\Psi\rangle_{m+1, n}|\phi\rangle =|\Psi\rangle_{m,n}$.
Hence, finally applying the rule (IF-P) to (\ref{eq:QFT-base}) and (\ref{eq:QFT-else}) gives (\ref{eq:QFT-goal}).
 
\section{Conclusion}

In this paper, we defined a programming language $\mathbf{RQC}^{++}$ for recursively defined quantum circuits by employing quantum if-statements \cite{Alt05, Ying12, Sab18, Yuan, Bich, Voi} and introducing procedures parameterised by classical expressions. This language can serve as a vehicle in which a quantum Divide-and-Conquer strategy (not classical Divide-and-Conquer strategy used in the design of quantum algorithms) can be properly programmed (see Example \ref{exam-qram}).

Furthermore, we presented a sound and (relatively) complete logic system for reasoning about  quantum circuits and algorithms programmed in  $\mathbf{RQC}^{++}$. It provides for the first time a principled method for verification of recursively defined quantum circuits. A series of application examples of our verification method are shown, including multi-qubit controlled quantum gates, quantum circuits generating the GHZ states, quantum Fourier transform, quantum state preparation and quantum random access memories (QRAM) (More sophisticated examples of quantum algorithms using the quantum Divide-and-Conquer strategy and their verification will be presented in a forthcoming paper).

\textbf{Topics for Future Research}: As always happened in all verification techniques for both classical and quantum programs, formal verifications of these examples are tedious, although they were carried out in a principled way. So, a natural topic for the research of next step is to mechanise the verification by implementing the logic system of this paper in the existing verification frameworks and tools for (i) quantum information and computing in general (e.g. Quantomatic \cite{KZ15}, ZX-calculus \cite{Bob11, KW19}); and (ii) for quantum programs particularly (e.g. QWire \cite{Paykin17}, QBricks \cite{Char21}, VOQC \cite{Rand21}, Gleipnir \cite{Tao21}, QHLProver \cite{Liu19} and CoqQ \cite{Zhou23}). Classical recursive programming usually employs rich data structures. The notion of inductive data structure was recently introduced in \cite{Zam20} into quantum programming. So, another interesting topic for future research is how the scheme of quantum recursion defined in this paper can be empowered by inductive quantum data structures and others.

\newpage

\newpage

\appendix

\section{More Examples}\label{appendix-examples}

In this appendix, we show more applications of our proof system in verification of quantum recursive programs.  

We first note that the rules (Recursion-Par) and (Recursion-Tot) were presented in Table \ref{circuit-rules-proof-3} in a general form where multiple procedure identifiers are allowed to be mutually recursively defined. 
The QFT example given in Section \ref{sec-examples} has more than one procedure identifier. 
However, the examples that we consider in this appendix include only a single procedure identifier. In this case, the rules (Recursion-Par) and (Recursion-Tot) can be simplified as (Recursion-Par-S) and (Recursion-Tot-S), respectively,  in Table \ref{circuit-rules-proof-4}. 
\begin{table*}[t]
\begin{equation*}\begin{split}&({\rm Recursion-Par-S})\ \ \frac{\{A,|\varphi\rangle\}\ P(\overline{t})\ \{B,|\psi\rangle\}\ \vdash\ \{A,|\varphi\rangle\}\ \mathbf{begin\ local}\ \overline{u}:=\overline{t};C\ \mathbf{end}\ \{B,|\psi\rangle\}}{\{A, |\varphi\rangle\}\ P(\overline{t})\ \{B,|\psi\rangle\}}\\ \\ 
&({\rm Recursion-Tot-S})\ \ \frac{\begin{split} &\{A\wedge r<z,|\varphi\rangle\}\ P(\overline{t})\ \{B,|\psi\rangle\}\ \vdash\ \{A\wedge r=z,|\varphi\rangle\}\\ &\qquad\qquad\qquad\qquad\qquad\qquad\qquad\mathbf{begin\ local}\ \overline{u}:=\overline{t};C\ \mathbf{end}\ \{B,|\psi\rangle\}\\
&A\rightarrow r\geq 0
\end{split}}{\{A, |\varphi\rangle\}\ P(\overline{t})\ \{B,|\psi\rangle\}}
\end{split}\end{equation*}
\caption{Rule for Simple Recursion. In rule (Recursion-Tot-S), there is only one recursive procedure declared by $\mathcal{D}=P(\overline{u})\Leftarrow C$, $r$ is an integer expression, and $z\notin\mathit{var}(r)\cup \mathit{free}(A)\cup\mathit{free}(B)\cup \mathit{var}(C)$ is an integer variable.} \label{circuit-rules-proof-4}
\end{table*}

We will sometimes adopt the \textit{decomposition strategy} widely practiced in verification of classical recursive programs: The total correctness of a program can be proved by proving its partial correctness and its termination separately. In our case of recursive quantum circuits, we can prove the total correctness: $$\vdash_\mathit{tot}\{A,|\varphi\rangle\}\ C\ \{B,|\psi\rangle\}$$ by proving the following two judgements: \begin{enumerate}\item \textit{Partial correctness}: $\vdash_\mathit{par}\{A,|\varphi\rangle\}\ C\ \{B,|\psi\rangle\}$; and
\item \textit{Termination}: $\vdash_\mathit{tot}\{A,|\varphi\rangle\}\ C\ \{\mathit{true},|\theta\rangle\}$ for some $|\theta\rangle$. 
\end{enumerate} 

\subsection{Controlled Gates}

In this subsection, we verify the correctness of controlled gates, i.e. program (\ref{c-gate}) in Example \ref{exam-control-1}, using the proof rules presented in Section \ref{proof-system}. 
%For a better understanding, let us assume $J[m,n]=i_m,...,i_{n}$ and write: $$|\Phi\rangle_{m,n}=|\Phi\rangle(m,n)=|i_m,...,i_{n-1},i_n\rangle,\qquad\qquad
%|\Psi\rangle_{m,n}=|i_m,...,i_{n-1}\rangle U|i_n\rangle.$$ 
We only prove the Hoare triple (\ref{c-gate-1}), and (\ref{c-gate-2}) can be proved similarly. 
Recall that our goal is to prove
\begin{equation*}
    (m\leq n)\wedge(\forall m\leq k<n)(i[k]=1)\models_\mathit{tot} \left\{|\Phi\rangle(i,m,n)\right\}\ C^{(\ast)}(m,n)\ \left\{|\Psi\rangle(i,m,n)\right\},
\end{equation*}
where $|\Phi\rangle(i,m,n)$ and $|\Psi\rangle(i,m,n)$ are parameterized quantum states defined by:
\begin{equation*}
|\Phi\rangle:(i,m,n)\mapsto |i[m:n]\rangle,\qquad\qquad
|\Psi\rangle:(i,m,n)\mapsto |i[m:n-1]\rangle U|i[n]\rangle.
\end{equation*}
%Assume that $i_m=...=i_{n-1}=1$. Then what we need to prove is: 
%\begin{equation}\label{c-gate-10}\vdash_\mathit{tot}\left\{m\leq n,|\Phi\rangle_{m,n}\right\}\ C^{(\ast)}(m,n)\ \left\{\mathit{true}, |\Psi\rangle_{m,n}\right\}.\end{equation}
For simplicity of notation, we use $A$ to denote $(\forall m\leq k<n)(i[k]=1)$.
%and set parameterized quantum states:
%\begin{equation*}
%|\Phi'\rangle:(i,m,n)\mapsto |i[m+1:n]\rangle,\qquad
%|\Psi'\rangle:(i,m,n)\mapsto |i[m+1:n-1]\rangle U|i[n]\rangle.
%\end{equation*}
Let us take $r=m-n$ as the ranking function. Since $m\leq n\rightarrow r\geq 0$ is obviously true, by the rules (Recursion-Tot-S) and (Block) (with certain substitution to avoid variable conflict), it suffices to show that 
\begin{equation}
\label{c-gate-premise}
\begin{split}
\left\{m\leq n\wedge r<z\wedge A,|\Phi\rangle(i,m,n)\right\}\ C^{(\ast)}(m,n)\ &\left\{\mathit{true}, |\Psi\rangle(i,m,n)\right\}\vdash\\ \left\{m\leq n\wedge r=z\wedge A,|\Phi\rangle(i,m,n)\right\}&\ \mathbf{if} ... \mathbf{fi}\ \left\{\mathit{true}, |\Psi\rangle(i,m,n)\right\}
\end{split}
\end{equation} 
where $\mathbf{if}...\mathbf{fi}$ stands for the right-hand side of (\ref{c-gate}).

Now we prove the logical consequence in equation (\ref{c-gate-premise}). Assume that 
\begin{equation}\label{assume-control}\vdash_\mathit{tot} \left\{m\leq n\wedge r<z\wedge A,|\Phi\rangle(i,m,n)\right\}\ C^{(\ast)}(U)(m,n)\ \left\{\mathit{true},|\Psi\rangle(i,m,n)\right\}. 
\end{equation} 
 
\textbf{Step I}: It follows from the rue (GA-P) that 
\begin{equation}\label{ga-p}\vdash_\mathit{tot} \left\{m=n,|i[n]\rangle\right\}\ U[q[n]]\ \left\{m=n, U|i[n]\rangle\right\}.
\end{equation} Directly by definition, we have:
\begin{equation}\label{entail}\left\{m=n\wedge r=z,|\Phi\rangle(i,m,n)\right\}\models\left\{m=n,|i[n]\rangle\right\},\qquad \left\{m=n,U|i[n]\rangle\right\}\models\left\{\mathit{true},|\Psi\rangle(i,m,n)\right\}.
\end{equation} Using the rule (Consequence) on (\ref{ga-p}) and (\ref{entail}), we obtain:
\begin{equation}\label{ga-p1}\vdash_\mathit{tot}\left\{m=n\wedge r=z,|\Phi\rangle(i,m,n)\right\}\ U[q[n]]\ \left\{\mathit{true},|\Psi\rangle(i,m,n)\right\}.
\end{equation}

\textbf{Step II}: By the rule (Instantiation) on the assumption (\ref{assume-control}) we have: 
\begin{equation}\label{}\vdash_\mathit{tot} \left\{m+1\leq n\wedge n-(m+1)<z\wedge A',|\Phi\rangle(i,m+1,n)\right\}\ C^{(\ast)}(U)(m+1,n)\ \left\{\mathit{true},|\Psi\rangle(i,m+1,n)\right\},
\end{equation} 
where $A'$ denotes $(\forall m+1\leq k<n)(i[k]=1)$.
By the rules (QIF-P), (Consequence) and noticing that $$m<n\wedge r=z\wedge A'\models m+1\leq n\wedge n-(m+1)<z\wedge A',$$ we obtain: 
\begin{equation}\label{qif-a}\vdash_\mathit{tot}\left\{m<n\wedge r=z\wedge A,|1\rangle|\Phi\rangle(i,m+1,n)\right\}\ \mathbf{qif}...\mathbf{fiq}\ \left\{\mathit{true},|1\rangle|\Psi\rangle(i,m+1,n)\right\},\end{equation}
where $\mathbf{qif}...\mathbf{fiq}$ stands for the segment of (\ref{c-gate}) between $\mathbf{qif}$ and $\mathbf{fiq}$. 
As $\mathit{free}(A)\cap \mathit{change}(\mathbf{qif}\ldots\mathbf{fiq})=\emptyset$,
using (Invariance-Con) we can strengthen (\ref{qif-a}) to 
\begin{equation}
    %\label{eq:controlled-qif-b}
    \vdash_\mathit{tot}\left\{m<n\wedge r=z\wedge A,|1\rangle|\Phi\rangle(i,m+1,n)\right\}\ \mathbf{qif}...\mathbf{fiq}\ \left\{A,|1\rangle|\Psi\rangle(i,m+1,n)\right\},
\end{equation}
which further implies
\begin{equation}
    \label{eq:controlled-qif-b}
    \vdash_\mathit{tot}\left\{m<n\wedge r=z\wedge A,|\Phi\rangle(i,m,n)\right\}\ \mathbf{qif}...\mathbf{fiq}\ \left\{\mathit{true},|\Psi\rangle(i,m,n)\right\}
\end{equation}
by the rule (Consequence) and noting that $A\models |\Phi\rangle(i,m,n)=|1\rangle|\Phi\rangle(i,m+1,n)$ and $A\models |\Psi\rangle(i,m,n)=|1\rangle|\Psi\rangle(i,m+1,n)$.

\textbf{Step III}: Using the rule (IF) we derive:
\begin{equation}\label{if-a}\vdash_\mathit{tot}\left\{m\leq n\wedge r=z,|\Phi\rangle(i,m,n)\right\}\ \mathbf{if} ... \mathbf{fi}\ \left\{\mathit{true},|\Psi\rangle(i,m,n)\right\}\end{equation} from (\ref{ga-p1}) and (\ref{eq:controlled-qif-b}). Therefore, the proof is completed.

\subsection{GHZ (Greenberger-Horne-Zeilinger) States} 

The GHZ states are an important type of entangled quantum states with certain extremely non-classical properties. A recursively defined quantum circuit that generates the GHZ states was first given in \cite{Deng24}. This quantum recursive circuit and its correctness can be described in the language introduced in this paper as follows: 

\begin{exam} Recall that the GHZ state on the section $q[m,n]$ of qubits is a parameterized quantum state defined by: 
\begin{equation*}
|\mathit{GHZ}\rangle: (m,n)\mapsto \frac{1}{\sqrt{2}}\left(|0\rangle^{\otimes(n-m+1)}+|1\rangle^{\otimes(n-m+1)}\right).
\end{equation*}
A quantum circuit $U(m,n)$ that generates the GHZ state from initial state $|0\rangle^{\otimes(n-m+1)}$ can be recursively defined by
\begin{equation}\label{def-GHZ}\begin{split}U(m,n)\Leftarrow\ &\mathbf{if}\ m=n\ \mathbf{then}\ H[q[n]] \ \mathbf{else}\ U(m,n-1);\mathit{CNOT}[q[n-1],q[n]]\ \mathbf{fi}
\end{split}\end{equation}  The correctness of this circuit can be specified by the following Hoare triple:
\begin{equation}\label{GHZ-correct}m\leq n\models_\mathit{tot}\left\{|0\rangle^{\otimes(n-m+1)}\right\}\ U(m,n)\ \left\{|\mathit{GHZ}\rangle(m,n)\right\}.\end{equation}\end{exam}

The correctness of the quantum recursive circuit generating the GHZ states was specified in \cite{Deng24} as a hypothesis-amplitude judgement defined there. Then it was verified using some proof rules for hypothesis-amplitude judgements based on the path-sum representation of quantum circuits (i.e. a discrete variant of the Feynman path integral) \cite{Amy18} with the help of an SMT solver. Such a verification is not purely syntax-oriented, and it is carried out essentially at the level of semantics. 

Here, we present a much simpler verification of (\ref{GHZ-correct}) only using the proof rules given in Section~\ref{proof-system}. Indeed, the verification is similar to that for controlled gates given in the previous subsection. 
%For simplicity of notation, we set parameterized quantum state:
%\begin{equation*}
%|\mathit{GHZ'}\rangle: (m,n)\mapsto \frac{1}{\sqrt{2}}\left(|0\rangle^{\otimes(n-m)}+|1\rangle^{\otimes(n-m)}\right).
%\end{equation*}
We choose ranking function $r=n-m$. Obvious, $m\leq n\rightarrow r\geq 0$ is true. Thus by the rules (Recursion-Tot-S) and (Block) (with certain substitutions to avoid variable conflict), we only need to prove:
\begin{equation}\label{GHZ-premise}\begin{split}&\left\{m\leq n\wedge r<z, |0\rangle^{\otimes(n-m+1)}\right\}\ U(m,n)\ \{\mathit{true},|\mathit{GHZ}\rangle(m,n)\}\\ &\qquad\qquad\qquad\qquad\qquad\vdash\left\{m\leq n\wedge r=z,|0\rangle^{\otimes(n-m+1)}\right\}\ \mathbf{if} ...\mathbf{fi}\ \{\mathit{true},|\mathit{GHZ}\rangle(m,n)\},\end{split}\end{equation} where $\mathbf{if}...\mathbf{fi}$ stands for the program on the right-hand side of (\ref{def-GHZ}).

As always in applying the proof rule for recursion, let us introduce the assumption:
\begin{equation}\label{GHZ-correct-0}\vdash_\mathit{tot}\left\{m\leq n\wedge r<z, |0\rangle^{\otimes(n-m+1)}\right\}\ U(m,n)\ \left\{\mathit{true}, |\mathit{GHZ}\rangle(m,n)\right\}.\end{equation}   
Then we prove in the following that
\begin{equation}\label{GHZ-correct-1}\vdash_\mathit{tot}\left\{m\leq n\wedge r=z, |0\rangle^{\otimes(n-m+1)}\right\}\ \mathbf{if} ...\mathbf{fi}\ \left\{\mathit{true}, |\mathit{GHZ}\rangle(m,n)\right\}.\end{equation} 

\textbf{Step I}: By the rules (GA-P) and (Consequence), we can prove:
\begin{equation}\label{GHZ-correct-2}\vdash_\mathit{tot} \left\{m=n\wedge r=z,|0\rangle^{\otimes(n-m+1)}\right\}\ H[q[n]]\ \{\mathit{true},|\mathit{GHZ}\rangle(m,n)\}.\end{equation}

\textbf{Step II}: Applying the rule (Instantiation) to assumption (\ref{GHZ-correct-0}), together with the rules (Frame) and (GA-P), yields:
\begin{equation*}\begin{split}\vdash &\left\{m\leq n-1\wedge (n-1)-m<z,|0\rangle^{\otimes(n-m+1)}\right\}\ U(m,n-1)\ \{\mathit{true},|\mathit{GHZ}\rangle(m,n-1)\otimes |0\rangle\}\\ &\qquad \qquad \qquad\   \qquad\qquad\qquad\qquad\qquad\qquad\mathit{CNOT}[q[n-1],q[n]]\ \{\mathit{true}, |\mathit{GHZ}\rangle(m,n)\}.\end{split}\end{equation*} Note that $m<n\wedge r=z\models m\leq n-1\wedge (n-1)-m<z$. Then by the rule (Consequence) we obtain: 
\begin{equation}\label{GHZ-correct-3}\left\{m<n\wedge r=z,|0\rangle^{\otimes(n-m+1)}\right\}\ U(m,n-1);\mathit{CNOT}[q[n-1],q[n]]\ \{\mathit{true}, |\mathit{GHZ}\rangle(m,n)\}.\end{equation}

\textbf{Step III}: Now we use the rule (IF-P) to combine (\ref{GHZ-correct-2}) and (\ref{GHZ-correct-3}) and obtain equation (\ref{GHZ-correct-1}). Thus, the proof is completed.

\subsection{Quantum State Preparation}\label{subsec-qsp}

Quantum state preparation (QSP) is a common subroutine in many quantum algorithms, including quantum simulation and quantum machine learning.
The task is to generate the $n$-qubit state:
$$\frac{1}{\sqrt{a}}\sum_{j=0}^{N-1}\sqrt{a_j}|j\rangle$$ from state $|0\rangle^n$,
with $N=2^n$ and $a=\sum_{j=0}^{N-1}|a_j|$,
given the $N$-dimensional complex vector $\left(a_j\right)_{j=0}^{N-1}\in\mathbb{C}^N$. Using our programming language defined in Section \ref{sec-language}, a  quantum circuit for the quantum state preparation (see for example \cite{KP17}) can be written as the following quantum recursive program.

\begin{exam}
Let $b$ and $\theta$ be classical arrays of type $\mathbf{integer}\times \mathbf{integer}\rightarrow \mathbf{float}$,
such that for each $0\leq j<N$, $b[n,j]=|a_j|$ and $a_j=e^{i\theta[j]}b[j]$.
Let $S$ be a classical array of type $\mathbf{integer}\times\mathbf{integer}\times\mathbf{integer}\rightarrow \mathbf{float}$, such that for any $0\leq l<r\leq N$, and $S[n,l,r]=\sum_{j=l}^{r-1}b[n,j]$.
Let $\gamma$ and $\beta$ classical arrays of type $\mathbf{integer}\times \mathbf{integer}\times\mathbf{integer}\rightarrow \mathbf{float}$, such that for any $0\leq k < n $ and $0\leq x< 2^k$, $\gamma[n,k,x]=\frac{S[n,u,w]}{S[n,u,v]}$, $\beta[n,k,x]=\theta[n,w]-\theta[n,u]$,
where $u=2^{n-k}x$, $v=2^{n-k}x+2^{n-k}$, and $w=\frac{u+v}{2}$.
Define a single qubit gate $U(n,k,x)$ by 
$$U(n,k,x)|0\rangle=\sqrt{\gamma[n,k,x]}|0\rangle+e^{i\beta[n,k,x]/2}\sqrt{1-\gamma[n,k,x]}|1\rangle,$$
Let 
\begin{equation}
\label{eq:QSP-prog}
\begin{split}
\mathit{QSP}(n,k,x)\Leftarrow
\ %&\mathbf{if}\ k=0\ \mathbf{then}\ U_{0,0}[q[1]]\ \mathbf{fi}\\
&\mathbf{if}\ k< n\ \mathbf{then}\\
&\ \ U(n,k,x)[q[k+1]];\\
&\ \ \mathbf{qif}[q[k]](\square_{y=0}^{1} |y\rangle\rightarrow\mathit{QSP}(n,k+1,2x+y))\ \mathbf{fiq}\\
&\mathbf{fi}.
\end{split}
\end{equation} 
Then one calls $\mathit{QSP}(n,0,0)$ for the quantum state preparation. 

Consider any fixed integer $n$. We define quantum states: 
\begin{equation*}
    |\psi\rangle : (n,b,\theta,l,r)\mapsto \sum_{j=l}^{r-1} \frac{1}{\sqrt{S[n,l,r]}} e^{i\theta[n,j]/2}\sqrt{b[n,j]}| j\rangle.
\end{equation*}
%and write $|\phi\rangle_{k,x}=|\psi\rangle_{u,w}$,

For simplicity, let us use $A$ to denote $k<n\wedge 0\leq x< 2^k$,
and set the parameterized quantum state:
\begin{equation*}
    |\phi\rangle:(n,b,\theta,k,x)\mapsto |\psi\rangle (n,b,\theta,2^{n-k}x,2^{n-k}x+2^{n-k}).
\end{equation*}
%\begin{align*}
%    &|\Phi\rangle : (k,x)\mapsto |\phi\rangle_{k,x},\qquad\qquad\quad\ \ |\Phi'\rangle: (k,x,y) \mapsto |\phi\rangle_{k+1,2x+y},\\
%    &|\Phi'_0\rangle: (k,x) \mapsto |\phi\rangle_{k+1,2x},\qquad\qquad|\Phi'_1\rangle: (k,x) \mapsto |\phi\rangle_{k+1,2x+1}.
%\end{align*}
%where $u=2^{n-k}x$ and $w=2^{n-k}x+2^{n-k}$. 
Then the correctness of the program in (\ref{eq:QSP-prog}) can be expressed as the following Hoare triple:
\begin{equation}
     \label{eq:QSP-k-1-n}
  A \models_{\mathit{tot}}\left\{|0\rangle^{\otimes (n-k)}\right\}\mathit{QSP}(n,k,x)
    \left\{|\phi\rangle(n,b,\theta,k,x)\right\}.
\end{equation}
%and $x[1:k]$ indicates that the quantum state is held in the quantum register $q[1:k]$.
\end{exam}  

Now let us verify (\ref{eq:QSP-k-1-n}) using the proof rules presented in Section \ref{proof-system}. To apply the rule (Recursion-Tot-S), we define ranking function $r=n-k$ and introduce the assumption:
\begin{equation}
    \label{eq:QSP-ass}
    \vdash_{\mathit{tot}}\left\{A\wedge r<z , |0\rangle^{\otimes (n-k)}\right\}\mathit{QSP}(n,k,x)
    \left\{\mathit{true}, |\phi\rangle(n,b,\theta,k,x)\right\}.
\end{equation}
Since $k<n\rightarrow r\geq 0$ is obviously true,
with the rule (Block) and certain substitution of variables, it suffices to prove that
\begin{equation}
    \label{eq:QSP-goal}
    \vdash_{\mathit{tot}}\left\{A\wedge r=z, |0\rangle^{\otimes (n-k)}\right\}
    \ \mathbf{if}\ldots\mathbf{fi}
    \left\{\mathit{true}, |\phi\rangle(n,b,\theta,k,x)\right\},
\end{equation}
where $\mathbf{if}\ldots\mathbf{fi}$ stands for the right hand side of the program in (\ref{eq:QSP-prog}).

%{\vskip 3pt}

\textbf{Step I}: Applying the rules (Instantiation) and (Consequence) to assumption (\ref{eq:QSP-ass}), we obtain:
\begin{equation}
    \begin{split}
    &\vdash_{\mathit{tot}}\left\{A'\wedge n-(k+1)<z,|0\rangle^{\otimes (n-k-1)}\right\}\\
    &\qquad\qquad \mathit{QSP}(n,k+1,2x+y)
    \left\{\mathit{true}, |\phi\rangle(n,b,\theta, k+1,2x+y)\right\},
    \end{split}
\end{equation} 
where $A'$ denotes $k+1< n\wedge 0\leq y \leq 1\wedge 0\leq x< 2^{k}$.
Since $A'\wedge n-k=z\models A'\wedge n-(k+1)<z$, by the rule (Consequence) it holds that  
\begin{equation}
    \label{eq:QSP-k+1}
    \vdash_{\mathit{tot}}\left\{A'\wedge r=z,|0\rangle^{\otimes (n-k-1)}\right\}\mathit{QSP}(n,b,\theta,k+1,2x+y)
    \left\{\mathit{true}, |\phi\rangle(n,b,\theta, k+1,2x+y)\right\}.
\end{equation}

%{\vskip 3pt}

\textbf{Step II}: Let us use the rule (QIF-P) on (\ref{eq:QSP-k+1}). Then it yields:
\begin{equation}
    \label{eq:QSP-qif}
        \vdash_{\mathit{tot}}\left\{A'\wedge r=z , |y\rangle |0\rangle^{\otimes (n-k-1)}\right\}\mathbf{qif}\ldots\mathbf{fiq}\left\{\mathit{true}, |y\rangle|\phi\rangle(n,b,\theta, k+1, 2x+y)\right\},
\end{equation}
%for $y=0,1$,
where $\mathbf{qif}\ldots\mathbf{fiq}$ stands for the corresponding part in the program in (\ref{eq:QSP-prog}).

From the rule (GA-P), we can derive:
\begin{equation}
    \label{eq:QSP-rot}
    \begin{split}
        \vdash_{\mathit{tot}}&\left\{k+1< n\wedge r=z\wedge 0\leq x< 2^{k}, |0\rangle\right\} U(n,k,x)[q[k+1]]\\
        &\qquad\left\{k+1< n\wedge r=z\wedge 0\leq x< 2^{k},\sqrt{\gamma[n,k,x]}|0\rangle+e^{i\beta[n,k,x]/2}\sqrt{1-\gamma[n,k,x]}|1\rangle\right\}.
    \end{split}
\end{equation}
Applying the rules (SC-P), (Consequence), (Frame) and (Linearity) to combine (\ref{eq:QSP-rot}) and (\ref{eq:QSP-qif}) to obtain:
\begin{equation}
    \label{eq:QSP-sc}
    \begin{split}
        \vdash_{\mathit{tot}}&\left\{k+1< n\wedge 0\leq x< 2^{k}\wedge r=z, |0\rangle^{\otimes (n-k)}\right\}\\ 
         &\qquad U(n,k,x)[q[k+1]];\ \mathbf{qif}\ldots\mathbf{fiq}
         \left\{\mathit{true}, |\phi\rangle(n,b,\theta,k,x)\right\},
    \end{split}
\end{equation}
where we use 
\begin{equation*}
    \begin{split}
         &k+1< n\wedge 0\leq x< 2^{k}\wedge r=z\models\sqrt{\gamma[n,k,x]}|0\rangle|\phi\rangle(n,b,\theta,k+1,2x)+\\
         &\qquad e^{i\beta[n,k,x]/2}\sqrt{1-\gamma[n,k,x]}|1\rangle|\phi\rangle(n,b,\theta,k+1,2x+1)=
         |\phi\rangle(n,b,\theta,k,x).
    \end{split}
\end{equation*}

%{\vskip 3pt}

\textbf{Step III}:
Now consider the base case. 
By the rule (SK-P), we have for any $|\eta\rangle$:
\begin{equation*}
    \vdash_{\mathit{tot}}\left\{k=n\wedge r=z, |\eta\rangle\right\} \mathbf{skip}\left\{k=n\wedge r=z, |\eta\rangle\right\},
\end{equation*}
which, together with the rule (IF-P), further implies: 
\begin{equation}
    \vdash_{\mathit{tot}}\left\{k=n\wedge r=z, |\eta\rangle\right\} \mathit{QSP}(k,n,x)\left\{k=n\wedge r=z, |\eta\rangle\right\}.
\end{equation}
Following some steps similar to the above, we can derive:
\begin{equation}
    \begin{split}
        \vdash_{\mathit{tot}}&\left\{k=n-1\wedge 0\leq x< 2^{k}\wedge r=z, |0\rangle^{\otimes (n-k)}\right\}\\ 
         &\qquad U(n,k,x)[q[k+1]];\ \mathbf{qif}\ldots\mathbf{fiq}
         \left\{\mathit{true}, |\phi\rangle(n,b,\theta,k,x)\right\},
    \end{split}
\end{equation}
which combined with (\ref{eq:QSP-sc}), by the rule (Disjunction), yields:
\begin{equation}
    \label{eq:QSP-sc-p}
        \begin{split}
        \vdash_{\mathit{tot}}&\left\{k< n\wedge 0\leq x< 2^{k}\wedge r=z, |0\rangle^{\otimes (n-k)}\right\}\\ 
         &\qquad U(n,k,x)[q[k+1]];\ \mathbf{qif}\ldots\mathbf{fiq}
         \left\{\mathit{true}, |\phi\rangle(n,b,\theta,k,x)\right\}.
    \end{split}
\end{equation}

Finally, using the rules (IF-P) and (SK-P) on (\ref{eq:QSP-sc-p}), we obtain (\ref{eq:QSP-goal}).

\subsection{Quantum Random-Access Memory (QRAM)}\label{subsec-qram}
In this subsection, we consider a simple QRAM \cite{Jaq23} that performs the following transformation: for any data set $d[0:N]$ with $N=2^n -1$, and for any address $0\leq j \leq N$, 
\begin{equation}
    |j\rangle |d[0:N]\rangle \mapsto |j\rangle |d[j]\rangle |d[0:j-1]\rangle |d[0]\rangle |d[j+1:N]\rangle.
    \label{eq:QRAM}
\end{equation}
Intuitively, given an address $j$, the desired data element $d[j]$ is swapped out. Using our programming language defined in Section \ref{sec-language}, a 
simple (but not very efficient) implementation of QRAM can be written as the following quantum recursive program.
\begin{exam}\label{exam-qram} We use $q_a[1:n]$ for the address register holding $|j\rangle$ 
and $q_d[0:N]$ for the data register holding quantum state $|d[0:N]\rangle$ on the LHS of \eqref{eq:QRAM}. Let
\begin{equation}
\label{eq:QRAM-prog}
\begin{split}
    U(l,r,k)\Leftarrow \ &\mathbf{if}\ k\leq n \ \mathbf{then}\\ 
    &\ \ \ \mathbf{begin\ local}\ m:=\lfloor(l+r)/2\rfloor;\\
    &\ \ \ \ \ \ \mathbf{qif}[q_a[k]]\ |0\rangle\rightarrow U(l,m,k+1)\\
    &\ \ \ \ \ \qquad\quad\ \square\ \ \ |1\rangle\rightarrow U(m+1,r,k+1);\\
    &\ \ \ \ \quad\quad\quad\quad\ \ \ \mathit{SWAP}[q_d[l],q_d[m+1]];\\
    &\ \ \ \ \quad\quad\quad\quad\ \ \ U(m+1,r,k+1)\\
    &\ \ \ \ \ \ \mathbf{fiq}\\
    &\ \ \ \mathbf{end}\\ 
    &\mathbf{fi}
\end{split}\end{equation}
Then one calls $U(0,N,1)$ for the QRAM operation.

It is particularly interesting to note that a Divide-and-Conquer strategy
was employed in this example where QRAM is divided
into two subproblems smaller than the original one that are
then solved respectively in each of the branches of a quantum
case statement. 

%Consider any fixed integer $n$.
We define the following parameterized quantum states:
%For $0\leq l<r\leq N$, we use the notation: 
\begin{align*}
    &|D\rangle : n \mapsto |d[0:N]\rangle\\
    &|D'\rangle : (n,l,r)\mapsto |d[0:l-1]\rangle |d[r]\rangle |d[l+1:r-1]\rangle |d[l]\rangle |d[r+1:N]\rangle.
\end{align*}
%and use $$ to represent $|D[0:N]\rangle$.
Let $j$ be a classical array of type $\mathbf{integer}\rightarrow \mathbf{Boolean}$.
Then the correctness of the QRAM program in (\ref{eq:QRAM-prog}) can be expressed as the Hoare triple:
\begin{equation}
    \label{eq:QRAM-cor}
        k\leq n \wedge r-l+1=2^{n-k}\models_{\mathit{tot}}
        \left\{|j[k:n]\rangle |D\rangle (n)\right\}U(l,r,k)\left\{|j[k:n]\rangle |D'\rangle(l,l+j[k:n])\right\},
\end{equation}
where %$0\leq j< 2^{n-k}$,and 
$|j[k:n]\rangle$ is the quantum state of the quantum array section $q_a[k:n]$,
and as usual we also use $j[k:n]$ to denote $\sum_{t=k}^n j[t]\cdot 2^{n-k}$, slightly abusing the notation.
\end{exam}

In the following we verify (\ref{eq:QRAM-cor}) using the proof rules in Section \ref{proof-system}. 
To apply the rule (Recursion-Par-S),
we define the ranking function $s=n-k$ and introduce the assumption that
\begin{equation}
    \label{eq:QRAM-ass}
    \begin{split}
        \vdash_{\mathit{tot}}&
        \left\{k\leq n \wedge r-l+1=2^{s}\wedge s<z,|j[k:n]\rangle |D\rangle(n)\right\}U(l,r,k)\\
        &\qquad\qquad \left\{k\leq n \wedge r-l+1=2^{s}, |j[k:n]\rangle |D'\rangle(l,l+j[k:n])\right\},
    \end{split}
\end{equation}
where $z$ is a fresh variable.
 
Since $k\leq n \rightarrow s\geq 0$ is obviously true,
with the rule (Block) and certain substitution of variables, it suffices to prove that
\begin{equation}
    \label{eq:QRAM-rec-goal}
    \begin{split}
        \vdash_{\mathit{tot}}&
        \left\{k\leq n \wedge r-l+1=2^{s}\wedge s=z,|j[k:n]\rangle |D\rangle(n)\right\}\mathbf{if}\ldots\mathbf{fi}\\
        &\qquad\qquad \left\{k\leq n \wedge r-l+1=2^{s}, |j[k:n]\rangle |D'\rangle(l,l+j[k:n])\right\},
    \end{split}
\end{equation}
where $\mathbf{if}\ldots\mathbf{fi}$ stands for the right hand side of the program in (\ref{eq:QRAM-prog}).
 
\textbf{Step I}:
First note that $\mathit{free}(s<z)\cap \mathit{change}\left(U(l,r,k)\right)=\emptyset$,
so using the rule (Invariance-Con) we can strengthen (\ref{eq:QRAM-ass}) to
\begin{equation}
    \label{eq:QRAM-ass-s}
    \begin{split}
        \vdash_{\mathit{tot}}&
        \left\{k\leq n \wedge r-l+1=2^{s}\wedge s<z,|j[k:n]\rangle |D\rangle(n)\right\}\ U(l,r,k)\\
        &\qquad\qquad \left\{k\leq n \wedge r-l+1=2^{s}\wedge s<z, |j[k:n]\rangle |D'\rangle(l,l+j[k:n])\right\}.
    \end{split}
\end{equation}
By applying the rule (Instantiation) to (\ref{eq:QRAM-ass-s}),
we have:
\begin{align}
    \label{eq:QRAM-rec-left}
    \begin{split}
        \vdash_{\mathit{tot}}&\left\{k+1\leq n \wedge m-l+1=2^{s-1}\wedge s-1<z,|j[k+1:n]\rangle |D\rangle (n)\right\} U(l,m,k+1)\\
            &\quad\left\{k+1\leq n \wedge m-l+1=2^{s-1}\wedge s-1<z,|j[k+1:n]\rangle|D'\rangle(l,l+j[k+1:n])\right\},
    \end{split}\\
    \label{eq:QRAM-rec-right}
    \begin{split}
        \vdash_{\mathit{tot}}&\left\{k+1\leq n \wedge r-m=2^{s-1}\wedge s-1<z,|j[k+1:n]\rangle |D\rangle (n)\right\}U(m+1,r,k+1)\\
            &\quad\left\{k+1\leq n \wedge r-m=2^{s-1}\wedge s-1<z,|j[k+1:n]\rangle |D'\rangle(m+1,m+1+j[k+1:n])\right\}.
    \end{split}
\end{align}

\textbf{Step II}: Using the rule (GA-P), we can prove:
\begin{equation}
    \label{eq:QRAM-swap}
    \begin{split}
        \vdash_{\mathit{tot}}&\left\{k+1\leq n \wedge r-m=2^{s-1}\wedge s-1<z,|j[k+1:n]\rangle |D\rangle(n) \right\}\mathit{SWAP}[q_d[l],q_d[m+1]]\\
        &\qquad \qquad\left\{k+1\leq n \wedge r-m=2^{s-1}\wedge s-1<z,|j[k+1:n]\rangle |D'\rangle(l,m+1)\right\}.
    \end{split}
\end{equation}
Note that 
\begin{align*}
    s=z&\models s-1<z,\\
    r-m=2^{s-1}&\models  m=\lfloor(l+r)/2\rfloor\wedge r-l+1=2^{s},\\
    m=\lfloor(l+r)/2\rfloor\wedge r-l+1=2^{s}&\models r-m=2^{s-1}.
\end{align*}
By the rules (SC-P) and (Consequence), we can combine (\ref{eq:QRAM-rec-right}) and (\ref{eq:QRAM-swap}) to obtain:
\begin{equation}
    \label{eq:QRAM-qif-right}
    \begin{split}
        \vdash_{\mathit{tot}}&\left\{k+1\leq n \wedge m=\lfloor(l+r)/2\rfloor\wedge r-l+1=2^{s}\wedge s=z,|j[k+1:n]\rangle |D\rangle(n) \right\}\\
            &\qquad U(m+1,r,k+1);\ \mathit{SWAP}[q_d[l],q_d[m+1]];\ U(m+1,r,k+1)\\
            &\qquad\qquad \left\{k+1\leq n \wedge  r-l+1=2^{s},|j[k+1:n]\rangle |D'\rangle(l,m+1+j[k+1:n])\right\}.
    \end{split}
\end{equation}
Similarly, note that
$m-l+1=2^{s-1}\models m=\lfloor(l+r)/2\rfloor\wedge r-l+1=2^{s}$ and $m=\lfloor(l+r)/2\rfloor\wedge r-l+1=2^{s}\models m-l+1=2^{s-1}$.
Now use the rules (QIF-P) and (Consequence) on (\ref{eq:QRAM-rec-left}) and (\ref{eq:QRAM-qif-right}), we have:
\begin{equation}
    \label{eq:QRAM-qif-fiq}
    \begin{split}
        \vdash_{\mathit{tot}}&\left\{k+1\leq n \wedge m=\lfloor(l+r)/2\rfloor\wedge r-l+1=2^{s}\wedge s=z,|j[k:n]\rangle |D\rangle (n)\right\}\\
            &\mathbf{qif}\ldots \mathbf{fiq}\left\{k+1\leq n \wedge r-l+1=2^{s},|j[k:n]\rangle |D'\rangle(l,l+j[k:n])\right\},
    \end{split}
\end{equation}
where $\mathbf{qif}\ldots\mathbf{fiq}$ stands for the corresponding part in the program in (\ref{eq:QRAM-prog}).
Note that by the rule (AS-P), we have:
\begin{equation}
    \label{eq:QRAM-m-as}
    \begin{split}
        \vdash_{\mathit{tot}}&\left\{k+1\leq n \wedge r-l+1=2^{s}\wedge s=z,|j[k:n]\rangle |D\rangle (n)\right\} m:=\lfloor(l+r)/2\rfloor\\
        &\qquad\left\{k+1\leq n\wedge m=\lfloor(l+r)/2\rfloor \wedge r-l+1=2^{s}\wedge s=z,|j[k:n]\rangle |D\rangle(n)\right\}.
    \end{split}
\end{equation}
Applying the rule (SC-P) to combine (\ref{eq:QRAM-m-as}) and (\ref{eq:QRAM-qif-fiq}) yields:
\begin{equation}
    \begin{split}
        \vdash_{\mathit{tot}}&\left\{k+1\leq n \wedge r-l+1=2^{s}\wedge s=z,|j[k:n]\rangle |D\rangle(n) \right\} m:=\lfloor(l+r)/2\rfloor;\ \mathbf{qif}\ldots \mathbf{fiq}\\
            &\qquad \qquad \left\{k+1\leq n \wedge r-l+1=2^{s},|j[k:n]\rangle |D'\rangle(l,l+j[k:n])\right\},
    \end{split}
\end{equation}
which by the rule (Block) further implies:
\begin{equation}
    \label{eq:QRAM-b-e-k+1}
    \begin{split}
        \vdash_{\mathit{tot}}&\left\{k+1\leq n \wedge r-l+1=2^{s}\wedge s=z,|j[k:n]\rangle |D\rangle(n) \right\}\\
            &\qquad \mathbf{begin}\ \mathbf{local}\ m:=\lfloor(l+r)/2\rfloor;\ \mathbf{qif}\ldots \mathbf{fiq}\ \mathbf{end}\\
            &\qquad \qquad \left\{k+1\leq n \wedge r-l+1=2^{s}, |j[k:n]\rangle |D'\rangle(l,l+j[k:n])\right\}.
    \end{split}
\end{equation}

\textbf{Step III}:
Now consider the base case. By the rules (SK-P)
and (IF-P), we have
\begin{align}
        \vdash_{\mathit{tot}}&
        \left\{k=n\wedge l=m\wedge s=z, |D\rangle(n)\right\}U(l,m,k+1)\left\{k=n\wedge l=m\wedge s=z, |D\rangle(n)\right\},\\
        \vdash_{\mathit{tot}}&
        \left\{k=n\wedge r=m+1\wedge s=z, |D\rangle(n)\right\}U(m+1,r,k+1)\left\{k=n\wedge r=m+1\wedge s=z, |D\rangle(n)\right\}.
\end{align}
Following similar steps to the above, we can derive:
\begin{equation}
    \label{eq:QRAM-b-e-n}
    \begin{split}
        \vdash_{\mathit{tot}}& \left\{k=n\wedge r-l+1=2\wedge s=z,|j[k:n]\rangle |D\rangle (n)\right\}\\
            &\qquad \mathbf{begin}\ \mathbf{local}\ m:=\lfloor(l+r)/2\rfloor;\ \mathbf{qif}\ldots \mathbf{fiq}\ \mathbf{end}\\
            &\qquad \qquad \left\{k=n\wedge r-l+1=2, |j[k:n]\rangle |D'\rangle(l,l+j[k:n])\right\},
    \end{split}
\end{equation}
which combined with (\ref{eq:QRAM-b-e-k+1}), by the rules (Consequence) and (Disjunction), gives
\begin{equation}
    \label{eq:QRAM-b-e-all}
    \begin{split}
        \vdash_{\mathit{tot}}&\left\{k\leq n \wedge r-l+1=2^{s}\wedge s=z,|j[k:n]\rangle |D\rangle(n) \right\}\\
            &\qquad \mathbf{begin}\ \mathbf{local}\ m:=\lfloor(l+r)/2\rfloor;\ \mathbf{qif}\ldots \mathbf{fiq}\ \mathbf{end}\\
            &\qquad \qquad \left\{k\leq n \wedge r-l+1=2^{s}, |j[k:n]\rangle |D'\rangle(l,l+j[k:n])\right\}.
    \end{split}
\end{equation}

Finally, using the rules (IF-P) and (SK-P) on (\ref{eq:QRAM-b-e-all}) immediately yields (\ref{eq:QRAM-rec-goal}).

\section{Discussions about Quantum Hoare Triples}

\subsection{Formal Quantum States}\label{formal-states}

Parameterized quantum states used in preconditions and postconditions in the Hoare triples considered in this paper are defined in Subsection \ref{sec-parameter} as purely semantic entities. They are easy to use in human reasoning like what we did in the  examples given in Section \ref{sec-examples} and Appendix \ref{appendix-examples}. However, in order to formalize our proof system in a proof assistant (e.g. Coq, Isabelle/HOL or Lean), it is better to define a syntax of parameterized quantum states.   
 In this appendix, we introduce formal quantum states as the syntax of parameterized quantum states. 
 
First, we slightly expand our quantum programming language $\mathbf{RQC}^{++}$. In Subsection \ref{sec-circuits}, $\mathcal{U}$ was defined as a set of non-parameterized elementary unitary gates. Now we assume $\mathcal{U}$ to be a set of basic parameterised quantum gate symbols. Each gate symbol $U\in\mathcal{U}$ is equipped with: \begin{enumerate}\item[(i)] a classical type of the form $T_1\times ...\times T_m$, called the parameter type, where $m\geq 0$; and \item[(ii)] a quantum type of the form $\hs_1\otimes ...\otimes\hs_n$, where $n\geq 1$ is called the arity of $U$.\end{enumerate} 
In the case of $m=0$, $U$ is a non-parameterised gate symbol.  
If for each $k\leq m$, $t_k$ is a classical term of type $T_k$, and for each $i\leq n$, $q_i$ is a simple or subscripted quantum variable with type $\hs_i$, then $U(t_1,...,t_m)[q_1,...,q_n]$ is called a quantum gate. Intuitively, it means that the quantum gate $U$ with parameters $t_1,...,t_m$ acts on quantum variables $q_1,...,q_n$. Then we have: 
 
 \begin{defn}Formal quantum states are inductively defined as follows:\begin{enumerate}\item If $t$ is a classical term of type $T$, and $q$ is a simple or subscripted quantum variable of type $\hs_T={\rm span} \{|v\rangle:v\in T\}$ (the Hilbert space with $\{|v\rangle:v\in T\}$ as its orthonormal basis), then $s=|t\rangle_q$ is a formal quantum state, and its signature is $\mathit{Sign}(s)=\{q\}$;
 
 \item If $s_1,s_2$ are formal quantum states and $\mathit{Sign}(s_1)\cap \mathit{Sign}(s_2)=\emptyset$, then $s_1\otimes s_2$ (oftenn simply written $s_1s_2$) is a formal quantum state too, and its signature is $\mathit{Sign}(s_1\otimes s_2)=\mathit{Sign}(s_1)\cup\mathit{Sign}(s_2)$;
 
 \item If $s_1, s_2$ are formal quantum states such that $\mathit{Sign}(s_1)=\mathit{Sign}(s_2)$, and $\alpha_1,\alpha_2$ are two classical terms of type $\mathbb{C}$ (complex numbers), then $\alpha_1s_1+\alpha_2 s_2$ is a formal quantum states, and its signature is $\mathit{Sign}(\alpha_1s_1+\alpha_2s_2)=\mathit{Sign}(s_1)=\mathit{Sign}(s_2)$; and 
 
 \item If $s$ is a formal quantum state and $U(\overline{t})[\overline{q}]$ is a quantum gate defined above with $\overline{q}\subseteq\mathit{Sign}(s)$, then $(U(\overline{t})[\overline{q}])s$ is a formal quantum states, and its signature is $\mathit{Sign}((U(\overline{t})[\overline{q}])s)=\mathit{Sign}(s).$
 \end{enumerate}
 \end{defn}
 
 The semantics of formal quantum states are then actually parameterized quantum states.
 
 \begin{defn}Let $s$ be a formal quantum state. Given an interpretation $\mathbb{I}$ of the assertion language. For any classical state $\sigma$ in $\mathbb{I}$, if $\sigma\models \mathit{Dist}(\mathit{Sign}(s))$ (see equation (\ref{def-dist}) for its definition), then the semantics $\sigma(s)$ of $s$ in state $\sigma$ is a vector in Hilbert space $\hs_{\mathit{Sign}(s)}=\bigotimes_{q\in\sigma(\mathit{Sign}(s))}\hs_q$, where $\sigma(\mathit{Sign}(s))=\{\sigma(q^\prime)|q^\prime\in\mathit{Sign}(s)\}$, and $\sigma(q^\prime)$ is defined as in Subsection \ref{sec-sem-q++}, and it is inductively defined as follows:   
 \begin{enumerate}\item $\sigma(|t\rangle_q)=|\sigma(t)\rangle_{\sigma(q)}\in\hs_{\sigma(q)}$; 
 \item $\sigma(s_1\otimes s_2)=\sigma(s_1)\otimes \sigma(s_2)$;
 \item $\sigma(\alpha_1s_1+\alpha_2s_2)=\sigma(\alpha_1)\sigma(s_1)+\sigma(\alpha_2)\sigma(s_2)$; and
 \item if $\sigma\models\mathit{Dist}(\overline{q})$, then $\sigma\left((U(\overline{t})[\overline{q}])s\right)=\left(U(\sigma(\overline{t}))[\sigma(\overline{q})]\right)\sigma(s),$ where $\sigma(\overline{t})=\sigma(t_1),...,\sigma(t_m)$ if $\overline{t}=t_1,...,t_m$, and $\sigma(\overline{q})=\sigma(q_1),...,\sigma(q_n)$ if $\overline{q}=q_1,...,q_n$. 
\end{enumerate}
 \end{defn}
 
It should be noticed that $\sigma(s)$ is not always well-defined because the distinction condition in clause (4) may not be satisfied. Whenever it is well-defined, it is a vector in the Hilbert space $\hs_{\mathit{Sign}(s)}$, but it may not be a quantum state. If $\|\sigma(s)\|=1$, then we say that $\sigma(s)$ is legitimate. In this case, $\sigma(s)$ is a pure quantum state. Therefore, the semantics $\llbracket s\rrbracket_\mathbb{I}$ of $s$ in the interpretation $\mathbb{I}$ can be understood as the parameterized quantum state:
 $$\llbracket s\rrbracket_\mathbb{I}:\sigma\mapsto\llbracket s\rrbracket_\mathbb{I}(\sigma)=\sigma(s).$$
 
 \textit{In particular, using formal quantum states introduced in this section, the conditions $\mathit{Inv}(|\psi\rangle:\overline{x})$ in the proof rules (Block) and (Substitution), the condition $\mathit{Inv}(|\psi\rangle:x)$ in the rule ($\exists$-introduction), the condition $\mathit{qv}(|\psi\rangle)\cap\mathit{qv}(C)=\emptyset$ in the rule (Invariance), and the substitutions $|\varphi\rangle[\overline{x}:=\overline{t}]$, $|\psi\rangle[\overline{x}:=\overline{t}]$ in the rule (Instantiation) can be defined syntactically rather than semantically. Here, we omit these straightforward definitions.}
 
\subsection{Hoare Triples with Quantum Predicates}

In this paper, we only consider Hoare triple of the form $\{A,|\varphi\rangle\}\ C\ \{B,|\psi\rangle$, where $|\varphi\rangle, |\psi\rangle$ are parameterized quantum states. Indeed, we can define a more general form of Hoare triple for quantum circuits parameterized by classical variables:
\begin{equation}\label{Hoare-BvN}\{A,X\}\ C\ \{B, Y\}\end{equation} where $A, B$ are still first order logical formulas as in (\ref{Hoare-0+}), but $X,Y$ are are two (parameterized) closed subspaces of the Hilbert space of quantum circuit $C$ (or equivalently the projection operators onto the subspaces). In this case, $X,Y$ are actually two propositions in the sense of Birkhoff-von Neumann quantum logic \cite{BvN36}. It is easy to see that whenever $X,Y$ are both one-dimensional, say $X=|\varphi\rangle\langle\varphi|$ and $Y=|\psi\rangle\langle\psi|$, then Hoare triple (\ref{Hoare-BvN}) degenerates to (\ref{Hoare-0+}).   

Indeed, $X,Y$ can be even defined as two quantum predicates (i.e. observables modelled by positive Hermitian operators) \cite{DP06}. These more general preconditions and postconditions have been used in several version of quantum Hoare logic (see for example \cite{Ying11, Zhou19, Feng21}).   
In this paper, since only quantum circuits are dealt with and no measurements are involved, the restricted form (\ref{Hoare-0+}) of Hoare triples are strong enough for our purpose. However, whenever we add quantum measurements into our programming languages and we are concerned with the expectations of certain observables as in many quantum algorithms, then Hoare triples with quantum predicates will be needed. 

It is an interesting topic to extend the proof system to the case of Hoare triples with general quantum predicates. It seems that the idea of the above subsection can be easily generalized for symbolizing quantum predicates.   

\section{Proof of Soundness and Completeness Theorems}\label{proof-sound}

In this appendix, we give the proofs of the Soundness Theorems \ref{soundness} and the (Relative) Completeness Theorem \ref{completeness}.

\subsection{Proof by Lifting from a Classical Program Logic}
\begin{proof} (Outline) Our idea is to establish a link between the soundness and completeness of our proof system for recursive quantum circuits and those of the corresponding proof system for a classical programming language with recursion. 
Then we can lift the soundness and completeness of the logic for classical recursive programs to the case of recursive quantum circuits. 

As an example, let us only consider the completeness for partial correctness. The soundness and the completeness for total correctness can be proved in a similar way. Remember that in Section \ref{sec-sound-complete}, the assertion language $\mathcal{L}$ of the classical programming language used in this paper was expanded to a first-order logical language $\mathcal{L}^+$, for which the arithmetic interpretation $\mathbb{I}$ is given. We further expand $\mathcal{L}^+$ to $\mathcal{L}^{++}$ by adding:\begin{itemize}\item new (classical) variables $X,X^\prime,...$ for quantum states simply considered as complex vectors. More precisely, for each quantum registers $\overline{q}$ (a finite sequence of quantum variables), we introduce some classical variables, say $X_{\overline{q}},X^\prime_{\overline{q}},...$, of which the values are vectors in the Hilbert space $\hs_{\overline{q}}$ (seen as a classical domain); and 
\item new (classical) function symbols for basic quantum gates, which are simply considered as mappings from complex vectors to complex vectors.
For each quantum gate $U$ on quantum register $\overline{q}$, we introduce a classical function symbol $F_U$. Thus, for example, $F_U(X_{\overline{q}}), F_U(X^\prime_{\overline{q}}),...$ are terms in $\mathcal{L}^{++}.$ 
 \end{itemize}
We also assume that $\mathcal{L}^{++}$ is equipped with equality symbol \textquotedblleft $=$\textquotedblright\ between complex vectors. Then each quantum circuit in $\mathbf{QC}^+$ can be encoded as a term in $\mathcal{L}^{++}$ with both the classical variables and function symbols in $\mathcal{L}^+$ and the newly introduced variables for encoding quantum states and function symbols for encoding quantum gates. In this way, each recursively defined quantum circuit $C$ in $\mathbf{RQC}^{++}$ can be appropriately translated into a recursive classical program $S(C)$ defined based on the first-order language $\mathcal{L}^{++}$ such that for any interpretation $\mathbb{I}$ of $\mathcal{L}^+$, 
\begin{equation}\label{Hoare-eq+1}\mathbb{I}\models_\mathit{par}\{A,|\varphi\rangle\}\ C\ \{B,|\psi\rangle\}\ {\rm iff}\ \mathbb{I}^\ast\models_\mathit{par}\{A\wedge X=|\varphi\rangle\}\ S(C)\ \left\{B\wedge X^\prime=|\psi\rangle\right\},\end{equation} where variable $X^\prime$ is introduced to store the part of complex vectors in the computational result of $C$; that is, $\llbracket C\rrbracket_\sigma(|\varphi\rangle(\sigma))$ in Lemma \ref{lem-linear1}, for a given classical state $\sigma$ (note that $|\psi\rangle$ is a parameterized quantum state), and $\mathbb{I}^\ast$ is the interpretation of $\mathcal{L}^{++}$ obtained from $\mathbb{I}$ by adding the interpretation of the new function symbols and the valuation of new variables $X,X^\prime,...$. The same conclusion holds for total correctness. 
This claim can be proved by induction on the length of $C$, and we omit the tedious details here. 

Now assume that $\mathbb{I}$ is an arithmetic interpretation and $\mathbb{I}\models_\mathit{par}\{A,|\varphi\rangle\}\ C\ \{B,|\psi\rangle\}$. 
First, it follows from (\ref{Hoare-eq+1}) that \begin{equation}\label{Hoare-eq+2}\mathbb{I}^\ast\models_\mathit{par}\{A\wedge X=|\varphi\rangle\}\ S(C)\ \left\{B\wedge X^\prime=|\psi\rangle\right\}.\end{equation}
It is easy to make $\mathbb{I}^\ast$ being arithmetic by appropriately extending the encoding of the finite sequences of elements in the domain to include the new elements of complex vectors. Then by the completeness theorem of the logic for classical recursive programs (see for example \cite{Francez}, Section 6.3 and \cite{AB90}, Theorem 5.6), we obtain: 
\begin{equation}\label{Hoare-eq+3}\mathit{Th}(\mathbb{I}^\ast)\vdash_\mathit{par}\{A\wedge X=|\varphi\rangle\}\ S(C)\ \left\{B\wedge X^\prime=|\psi\rangle\right\}\end{equation} from (\ref{Hoare-eq+2}). By the translation between quantum circuits in $\mathbf{RQC}^{++}$ and recursive classical programs defined based on $\mathcal{L}^{++}$ described above, we are able to translate each axiom or inference rule used in the proof of (\ref{Hoare-eq+3}) to an equivalent axiom or rule in our proof system $\mathcal{PS}_\mathit{par}$ for $\mathbf{RQC}^{++}$. Accordingly, the proof of (\ref{Hoare-eq+3}) is translated to a proof of 
\begin{equation}\mathit{Th}(\mathbb{I})\vdash_\mathit{par}\{A, |\varphi\rangle\}\ C\ \left\{B,|\psi\rangle\right\}\end{equation} in our proof system $\mathcal{PS}_\mathit{par}$. We omit the tedious translation here. Thus, the completeness for partial correctness is proved. 
\end{proof}

It must be emphasised that the meaning of a quantum circuit $C$ in physics was lost in the encoding $S(C)$ of $C$ in a classical programming language used in the above proof, although the encoding establishes a close link (\ref{Hoare-eq+1}) between the logical correctness of $C$ and its translation $S(C)$ in the classical language. Indeed, quantum states and gates are simply treated as complex vectors and matrices in $S(C)$, and quantum computation in $C$ is treated as classical operations of matrices. 

{\vskip 4pt}

\textbf{Direct Proof of Soundness and Completeness}: Except for the above proof, the soundness and completeness can also be proved by a direct way:
\begin{itemize}\item \textbf{\textit{Soundness}}: To prove the soundness, we only need to verify the validity of each axiom and that each of our proof rules preserves the validity.  

\item \textbf{\textit{Completeness}}: We can develop a theory of (liberal) weakest preconditions with respect to correctness formulas of the form (\ref{Hoare-0+}) for recursive quantum programs. Then we can prove the completeness following the ideas of the proof of completeness for program logics of classical recursive programs (see for example \cite{AB90}, Section 5 or \cite{Francez}, Section 6.3).\end{itemize}

Here, we are not going to elaborate the proof of completeness using weakest preconditions. However, the direct proofs of the validity of our axioms and proof rules are instructive for both better understanding and proper applications of them (in particular, those rules for recursion). So, we present these proofs in the following subsections.  

{\vskip 3pt}

\textbf{Convention}: When proving the soundness of our proof rules, we always assume an interpretation $\mathbb{I}$ of the assertion language, and then talk about partial correctness $\mathbb{I}\models_\mathit{par}$ and total correctness $\mathbb{I}\models_\mathit{tot}$ under the interpretation $\mathbb{I}$ (see Definition \ref{def-hoare-semantics}). To simplify the presentation, the notation $\mathbb{I}\models_\mathit{par}$ and $\mathbb{I}\models_\mathit{tot}$ will often be written as $\models_\mathit{par}$ and $\models_\mathit{tot}$, respectively, provided that no confusion happens. 

\subsection{Proof of Soundness of Structural Rules}

In this section, we prove the soundness of the structural rules presented in Table \ref{circuit-rules-proof-0}. To this end, let us first give a technical lemma.

\begin{lem}\label{lem-linear}\begin{enumerate}\item If $(C,\sigma,|\varphi_1\rangle)\rightarrow (C^\prime,\sigma^\prime,|\varphi_1^\prime\rangle)$, then for any $|\varphi_2\rangle$, we have $(C,\sigma,|\varphi_2\rangle)\rightarrow (C^\prime,\sigma^\prime,|\varphi_2^\prime\rangle)$ for some $|\varphi_2^\prime\rangle$. \item If $(C,\sigma,|\varphi_1\rangle)\rightarrow (C^\prime,\sigma^\prime,|\varphi_i^\prime\rangle)$ $(i=1,2)$, then it holds that $$(C,\sigma,\alpha_1|\varphi_1\rangle+\alpha_2|\varphi_2\rangle)\rightarrow (C^\prime,\sigma^\prime,\alpha_1|\varphi_1^\prime\rangle+\alpha_2|\varphi_2\rangle).$$
\end{enumerate}\end{lem}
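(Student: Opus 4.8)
The plan is to argue both parts simultaneously by induction on the derivation of the one-step transition $(C,\sigma,|\varphi_1\rangle)\rightarrow(C^\prime,\sigma^\prime,|\varphi_1^\prime\rangle)$, exploiting the fact that in every rule of Tables \ref{circuit-rules-1}, \ref{circuit-rules-1+} and \ref{circuit-rules-2} the ``control data'' $C^\prime,\sigma^\prime$ are fixed by $(C,\sigma)$ alone, while the quantum component is acted on by a fixed \emph{linear} operator $E_{C,\sigma}$ depending only on $(C,\sigma)$. Once this is shown, Part 1 is immediate (the same $C^\prime,\sigma^\prime$ serve for every input, with $|\varphi_2^\prime\rangle=E_{C,\sigma}|\varphi_2\rangle$), and Part 2 reads off as $E_{C,\sigma}(\alpha_1|\varphi_1\rangle+\alpha_2|\varphi_2\rangle)=\alpha_1 E_{C,\sigma}|\varphi_1\rangle+\alpha_2 E_{C,\sigma}|\varphi_2\rangle=\alpha_1|\varphi_1^\prime\rangle+\alpha_2|\varphi_2^\prime\rangle$.

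First I would dispose of the rules that touch the quantum state trivially. For (SK), (AS), (IF), (BS), the copy rule (CR) and the recursion rule (RC), the quantum state is carried through unchanged, so $E_{C,\sigma}=I$; the new program and classical state are read directly off the rule and manifestly do not mention $|\varphi\rangle$ (the guard in (IF) and the distinctness side-condition are evaluated in $\sigma$ only). For (GA), the transition fires precisely when $\sigma\models\mathit{Dist}(\overline{q})$ --- a condition on $\sigma$ alone --- and then $E_{C,\sigma}=U_{\sigma(\overline{q})}$, a unitary and hence linear. The sequential case (SC) is handled by the induction hypothesis applied to the premise $(C_1,\sigma,\cdot)\rightarrow(C_1^\prime,\sigma^\prime,\cdot)$: it yields the same $C_1^\prime,\sigma^\prime$ and a linear head operator, with next program $C_1^\prime;C_2$ and $E_{C_1;C_2,\sigma}=E_{C_1,\sigma}$.

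The genuine obstacle is the rule (QIF), which is a ``macro-step'': its conclusion is a single $\rightarrow$ but its premise invokes the multi-step relation $\rightarrow^\ast$ on the two branches, and moreover demands that both branches terminate in a \emph{common} terminal classical state $\sigma^\prime$. For a different input $|\varphi_2\rangle=\beta_0|0\rangle_{\sigma(q)}|\eta_0\rangle+\beta_1|1\rangle_{\sigma(q)}|\eta_1\rangle$ I must therefore certify that the branch runs $(C_i,\sigma,|\eta_i\rangle)$ still terminate and still reach the \emph{same} $\sigma^\prime$. This is exactly what Lemma \ref{lem-linear1} provides: knowing from the derivation for $|\varphi_1\rangle$ that $(C_i,\sigma,|\theta_i\rangle)\rightarrow^\ast(\downarrow,\sigma^\prime,|\theta_i^\prime\rangle)$ for one input, Part 1 of that lemma guarantees $(C_i,\sigma,|\eta_i\rangle)\rightarrow^\ast(\downarrow,\sigma^\prime,|\eta_i^\prime\rangle)$ with the identical $\sigma^\prime$, and Part 2 identifies the branch action with the unitary $\llbracket C_i\rrbracket_\sigma$. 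Hence the whole (QIF) step is realised by the single operator $E_{C,\sigma}=|0\rangle\langle 0|_{\sigma(q)}\otimes\llbracket C_0\rrbracket_\sigma+|1\rangle\langle 1|_{\sigma(q)}\otimes\llbracket C_1\rrbracket_\sigma$ (cf. (\ref{qcase-3})), which is independent of the input quantum state and linear in it; expanding $\alpha_1|\varphi_1\rangle+\alpha_2|\varphi_2\rangle$ in the $|0\rangle_{\sigma(q)},|1\rangle_{\sigma(q)}$ basis and applying $E_{C,\sigma}$ componentwise gives Part 2 directly. No circularity arises, since Lemma \ref{lem-linear1} is established beforehand by induction on the structure of $C$ and is invoked here only on the structurally smaller branches $C_0,C_1$; this completes the induction and both parts follow.
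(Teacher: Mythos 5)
Your proposal is correct, and it supplies exactly the ``tedious details'' the paper omits: the paper's own proof of Lemma \ref{lem-linear} is a one-line claim of ``routine induction on the length of $C$,'' so there is nothing more specific to match against. Your argument is aligned with that declared strategy but refines it in two useful ways. First, packaging every rule as ``$(C,\sigma)$ alone determines $(C^\prime,\sigma^\prime)$ and a fixed linear head operator $E_{C,\sigma}$'' proves Parts 1 and 2 in one stroke, and it also quietly disposes of a genuine subtlety in (QIF): the decomposition $|\psi\rangle=\alpha_0|0\rangle_{\sigma(q)}|\theta_0\rangle+\alpha_1|1\rangle_{\sigma(q)}|\theta_1\rangle$ in the premise is not unique, but once the branch actions are the fixed unitaries $\llbracket C_i\rrbracket_\sigma$ the resulting state is decomposition-independent. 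Second, you correctly identify (QIF) as the only non-trivial case --- a macro-step whose premise involves $\rightarrow^\ast$ and a \emph{common} terminal classical state --- and delegating it to Lemma \ref{lem-linear1} on the branches is legitimate within the paper's ordering, since that lemma is asserted (with proof likewise omitted) back in Section \ref{sec-parameters}. One caveat on your non-circularity remark: ``structurally smaller branches'' is not by itself a complete safeguard, because the multi-step runs of $C_0,C_1$ may unfold procedure calls via (RC), producing programs that are not structurally smaller; so Lemma \ref{lem-linear1} itself cannot be established by pure structural induction, and a fully self-contained development would prove both lemmas jointly by induction on derivation size or on the number of computation steps (lexicographically with structure). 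As long as you treat Lemma \ref{lem-linear1} as a previously established black box, as the paper does, your proof stands; if you wanted the whole edifice self-contained, you should note that your top-level ``induction on the derivation of the one-step transition'' already contains the branch runs as sub-derivations, so the same induction can absorb the content of Lemma \ref{lem-linear1} without appeal to structural smallness.
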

\begin{proof} The proof is routine by induction on the length of $C$. We omit the tedious details here. 
\end{proof}

Now we are ready to present: 

\begin{proof}[Proof of rule (Linearity)] Assume that \begin{equation}\label{lin-1}\models_\mathit{par}\{A,|\varphi_i\rangle\}\ C\ \{B,|\psi_i\rangle\}\ (i=1,2).\end{equation} We want to prove:
\begin{equation}\label{lin-2}\models_\mathit{par}\{A,\alpha_1|\varphi_1\rangle+\alpha_2|\varphi_2\rangle\}\ C\ \{B,\alpha_1|\psi_1\rangle+\alpha_2|\psi_2\rangle\}.\end{equation} 

For any $\sigma\models A$, if $$(C,\sigma,\alpha_1|\varphi_1\rangle(\sigma)+\alpha_2|\varphi_2\rangle(\sigma))\rightarrow^\ast(\downarrow,\sigma^\prime,|\psi^\prime\rangle),$$ then using Lemma \ref{lem-linear}(1), we can assert that for $i=1,2$, there exists some $|\psi_i^\prime\rangle$ such that 
$$(C,\sigma,|\varphi_i\rangle)\rightarrow^\ast(\downarrow,\sigma^\prime,|\psi_i^\prime\rangle).$$ By assumption (\ref{lin-1}), we obtain $\sigma^\prime\models B$ and $|\psi_i^\prime\rangle=|\psi_i\rangle(\sigma^\prime)$. Thus, by Lemma \ref{lem-linear}(2), we have: $$|\psi^\prime\rangle=\alpha_1|\psi_1\rangle(\sigma^\prime)+\alpha_2|\psi_2\rangle(\sigma^\prime).$$ Therefore, the conclusion (\ref{lin-2}) holds.\end{proof}

\begin{proof}[Proof of rule (Consequence)] We only prove the validity of the rule in the case of partial correctness. It can be similarly proved in the case of total correctness. We assume: \begin{align}\label{cons-1}&\{A,|\varphi\rangle\}\models\{A^\prime,|\varphi^\prime\rangle\},\\
\label{cons-2}&\models_\mathit{par}\{A^\prime,|\varphi^\prime\rangle\}\ C\ \{B^\prime, |\psi^\prime\rangle\},\\ \label{cons-3}&\{B^\prime,|\psi^\prime\rangle\}\models\{B,|\psi\rangle\}.\end{align} We want to show that $$\models_\mathit{par}\{A,|\varphi\rangle\}\ C\ \{B, |\psi\rangle\};$$ that is, for any $\sigma\models A$, if \begin{equation}\label{cons-4}(C,\sigma,|\varphi\rangle(\sigma))\rightarrow^\ast(\downarrow,\sigma^\prime,|\psi^{\prime\prime}\rangle),\end{equation} then $\sigma^\prime\models B$ and $|\psi^{\prime\prime}\rangle=|\psi\rangle(\sigma^\prime)$. 

First, (\ref{cons-1}) implies $A\models A^\prime$ and $A^\prime\models |\varphi\rangle=|\varphi^\prime\rangle$. Then $\sigma\models A^\prime$ and $|\varphi\rangle=|\varphi^\prime\rangle$ in state $\sigma$. Then it follows from (\ref{cons-4}) that \begin{equation}\label{cons-5}(C,\sigma,|\varphi^\prime\rangle(\sigma))\rightarrow^\ast(\downarrow,\sigma^\prime,|\psi^{\prime\prime}\rangle),\end{equation}
With (\ref{cons-2}) and (\ref{cons-5}), we obtain $\sigma^\prime\models B^\prime$ and $|\psi^{\prime\prime}\rangle=|\psi^\prime\rangle(\sigma^\prime)$. By (\ref{cons-3}) we have $B^\prime\models B$ and $B^\prime\models|\psi^\prime\rangle=|\psi\rangle$. Consequently, $\sigma^\prime\models B$ and $|\psi^{\prime\prime}\rangle=|\psi\rangle(\sigma^\prime)$. \end{proof}

\subsection{Proof of Soundness of Rules for Non-recursive Quantum Circuits}

In this section, we prove the soundness of the rules presented in Table \ref{circuit-rules-proof-1}. The soundness proofs of the rules (SK-P) and (GA-P) are routine, and the soundness proofs of the rules (AS-P), (SC-P) and (IF-P) are similar to the case of classical Hoare logic. So, they are omitted here, and we only present: 
\begin{proof}[Proof of Rule (QIF-P)] For simplicity, let $$\mathbf{qif} ... \mathbf{fiq}\equiv \mathbf{qif}[q] (|0\rangle\rightarrow C_0)\ \square\ (|1\rangle\rightarrow C_1)\ \mathbf{fiq}.$$ 
Assume that 
\begin{equation}\label{qif-1}\models_\mathit{par}\{A,|\varphi_i\rangle\}\ C_i\ \{B,|\psi_i\rangle\}\ (i=0,1).\end{equation} We want to prove \begin{equation}\label{qif-2}\models_\mathit{par}\left\{A,\alpha_0|0\rangle_q|\varphi_0\rangle+\alpha_1|1\rangle_q|\varphi_1\rangle\right\}\ \mathbf{qif}... \mathbf{fiq}\ \left\{B,\alpha_0|0\rangle_q|\psi_0\rangle+\alpha_1|1\rangle_q|\psi_1\rangle\right\}.\end{equation}

For any $\sigma\models A$, if 
$$(\mathbf{qif}...\mathbf{fiq},\sigma, \alpha_0|0\rangle_{\sigma(q)}|\varphi_0\rangle(\sigma)+\alpha_1|1\rangle_{\sigma(q)}|\varphi_1\rangle(\sigma))\rightarrow^\ast(\downarrow,\sigma^\prime,|\psi^\prime\rangle),$$ then we must have that for $i=0,1$, it holds that 
$$(C_i,\sigma,|\varphi_i\rangle(\sigma))\rightarrow^\ast(\downarrow,\sigma^\prime,|\psi_i^\prime\rangle)$$ for some $|\psi^\prime_i\rangle$, and $$|\psi^\prime\rangle=\alpha_0|0\rangle_{\sigma(q)}|\psi_0^\prime\rangle+\alpha_1|1\rangle_{\sigma(q)}|\psi_1^\prime\rangle.$$ By assumption (\ref{qif-1}), we obtain $\sigma^\prime\models B$ and $|\psi_i^\prime\rangle=|\psi_i\rangle(\sigma^\prime)$ for $i=0,1$. 
On the other hand, by the assumption that $\mathit{cv}(q)\cap\mathit{change}(C_i)=\emptyset$, we obtain $\sigma(q)=\sigma^\prime(q)$. Then $$|\psi^\prime\rangle=\alpha_0|0\rangle_{\sigma(q)}|\psi_0\rangle(\sigma^\prime)+\alpha_1|1\rangle_{\sigma(q)}|\psi_1\rangle(\sigma^\prime)=\alpha_0|0\rangle_{\sigma^\prime(q)}|\psi_0\rangle(\sigma^\prime)+\alpha_1|1\rangle_{\sigma^\prime(q)}|\psi_1\rangle(\sigma^\prime),$$ and the conclusion (\ref{qif-2}) holds.\end{proof}

\subsection{Proof of Soundness of Rules (Block) and (Instantiation)}

Let us first prove the soundness of the rule (Block): 

\begin{proof}[Proof of Rule (Block)] Assume that \begin{equation}\label{assume-block}\models_\mathit{par}\{A,|\varphi\rangle\}\ \overline{x}:=\overline{t};C\ \{B,|\psi\rangle\}. 
\end{equation} For any $\sigma\models A$, if 
$$(\mathbf{begin\ local}\ \overline{x}:\overline{t},C\ \mathbf{end},\sigma,|\varphi\rangle(\sigma))\rightarrow^\ast(\downarrow,\sigma^\prime,|\psi^\prime\rangle),$$ then for some $\sigma^{\prime\prime}$, it holds that \begin{equation}\label{block-1}(\overline{x}:=\overline{t};C,\sigma,|\varphi\rangle(\sigma))\rightarrow^\ast(\downarrow,\sigma^{\prime\prime},|\psi^\prime\rangle)\end{equation} and \begin{align*}
(\mathbf{begin\ local}\ \overline{x}:\overline{t},C\ \mathbf{end},\sigma,|\varphi\rangle(\sigma))&\rightarrow (\overline{x}:=\overline{t};C,\sigma,|\varphi\rangle(\sigma))\\
&\rightarrow^\ast (\overline{x}:=\sigma(\overline{x}),\sigma^{\prime\prime},|\psi^\prime\rangle)\\ 
&\rightarrow (\downarrow,\sigma^\prime,|\psi^\prime\rangle)
\end{align*} with $\sigma^\prime=\sigma^{\prime\prime}[\overline{x}:=\sigma(\overline{x})]$. From (\ref{assume-block}) and (\ref{block-1}), it follows that $\sigma^{\prime\prime}\models B$ and $|\psi^\prime\rangle=|\psi\rangle(\sigma^{\prime\prime})$. By the assumption $\overline{x}\cap\mathit{free}(B)=\overline{x}\cap\mathit{var}(|\psi\rangle)=\emptyset$, we obtain $\sigma^\prime=\sigma^{\prime\prime}[\overline{x}:=\sigma(\overline{x})]\models B$, and $|\psi^\prime\rangle=|\psi\rangle(\sigma^\prime)$. Therefore, we have: $$\models_\mathit{par}\{A,|\varphi\rangle\}\ \mathbf{begin\ local}\ \overline{x}:=\overline{t};C\ \mathbf{end}\ \{B,|\psi\rangle\}.$$ This proves that the rule (Block) is valid for partial correctness. Similarly, we can prove that the rule (Block) is also valid for total correctness. 
\end{proof}

To prove the soundness of the rule (Instantiation), we need the following technical lemma. 

\begin{lem}\label{proc-transition}Let procedure identifier $P$ be declared by $P(\overline{u})\Leftarrow C$ and $\overline{x}\cap\mathit{var}(C)=\mathit{var}(\overline{t})\cap\mathit{change}(C)=\emptyset$. If $$\left(P(\overline{t}),\sigma,|\varphi\rangle[\overline{x}:=\overline{t}]\right(\sigma))\rightarrow^\ast(\downarrow, \sigma^\prime,|\psi\rangle),$$ then $$\left(P(\overline{x}),\sigma[\overline{x}:=\sigma(\overline{t})],|\varphi\rangle(\sigma[\overline{x}:=\sigma(\overline{t})])\right)\rightarrow^\ast(\downarrow,\sigma^\prime[\overline{x}:=\sigma^\prime(\overline{t})],|\psi\rangle).$$ \end{lem}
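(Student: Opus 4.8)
The plan is to prove Lemma~\ref{proc-transition} by analysing how the copy rule (RC) and the block rule (BS) unfold a parameterised procedure call, and then tracking how the local variables $\overline{u}$ and the instantiating terms $\overline{t}$ interact with the classical state. The key observation is that, by rule (RC), the call $P(\overline{t})$ reduces in one step to $\mathbf{begin\ local}\ \overline{u}:=\overline{t};C\ \mathbf{end}$, and by (BS) this further reduces to $\overline{u}:=\overline{t};C;\overline{u}:=\sigma(\overline{u})$. Similarly, the call $P(\overline{x})$ reduces to $\overline{u}:=\overline{x};C;\overline{u}:=\sigma'(\overline{u})$ for the appropriate starting state. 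So the whole statement reduces to comparing two executions of the same procedure body $C$ that differ only in the initial assignment to the formal parameters $\overline{u}$, together with careful bookkeeping of the values of $\overline{x}$.

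First I would invoke the copy/block unfolding to replace both procedure calls by the explicit sequential composition above, so that the comparison is between $\overline{u}:=\overline{t};C;\overline{u}:=\sigma(\overline{u})$ executed from $\sigma$ and $\overline{u}:=\overline{x};C;\overline{u}:=\sigma''(\overline{u})$ executed from $\sigma[\overline{x}:=\sigma(\overline{t})]$, where $\sigma''=\sigma[\overline{x}:=\sigma(\overline{t})]$. The crucial point is that in the first execution $\overline{u}$ is set to $\sigma(\overline{t})$, and in the second $\overline{u}$ is set to $\sigma''(\overline{x})=\sigma(\overline{t})$ as well, so the formal parameters receive identical values in both runs. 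The hypothesis $\mathit{var}(\overline{t})\cap\mathit{change}(C)=\emptyset$ guarantees that running $C$ does not disturb the variables occurring in $\overline{t}$, which is exactly what makes the final restoration assignments $\overline{u}:=\sigma(\overline{u})$ and the substitution on $\overline{t}$ behave consistently; the hypothesis $\overline{x}\cap\mathit{var}(C)=\emptyset$ guarantees that $\overline{x}$ is untouched by $C$, so that the value $\sigma(\overline{t})$ stored in $\overline{x}$ in the second run is preserved throughout, justifying the substitution $[\overline{x}:=\sigma'(\overline{t})]$ appearing in the conclusion.

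The technical engine is an induction on the length of the derivation $(C,\ldots)\rightarrow^\ast(\downarrow,\ldots)$, establishing a step-by-step correspondence between the two executions of $C$: each configuration reached by the first run with classical state $\tau$ corresponds to a configuration of the second run with classical state $\tau[\overline{x}:=\tau(\overline{t})]$, while the quantum states remain literally equal (since $C$ acts identically on the quantum register once the classical parameters agree). Here I would appeal to the determinism of the underlying classical language and to Lemma~\ref{lem-linear1} to ensure that the quantum components match. Propagating this invariant through the final restoration assignments then yields the desired terminal configuration $(\downarrow,\sigma^\prime[\overline{x}:=\sigma^\prime(\overline{t})],|\psi\rangle)$.

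I expect the main obstacle to be the precise bookkeeping of the classical state across the scope-limiting restoration assignments introduced by the block rule, together with verifying that the substitution identity $\tau[\overline{x}:=\tau(\overline{t})]$ is preserved by every transition step of $C$. The two freshness/disjointness hypotheses are exactly what is needed, but applying them correctly at the $\overline{u}:=\overline{t}$ assignment (where $\overline{t}$ is evaluated) versus at the body $C$ (where $\overline{x}$ and $\mathit{var}(\overline{t})$ must stay fixed) requires care, especially when $\overline{u}$, $\overline{x}$, and $\mathit{var}(\overline{t})$ may overlap in complicated ways. Once the correspondence invariant is correctly stated, the inductive step for each program construct is routine, so I would state the invariant as a separate claim and prove it by induction, leaving the construct-by-construct verification as a routine check.
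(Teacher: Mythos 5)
Your proposal is correct and follows essentially the same route as the paper, whose entire proof is a pointer to Lemma 5.5 of \cite{Apt09}: the classical argument there is precisely your unfolding via the rules (RC) and (BS) followed by an induction on the length of the computation maintaining the correspondence $\tau \mapsto \tau[\overline{x}:=\tau(\overline{t})]$ between classical states, with the disjointness hypotheses used exactly where you place them. Your additional observation that the quantum components coincide step-by-step (since the two runs start from the identical vector $|\varphi\rangle[\overline{x}:=\overline{t}](\sigma)=|\varphi\rangle(\sigma[\overline{x}:=\sigma(\overline{t})])$ and the classical control agrees, determinism doing the rest) is the only genuinely new ingredient beyond the classical case, and you have it right — the appeal to Lemma \ref{lem-linear1} is not even needed.
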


\begin{proof} This lemma can be proved in the same way as the proof of Lemma 5.5 in \cite{Apt09}.\end{proof}

Now we are ready to present: 

\begin{proof}[Proof of Rule (Instantiation)] We assume that 
\begin{equation}\label{assume-inst}\models_\mathit{par}\{A,|\varphi\rangle\}\ P(\overline{x})\ \{B,|\psi\rangle\}\end{equation} and want to show that \begin{equation}\label{concl-inst}\models_\mathit{par}\left\{A[\overline{x}:=\overline{t}],|\varphi\rangle[\overline{x}:=\overline{t}]\right\}\ P(\overline{t})\ \left\{B[\overline{x}:=\overline{t}],|\psi\rangle[\overline{x}:=\overline{t}]\right\}.\end{equation}

For any $\sigma\models A[\overline{x}:=\overline{t}]$, the substitution lemma in first-order logic (see \cite{Apt09}, Lemma 2.5) yields: \begin{equation}\label{lem-substitute}\sigma[\overline{x}:=\sigma(\overline{t})]\models A.\end{equation} 
If $$(P(\overline{t}),\sigma,|\varphi\rangle[\overline{x}:=\overline{t}](\sigma))\rightarrow^\ast(\downarrow,\sigma^\prime,|\psi^\prime\rangle),$$ then it follows from Lemma \ref{proc-transition} that 
\begin{equation}\label{mid-inst}\left(P(\overline{x}),\sigma[\overline{x}:=\sigma(\overline{t})],|\varphi\rangle\right(\sigma[\overline{x}:=\sigma(\overline{t})]))\rightarrow^\ast\left(\downarrow, \sigma^\prime[\overline{x}:=\sigma^\prime(\overline{t})],|\psi^{\prime}\rangle\right).\end{equation} 
Thus, with the assumption (\ref{assume-inst}) together with (\ref{lem-substitute}) and (\ref{mid-inst}) we obtain: 

\vskip 3pt

\textbf{Claim 1}: $\sigma^\prime[\overline{x}:=\sigma^\prime(\overline{t})]\models B$. Using the substitution lemma in first-order logic once again, we assert that $\sigma^\prime\models B[\overline{x}:=\overline{t}].$

\vskip 3pt

\textbf{Claim 2}: $|\psi^\prime\rangle=|\psi\rangle(\sigma^\prime[\overline{x}:=\sigma^\prime(\overline{t}])=|\psi\rangle[\overline{x}:=\overline{t}](\sigma^\prime)$. Here, the second equality comes directly from the definition of $|\psi\rangle[\overline{x}:=\overline{t}]$ (see the last paragraph of Subsection \ref{sec-rule-para}). 

\vskip 3pt 

Finally, we complete the proof by combining the above two claims. 
\end{proof} 

\subsection{Proof of Soundness of Rules (Recursion-Par) and (Recursion-Tot)}

The basic idea of the soundness proofs of the rules (Recursion-Par) and (Recursion-Tot) is the same as that for classical programs given in Sections 4.3 and 5.3 of \cite{Apt09}. 

For any set $\mathcal{D}$ of declarations, we write $\rightarrow_\mathcal{D}$ for the transition relation defined with respect to $\mathcal{D}$. Accordingly, we write $\models_\mathit{par}^\mathcal{D}$ and $\models_\mathit{tot}^\mathcal{D}$ for partial correctness and total correctness, respectively, defined based on $\rightarrow_\mathcal{D}$. Furthermore, for any set $\Sigma$ of Hoare triples, by the judgment $$\Sigma\models_\mathcal{D}\{A,|\varphi\rangle\}\ C\ \{B,|\psi\rangle\}$$ we mean:   
\begin{itemize}\item if $\models_\mathcal{D}\{A^\prime,|\varphi^\prime\rangle\}\ C^\prime\ \{B^\prime,|\psi^\prime\rangle\}$ for all Hoare triples $\{A^\prime,|\varphi^\prime\rangle\}\ C^\prime\ \{B^\prime,|\psi^\prime\rangle\}\in\Sigma,$ then $\models_\mathcal{D}\{A,|\varphi\rangle\}\ C\ \{B,|\psi\rangle\}$.
\end{itemize}

Now we are given a set $$\mathcal{D}=\{P_1(\overline{u})\Leftarrow C_1,...,P(\overline{u})\Leftarrow C_n\}$$ of declarations. Then the entailment $\models$ is defined by
$$\Sigma\models\{A,|\varphi\rangle\}\ C\ \{B,|\psi\rangle\}\ {\rm iff}\ \Sigma\models_{\mathcal{D}^\prime}\{A,|\varphi\rangle\}\ C\ \{B,|\psi\rangle\}\ {\rm for\ all}\ \mathcal{D}^\prime\ {\rm with}\ \mathit{var}(\mathcal{D}^\prime)\subseteq\mathit{var}(\mathcal{D}).$$

First, we have:

\begin{lem}[Soundness of $\vdash$]\label{promise-sound} If $\Sigma\vdash\{A,|\varphi\rangle\}\ C\ \{B,|\psi\rangle\}$ then $\Sigma\models\{A,|\varphi\rangle\}\ C\ \{B,|\psi\rangle\}$.
\end{lem}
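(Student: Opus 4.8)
The plan is to prove the lemma by structural induction on the derivation witnessing $\Sigma\vdash\{A,|\varphi\rangle\}\ C\ \{B,|\psi\rangle\}$. First I would unfold the definition of the entailment $\models$: it suffices to fix an arbitrary set of declarations $\mathcal{D}^\prime$ with $\mathit{var}(\mathcal{D}^\prime)\subseteq\mathit{var}(\mathcal{D})$, assume as a standing hypothesis that every Hoare triple in $\Sigma$ is true under $\models_{\mathcal{D}^\prime}$ (in whichever sense, partial or total, is under consideration), and then establish that every triple $\vdash$-derivable from $\Sigma$ is likewise true under $\models_{\mathcal{D}^\prime}$. Since the conclusion triple is one such derivable triple, this suffices; and because $\mathcal{D}^\prime$ is arbitrary, it delivers $\Sigma\models$. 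The crucial enabling fact, already noted in Section \ref{sec-rec-proof}, is that $\vdash$ uses only the rules of Tables \ref{circuit-rules-proof-0} and \ref{circuit-rules-proof-1} and is independent of the declarations, so the derivation tree is literally the same for every choice of $\mathcal{D}^\prime$; this is exactly what permits the argument to run uniformly in $\mathcal{D}^\prime$.

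For the base cases, a leaf of the derivation is either a triple drawn directly from $\Sigma$, which is $\models_{\mathcal{D}^\prime}$-true by the standing hypothesis, or an instance of one of the axioms (SK-P), (AS-P), (GA-P), whose $\models_{\mathcal{D}^\prime}$-validity was established in the subsection on non-recursive rules and concerns only non-recursive constructs. For the inductive step, the last rule applied is one of (Frame), (Linearity), (Consequence), (SC-P), (IF-P), or (QIF-P). By the induction hypothesis the premises of the applied rule are $\models_{\mathcal{D}^\prime}$-true, and I would then invoke the corresponding per-rule soundness proofs given in the two preceding subsections (for the structural rules and for the rules for non-recursive circuits) to conclude that the rule's conclusion is $\models_{\mathcal{D}^\prime}$-true as well. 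Each of those soundness proofs is phrased purely in terms of the transition relation, so it holds verbatim with $\rightarrow_{\mathcal{D}^\prime}$ in place of $\rightarrow$; no new computation is required.

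I do not expect a genuinely hard step here: the result is the standard lifting of per-rule soundness to soundness-relative-to-assumptions. The only point requiring care is the bookkeeping around the two nested quantifications implicit in $\models$ — the outer ``for all $\mathcal{D}^\prime$'' coming from the definition of the entailment, and the inner ``if all of $\Sigma$ holds'' that is discharged once $\mathcal{D}^\prime$ is fixed. The observation that cleanly resolves this is that neither the shape of the $\vdash$-derivation nor the soundness of any individual rule depends on $\mathcal{D}^\prime$, so a single induction, carried out after fixing $\mathcal{D}^\prime$ and the $\Sigma$-hypothesis, handles every $\mathcal{D}^\prime$ at once and covers both the partial- and total-correctness readings without modification.
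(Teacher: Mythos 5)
Your proposal is correct and is essentially the paper's own argument: the paper's proof of Lemma \ref{promise-sound} simply observes that, by the definition of $\vdash$ in (\ref{provability}), it suffices to check that each rule in Tables \ref{circuit-rules-proof-0} and \ref{circuit-rules-proof-1} preserves $\models$, omitting the details as routine. Your induction on the derivation tree after fixing $\mathcal{D}^\prime$ and the $\Sigma$-hypothesis, discharging leaves via the standing hypothesis or the axioms' validity and inner nodes via the per-rule soundness proofs (which, as you note, are phrased purely in terms of the transition relation and so apply verbatim to $\rightarrow_{\mathcal{D}^\prime}$), is exactly the spelled-out version of that routine verification, including the key point that $\vdash$ is declaration-independent.
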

\begin{proof}By the definition of provability $\vdash$ (see equation (\ref{provability})), it suffices to verify that all of the rules in Tables \ref{circuit-rules-proof-0} and \ref{circuit-rules-proof-1} preserve $\models$. The details are routine and thus omitted here.  
\end{proof}

To describe the unfolding in the execution of a quantum recursive circuit, we introduced the notions of substitution and syntactic approximation: 

\begin{defn}\label{def-sub}Let $C,D_1,...,D_n\in\mathbf{RQC}^{++}$ be quantum circuits. Then the simultaneous substitution $C[D_1/P_1,...,D_n/P_n]$ is defined by induction on the length of $C$:\begin{enumerate}
\item $C\equiv\mathbf{skip}, \overline{x}:=\overline{t}$ or $U[\overline{q}]$. Then $$C[D_1/P_1,...,D_n/P_n]\equiv C.$$
\item $C\equiv C_1;C_2$. Then $$C[D_1/P_1,...,D_n/P_n]\equiv C_1[D_1/P_1,...,D_n/P_n];C_2[D_1/P_1,...,D_n/P_n].$$
\item $C\equiv\mathbf{if}\ b\ \mathbf{then}\ C_1\ \mathbf{else}\ C_2\ \mathbf{fi}$. Then $$C[D_1/P_1,...,D_n/P_n]\equiv\mathbf{if}\ b\ \mathbf{then}\ C_1[D_1/P_1,...,D_n/P_n]\ \mathbf{else}\ C_2[D_1/P_1,...,D_n/P_n]\ \mathbf{fi}.$$
\item $C\equiv\mathbf{qif}[q]\ (|0\rangle\rightarrow C_0)\square(|1\rangle\rightarrow C_1)\ \mathbf{fiq}$. If $q\notin\bigcup_{i=1}^n\mathit{qv}(D_i)$, then 
\begin{equation}\label{qif-sub}\begin{split}C[D_1/P_1,...,D_n/P_n]\equiv\mathbf{qif}&[q]\ (|0\rangle\rightarrow C_0[D_1/P_1,...,D_n/P_n])\\ &\ \square\ \ (|1\rangle\rightarrow C_1[D_1/P_1,...,D_n/P_n])\ \mathbf{fiq}.\end{split}\end{equation}
\item $C\equiv\mathbf{begin\ local}\ \overline{x}:=\overline{t};C^\prime\ \mathbf{end}$. Then $$C[D_1/P_1,...,D_n/P_n]=\mathbf{begin\ local}\ \overline{y}:=\overline{t};C^\prime[\overline{y}/\overline{x}][D_1/P_1,...,D_n/P_n]\ \mathbf{end}$$ where $$\overline{y}\cap\left(\bigcup_{i=1}^n\mathit{var}(D_i)\right)=\emptyset.$$
\item $C\equiv P_i(\overline{t})$. Then $$C[D_1/P_1,...,D_n/P_n]\equiv\mathbf{begin\ local}\ \overline{u_i}:=\overline{t};C_i\ \mathbf{end}.$$
\end{enumerate}\end{defn}

It is worth noting that the condition $q\notin\bigcup_{i=1}^n\mathit{qv}(D_i)$ is necessary to warrant that (\ref{qif-sub}) is well-defined. 

\begin{defn} Given a set $\mathcal{D}=\{P_1(\overline{u})\Leftarrow C_1,...,P(\overline{u})\Leftarrow C_n\}$ of declarations. For any well defined quantum circuit $C\in\mathbf{RQC}^{++}$, its $k$th syntactic approximation $C^k$ is defined by induction on integer $k\geq 0$:\begin{enumerate}\item $C^0\equiv\mathbf{abort}$ --- a program such that for any classical state $\sigma$ and quantum state $|\varphi\rangle$, $(\mathbf{abort},\sigma,|\varphi\rangle)\not\rightarrow$;  

\item $C^{k+1}\equiv C[C_1^k/P_1,...,C_n^k/P_n]$ for $k\geq 0$. 
\end{enumerate}\end{defn}

Since $C$ is well-defined (see Definition \ref{def-well}), if $P_i$ appears in a sub-circuit $$\mathbf{qif}[q]\ (|0\rangle\rightarrow C_0)\square(|1\rangle\rightarrow C_1)\ \mathbf{fiq}$$ of $C$, then $q\notin\mathit{qv}(C_i)$. It is easy to show that $\mathit{qv}(C_i^k)\subseteq\mathit{qv}(C_i)$ by induction on $k$. Therefore, $q\notin\mathit{qv}(C_i^k)$, and $C^{k+1}$ is well-defined (see Definition \ref{def-sub}(4)).

For each integer $k\geq 0$, we define declarations: $$\mathcal{D}^k=\left\{P_1(\overline{u_1})\Leftarrow C_1^k,...P_n(\overline{u_n})\Leftarrow C_n^k\right\}.$$ Then we have: 

\begin{lem}\label{recursive-transition}For any quantum circuit $C\in\mathbf{RQC}^{++}$: \begin{enumerate}
\item $(C,\sigma,|\varphi\rangle)\rightarrow_\mathcal{D}^\ast(\downarrow,\sigma^\prime,|\psi\rangle)$ iff for some $k\geq 0,$ $(C^k,\sigma,|\varphi\rangle)\rightarrow^\ast(\downarrow,\sigma^\prime,|\psi\rangle);$  
\item $(C,\sigma,|\varphi\rangle)\rightarrow^\ast_{\mathcal{D}^k}(\downarrow,\sigma^\prime,|\psi\rangle)$ iff $(C^{k+1},\sigma,|\varphi\rangle)\rightarrow^\ast(\downarrow,\sigma^\prime,|\psi\rangle).$
%\item $(C,\sigma,|\varphi\rangle)\rightarrow_\mathcal{D}^\ast(\downarrow,\sigma^\prime,|\psi\rangle)$ iff $(C[C_1/P_1,...,C_n/P_n],\sigma,|\varphi\rangle)\rightarrow_\mathcal{D}^\ast(\downarrow,\sigma^\prime,|\psi\rangle).$
\end{enumerate}\end{lem}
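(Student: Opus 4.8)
The plan is to prove part (2) first, by induction on the structure (equivalently, the length) of $C$, and then to derive part (1) from it by relating the full transition relation $\rightarrow_{\mathcal{D}}$ to the family $\{\rightarrow_{\mathcal{D}^k}\}_{k\geq 0}$ of approximating ones. Throughout I would use one preliminary fact, proved by a quick induction on $k$ from the definition of syntactic approximation: every approximation $C_i^k$ is free of procedure identifiers (the base $C_i^0\equiv\mathbf{abort}$ is, and the substitution of Definition \ref{def-sub} replaces every call by a block built from such call-free circuits). Consequently, on any circuit containing no procedure identifier the rule (RC) is never applicable and stays inapplicable throughout execution, so $\rightarrow_{\mathcal{D}'}$ agrees with $\rightarrow$ on it for every declaration set $\mathcal{D}'$.

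For part (2), the base cases $C\equiv\mathbf{skip},\overline{x}:=\overline{t},U[\overline{q}]$ are immediate, since here $C^{k+1}\equiv C$ is call-free. The cases $C\equiv C_1;C_2$ and $C\equiv\mathbf{if}\ b\ \mathbf{then}\ C_1\ \mathbf{else}\ C_2\ \mathbf{fi}$ follow by the standard decomposition of a terminating computation into those of its constituents together with the induction hypothesis, noting that $C^{k+1}$ has the same shape with $C_1,C_2$ replaced by $C_1^{k+1},C_2^{k+1}$ and that $b$ is evaluated independently of $\mathcal{D}$. For $C\equiv\mathbf{qif}[q](|0\rangle\rightarrow C_0)\square(|1\rangle\rightarrow C_1)\mathbf{fiq}$, one uses that $C^{k+1}$ is again a $\mathbf{qif}$-statement with branches $C_0^{k+1},C_1^{k+1}$, and that rule (QIF) refers only to the closures of the two branch computations $(C_i,\sigma,|\theta_i\rangle)\rightarrow^\ast(\downarrow,\sigma^\prime,|\theta_i^\prime\rangle)$; applying the induction hypothesis to $C_0,C_1$ transfers exactly these while preserving the common terminal classical state $\sigma^\prime$ demanded by (QIF). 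The block case reduces, via rule (BS), to the hypothesis for $C^\prime$, the only subtlety being the renaming $\overline{x}\mapsto\overline{y}$ built into Definition \ref{def-sub}, which leaves observable behaviour unchanged since local variables are restored on exit. The crucial case is the procedure call $C\equiv P_i(\overline{t})$, not covered by the structural hypothesis: under $\mathcal{D}^k$ the rule (RC) sends $(P_i(\overline{t}),\sigma,|\varphi\rangle)$ to $(\mathbf{begin\ local}\ \overline{u_i}:=\overline{t};C_i^k\ \mathbf{end},\sigma,|\varphi\rangle)$, and this target is precisely $C^{k+1}=(P_i(\overline{t}))[C_1^k/P_1,\ldots,C_n^k/P_n]$; being call-free, its further evolution under $\mathcal{D}^k$ coincides with that under $\rightarrow$, giving the equivalence.

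For part (1), I would use part (2) to rewrite $(C^{k},\ldots)\rightarrow^\ast(\downarrow,\ldots)$ as $(C,\ldots)\rightarrow^\ast_{\mathcal{D}^{k-1}}(\downarrow,\ldots)$ for $k\geq 1$ (the case $k=0$ is vacuous, as $\mathbf{abort}$ admits no transition), reducing the claim to: $(C,\ldots)\rightarrow^\ast_{\mathcal{D}}(\downarrow,\ldots)$ holds iff $(C,\ldots)\rightarrow^\ast_{\mathcal{D}^k}(\downarrow,\ldots)$ for some $k$. One direction is a soundness/monotonicity statement: a computation terminating under an approximation $\mathcal{D}^k$ also terminates, with the same result, under $\mathcal{D}$. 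I would make this precise through the approximation order in which a circuit lies below another when the latter arises by replacing some $\mathbf{abort}$-subcircuits by arbitrary circuits; one checks $C_i^k\sqsubseteq C_i^{k+1}$ and that a terminating computation, which by definition never blocks at an $\mathbf{abort}$, is preserved upward along $\sqsubseteq$ and hence ultimately reproduced under $\mathcal{D}$. The converse is a finiteness statement: a terminating $\mathcal{D}$-computation has a finite derivation tree, so the nesting of (RC)-unfoldings along it, counting also those hidden inside the (QIF) premises, is bounded by some $k$; replacing the bodies by their $k$-fold unfoldings $C_i^k$ then never reaches the abort-floor, so the very same computation runs under $\mathcal{D}^k$.

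The main obstacle throughout is the operational rule (QIF): unlike the others it folds two entire sub-computations $(C_i,\sigma,|\theta_i\rangle)\rightarrow^\ast(\downarrow,\sigma^\prime,|\theta_i^\prime\rangle)$ into a single step and insists that both terminate in one common classical state $\sigma^\prime$. This blocks the naive induction on the number of transition steps and forces the inductions to be run on the structure of $C$ for part (2) and on the size of the full derivation tree for the finiteness direction of part (1); it also requires that the approximation order and the bound on unfolding depth be propagated into the branch computations while keeping the shared terminal state intact. Once this is arranged, the remaining verifications are routine bookkeeping on the transition rules, which I would omit.
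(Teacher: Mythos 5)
Your plan is sound and, as far as a sketch can be checked against a sketch, correct; but it is organised quite differently from the paper's argument, which is compressed to a single line: both equivalences are said to follow by induction on the length of transitions, with details deferred to the classical treatment in Section 6.3 of \cite{Francez}. Your route differs in three respects. First, you isolate the preliminary fact that every approximation $C_i^k$ is free of procedure identifiers, so that the plain relation $\rightarrow$ and every $\rightarrow_{\mathcal{D}'}$ coincide on approximations; this collapses the only interesting case of part (2), the call $P_i(\overline{t})$, to the syntactic identity $C^{k+1}\equiv\mathbf{begin\ local}\ \overline{u_i}:=\overline{t};C_i^k\ \mathbf{end}$ (note that you silently adopt the evidently intended reading of clause (6) of Definition \ref{def-sub}, in which the substituted circuit $D_i$, rather than the declared body $C_i$, fills the block; the paper's text as printed would make $C^{k+1}$ independent of $k$ and the lemma false, so this is a typo you correctly repaired). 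Second, you replace step-count induction by structural induction on $C$ for part (2) and by induction on the size of the full derivation tree for the finiteness half of part (1); your justification --- that rule (QIF) folds two unbounded sub-computations into a single transition, so the length of the top-level transition sequence is not a legitimate induction measure --- is exactly the quantum-specific point the paper's one-line sketch glosses over, and the paper's phrase is only defensible if ``length'' is read as total derivation size. Third, you derive part (1) from part (2) instead of proving the two claims in parallel, which is a clean economy. The one place where your sketch asserts slightly more than it proves is the passage from $\mathcal{D}^k$-termination to $\mathcal{D}$-termination: your order $\sqsubseteq$ (replacement of $\mathbf{abort}$-subcircuits) relates $C^k$ to $C^{k+1}$ but not to $C$ itself, since in $C$ the recursion sites are calls $P_i(\overline{t})$ whereas in $C^k$ they are already-expanded blocks; so ``ultimately reproduced under $\mathcal{D}$'' requires an additional, routine simulation matching each (RC)-unfolding under $\mathcal{D}$ with the corresponding pre-expanded block in $C^k$ --- equivalently, a double induction on $k$ and on the derivation, as in the classical references. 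Since this is bookkeeping of exactly the kind the paper itself declares routine and omits, it is a presentational looseness rather than a gap in substance; what your organisation buys is that all cases except the procedure call become trivial inheritances, while the paper's suggested uniform induction would have to carry the (QIF) nesting through every case.
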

\begin{proof} These equivalences can be proved by induction on the length of transitions. The details are similar to the case of classical recursive programs (see for example Section 6.3 of \cite{Francez}) and thus omitted here.\end{proof}

Now we are ready to present the soundness proof of the rule (Recursion-Par):

\begin{proof}[Proof of Rule (Recursion-Par)] Assume:
\begin{enumerate}\item[(A)] $\{A_i,|\varphi_i\rangle\}\ P_i(\overline{t_i})\ \{B_i,|\psi_i\rangle\} (i=1,...,n)\vdash \{A,|\varphi\rangle\}\ C\ \{B,|\psi\rangle\};$
\item[(B)] $\{A_i,|\varphi_i\rangle\}\ P_i(\overline{t_i})\ \{B_i,|\psi_i\rangle\} (i=1,...,n)\vdash \{A_j,|\varphi_j\rangle\}\ \mathbf{begin\ local}\ \overline{u_j}:=\overline{t_j};C_j\ \mathbf{end}\ \{B_j,|\psi_j\rangle\}\ {\rm for}\ j=1,...,n.$
\end{enumerate} Then with Lemma \ref{promise-sound} we have:
\begin{enumerate}\item[(A')] $\{A_i,|\varphi_i\rangle\}\ P_i(\overline{t_i})\ \{B_i,|\psi_i\rangle\} (i=1,...,n)\models \{A,|\varphi\rangle\}\ C\ \{B,|\psi\rangle\};$
\item[(B')] $\{A_i,|\varphi_i\rangle\}\ P_i(\overline{t_i})\ \{B_i,|\psi_i\rangle\} (i=1,...,n)\models \{A_j,|\varphi_j\rangle\}\ \mathbf{begin\ local}\ \overline{u_j}:=\overline{t_j};C_j\ \mathbf{end}\ \{B_j,|\psi_j\rangle\}\ {\rm for}\ j=1,...,n.$
\end{enumerate}
Therefore, it suffices to derive the following claim from (B'): for each $j=1,...,n$, 

{\vskip 3pt}

\textbf{Claim 1}: $\models_\mathcal{D}\{A_j,|\varphi_j\rangle\}\ P_j(\overline{t_j})\ \{B_j,|\psi_j\rangle\}.$

{\vskip 3pt}

By definition, Claim 1 is equivalent to:

{\vskip 3pt}

\textbf{Claim 2}: For any $\sigma\models A_j$, if \begin{equation}\label{claim-2}(P_j(\overline{t_j}),\sigma,|\psi_j\rangle(\sigma))\rightarrow_\mathcal{D}^\ast(\downarrow,\sigma^\prime,|\psi_j^\prime\rangle),\end{equation} then $\sigma^\prime\models B_j$ and $|\psi_j^\prime\rangle=|\psi_j\rangle(\sigma^\prime)$. 

{\vskip 3pt}

It follows from Lemma \ref{recursive-transition} (1) and (2) that 
\begin{align*}Eq. (\ref{claim-2}) &\Leftrightarrow {\rm for\ some}\ k\geq 0, \left(P_j(\overline{t_j})^k,\sigma,|\varphi_j\rangle(\sigma)\right)\rightarrow^\ast(\downarrow,\sigma^\prime,|\psi_j^\prime\rangle)\\
&\Leftrightarrow \left(P_j(\overline{t_j})^0,\sigma,|\varphi_j\rangle(\sigma)\right)\rightarrow^\ast(\downarrow,\sigma^\prime,|\psi_j^\prime\rangle)\ {\rm or}\\ &\qquad\qquad \qquad {\rm for\ some}\ k\geq 0, \left(P_j(\overline{t_j})^{k+1},\sigma,|\varphi_j\rangle(\sigma)\right)\rightarrow^\ast(\downarrow,\sigma^\prime,|\psi_j^\prime\rangle)\\
&\Leftrightarrow {\rm for\ some}\ k\geq 0, (P_j(\overline{t_j}),\sigma,|\varphi_j\rangle(\sigma))\rightarrow_{\mathcal{D}^k}^\ast(\downarrow,\sigma^\prime,|\psi_j^\prime\rangle).
\end{align*}
Therefore, Claim 2 is equivalent to 

{\vskip 3pt}

\textbf{Claim 3}: For any $k\geq 0$, 
\begin{equation}\label{claim3}\models_{\mathcal{D}^k}\{A_j,|\varphi_j\rangle\}\ P_j(\overline{t_j})\ \{B_j,|\psi\rangle);\end{equation}
that is, if $\sigma\models A_j$ and 
$$(P_j(\overline{t_j}),\sigma,|\varphi_j\rangle(\sigma))\rightarrow_{\mathcal{D}^k}^\ast(\downarrow,\sigma^\prime,|\psi_j^\prime\rangle),$$ then $\sigma^\prime\models B_j$ and $|\psi_j^\prime\rangle=|\psi_j\rangle(\sigma^\prime)$. 

{\vskip 3pt}

Now we prove Claim 3 by induction on $k$. 

(1) The basis case of $k=0$ is trivial by the definition of program $\mathbf{abort}$.  

(2) For the induction step, let us assume that Claim 3 is true for $k$. We want to show that 
\begin{equation}\label{claim3-}\models_{\mathcal{D}^{k+1}}\{A_j,|\varphi_j\rangle\}\ P_j(\overline{t_j})\ \{B_j,|\psi\rangle);\end{equation}
that is, if $\sigma\models A_j$ and 
\begin{equation}\label{claim3--}(P_j(\overline{t_j}),\sigma,|\varphi_j\rangle(\sigma))\rightarrow_{\mathcal{D}^{k+1}}^\ast(\downarrow,\sigma^\prime,|\psi_j^\prime\rangle),\end{equation} then $\sigma^\prime\models B_j$ and $|\psi_j^\prime\rangle=|\psi_j\rangle(\sigma^\prime)$. 

First, note that $\mathit{var}\left(\mathcal{D}^k\right)\subseteq\mathit{var}(\mathcal{D})$. Then (\ref{claim3}) and assumption (B') imply:
\begin{equation}\label{claim3+}\models_{\mathcal{D}^k}\{A_j,|\varphi_j\rangle\}\ \mathbf{begin\ local}\ \overline{u_j}:=\overline{t_j};C_j\ \mathbf{end}\ \{B_j,|\psi_j\rangle\}.\end{equation} 

Second, it is easy to verify by induction on $k$ that 
\begin{equation}\label{claim3++}(\mathbf{begin\ local}\ \overline{u_j}:=\overline{t_j};C_j\ \mathbf{end})^{k+1}\equiv \mathbf{begin\ local}\ \overline{u_j}:=\overline{t_j};C_j^{k+1}\ \mathbf{end}.\end{equation}
Since $P_j(\overline{t_j})\Leftarrow C_j^{k+1}\in\mathcal{D}^{k+1}$, it follows from the transition rule (RC) that 
\begin{equation}\label{claim3+1}(P_j(\overline{t_j}),\sigma,|\varphi_j\rangle(\sigma))\rightarrow_{\mathcal{D}^{k+1}}(\mathbf{begin\ local}\ \overline{u_j}:=\overline{t_j};C_j^{k+1}\ \mathbf{end},\sigma,|\varphi_j\rangle(\sigma)).\end{equation}
Thus, we have: \begin{align*}Eq. (\ref{claim3--})&\Leftrightarrow \left(\mathbf{begin\ local}\ \overline{u_j}:=\overline{t_j};C_j^{k+1}\ \mathbf{end},\sigma,|\varphi_j\rangle(\sigma)\right)\rightarrow_{\mathcal{D}^{k+1}}^\ast(\downarrow,\sigma^\prime,|\psi_j^\prime\rangle)\\
&\Leftrightarrow \left((\mathbf{begin\ local}\ \overline{u_j}:=\overline{t_j};C_j\ \mathbf{end})^{k+1},\sigma,|\varphi_j\rangle(\sigma)\right)\rightarrow_{\mathcal{D}^{k+1}}^\ast(\downarrow,\sigma^\prime,|\psi_j^\prime\rangle)\qquad ({\rm By\ Eq. (\ref{claim3++})})\\ 
&\Leftrightarrow \left((\mathbf{begin\ local}\ \overline{u_j}:=\overline{t_j};C_j\ \mathbf{end})^{k+1},\sigma,|\varphi_j\rangle(\sigma)\right)\rightarrow^\ast(\downarrow,\sigma^\prime,|\psi_j^\prime\rangle)\\ 
&\Leftrightarrow (\mathbf{begin\ local}\ \overline{u_j}:=\overline{t_j};C_j\ \mathbf{end},\sigma,|\varphi_j\rangle(\sigma))\rightarrow_{\mathcal{D}^k}^\ast(\downarrow,\sigma^\prime,|\psi_j^\prime\rangle)\qquad ({\rm By\ Lemma\ \ref{recursive-transition}(1)})
\end{align*}
Consequently, if $\sigma\models A_j$ and (\ref{claim3--}) holds, then from (\ref{claim3+}) we obtain that $\sigma^\prime\models B_j$ and $|\psi_j^\prime\rangle=|\psi_j\rangle(\sigma^\prime)$.   
Therefore, (\ref{claim3-}) is proved, and we complete the proof. 
\end{proof}

Finally, we are able to prove the soundness of the rule (Recursion-Tot).

\begin{proof}[Proof of Rule (Recursion-Tot)] With an argument similar to that in the soundness proof of the rule (Recursion-Par), it suffices to derive
\begin{equation}\label{sound-rec-total}\models_\mathit{tot}\left\{A_j,|\varphi_j\rangle\right\}\ P_j(\overline{t_j})\ \left\{B_j,|\psi_j\rangle\right\}\ {\rm for}\ j=1,...,n\end{equation} from the following two assumptions: 
\begin{equation}\label{sound-rec-total1}\begin{split}&\left\{A_i\wedge r<z, |\varphi_i\rangle\right\}\ P_i(\overline{t_i})\ \left\{B_i,|\psi_i\rangle\right\}\ (i=1,...,n)\models_\mathit{tot}\\ &\qquad\qquad\left\{A_j\wedge r=z,|\varphi_j\rangle\right\}\ \mathbf{begin\ local}\ \overline{u_j}:=\overline{t_j};C_j\ \mathbf{end}\ \left\{B_j,|\psi_j\rangle\right\}\ {\rm for}\ j=1,...,n;\end{split}\end{equation}
\begin{equation}\label{sound-rec-total2}\models A_i\rightarrow r\geq 0.
\end{equation} Since $z$ is a fresh integer variable, with the assumption (\ref{sound-rec-total2}), what we need to prove is actually that for all integers $m\geq 0$,
\begin{equation}\label{sound-rec-total3}\models_\mathit{tot}\left\{A_j\wedge r<m,|\varphi_j\rangle\right\}\ P_j(\overline{t_j})\ \left\{B_j,|\psi_j\rangle\right\}\ {\rm for}\ j=1,...,n.\end{equation}

Let us prove (\ref{sound-rec-total3}) by induction on $m$. The basis case of $m=0$ is trivial. Now we assume that (\ref{sound-rec-total3}) is true for $m=k$ and want to show that it is also true for $m=k+1$. By the induction hypothesis for $m=k$ and the assumption (\ref{sound-rec-total1}), it holds that    
\begin{equation}\label{sound-rec-total4}\left\{A_j\wedge r=k,|\varphi_j\rangle\right\}\ \mathbf{begin\ local}\ \overline{u_j}:=\overline{t_j};C_j\ \mathbf{end}\ \left\{B_j,|\psi_j\rangle\right\}.\end{equation}
Then in a way similar to the derivation of Claim 1 in the soundness proof of the rule (Recursion-Par), we can derive:
\begin{equation}\label{sound-rec-total5}\models_\mathit{tot}\left\{A_j\wedge r=k,|\varphi_j\rangle\right\}\ P_j(\overline{t_j})\ \left\{B_j,|\psi_j\rangle\right\}\end{equation}
from (\ref{sound-rec-total4}). Putting (\ref{sound-rec-total5}) and the induction hypothesis of $m=k$ together, we see that (\ref{sound-rec-total3}) is true for $m=k+1$, and we complete the proof. 
\end{proof}

\subsection{Proofs of Soundness of Auxiliary Rules}\label{proof-auxiliary}

We choose to prove the soundness of some of the auxiliary rules in Table \ref{circuit-rules-proof-5}. The soundness proofs of other auxiliary rules are omitted because they are either trivial or similar to some one given below.  

\begin{proof}[Proof of Rule (Invariance)] We prove that the rule (Invariance) is sound for partial correctness. For any $\sigma\models A$, assume that $$(C,\sigma,|\varphi\rangle(\sigma))\rightarrow^\ast(\downarrow,\sigma^\prime,|\varphi^\prime).$$ By the assumption $\mathit{free}(A)\cap \mathit{change}(C)=\emptyset$, we have $\sigma|\mathit{free}(A)=\sigma^\prime|\mathit{free}(A)$. Then $\sigma\models A$ implies $\sigma^\prime\models A$. On the other hand, by the assumption $\mathit{qv}(|\varphi\rangle)\cap\mathit{qv}(C)=\emptyset$, we obtain $|\varphi^\prime\rangle=|\varphi\rangle$. Therefore, it holds that $\{A,|\varphi\rangle\}\ C\ \{A,|\varphi\rangle\}$. 
\end{proof}

\begin{proof}[Proof of Rule (Conjunction)] We prove that the rule (Conjunction) is sound for total correctness. If $\sigma\models A_1\wedge A_2$, then $\sigma\models A_1$ and $\sigma\models A_2$. For each $i\in\{1,2\}$, by the assumption $\models_\mathit{tot}\left\{A_i,|\varphi\rangle\right\}\ C\ \left\{B_i,|\psi\rangle\right\}$, if $$(C,\sigma,|\varphi\rangle(\sigma))\rightarrow^\ast(\downarrow,\sigma_i^\prime,|\psi_i^\prime\rangle),$$ then we have $\sigma_i^\prime\models B_i$ and $|\psi\rangle=|\psi_i^\prime\rangle.$ Note that the classical programming language underlining $\mathbf{RQC}^{++}$ is assumed to be deterministic. Then $\mathbf{RQC}^{++}$ is deterministic too. Thus, $\sigma_1^\prime=\sigma_2^\prime$, and $\sigma_1^\prime\models B_1\wedge B_2$. Therefore, it holds that $$\models_\mathit{tot}\left\{A_1\wedge A_2,|\varphi\rangle\right\}\ C\ \left\{B_1\wedge B_2,|\psi\rangle\right\}.$$
\end{proof}

\begin{proof}[Proof of Rule ($\exists$-Introduction)] We prove the soundness of ($\exists$-Introduction) for partial correctness. If $\sigma\models(\exists x)A$ and $$(C,\sigma,|\varphi\rangle(\sigma))\rightarrow^\ast(\downarrow,\sigma^\prime,|\psi^\prime\rangle),$$ then there exists an element $a$ such that $\sigma\models A[a/x]$. Thus, we have $\sigma[a/x]\models A$. Since $x\notin\mathit{var}(C)$, by induction on the length of $C$, it is easy to show that $$(C,\sigma[a/x],|\psi\rangle(\sigma[a/x]))\rightarrow^\ast(\downarrow,\sigma^\prime[a/x],|\psi^\prime\rangle).$$ By the assumption $\models_\mathit{par}\{A,|\varphi\rangle\}\ C\ \{B,|\psi\rangle\}$, we obtain $\sigma^\prime[a/x]\models B$ and $|\psi\rangle=|\psi^\prime\rangle(\sigma^\prime[a/x])$. Furthermore, by the assumption $x\notin\mathit{free}(B)$ and $\mathit{Inv}(|\psi\rangle: x)$, we have $\sigma^\prime\models B$ and $|\psi\rangle=|\psi^\prime\rangle(\sigma)$. This proves $$\models_\mathit{par}\{(\exists x)A,|\varphi\rangle\}\ C\ \{B,|\psi\rangle\}.$$ 
\end{proof}

\begin{proof}[Proof of Rule (Invariance-Con)] We only consider the soundness for partial correctness. If $\sigma\models A\wedge A^\prime$ and $$(C,\sigma,|\varphi\rangle(\sigma))\rightarrow^\ast(\downarrow,\sigma^\prime,|\psi^\prime\rangle),$$ then $\sigma\models A$ and $\sigma\models A^\prime$. By the assumption $\models_\mathit{par}\{A,|\varphi\rangle)\ C\ \{B,|\psi\rangle\}$, we have $\sigma^\prime\models B$ and $|\psi^\prime\rangle=|\psi\rangle(\sigma^\prime)$. Since $\mathit{free}(A^\prime)\cap\mathit{change}(C)=\emptyset$, it holds that $\sigma|\mathit{free}(A^\prime)=\sigma^\prime|\mathit{free}(A^\prime)$. Consequently, $\sigma^\prime\models A^\prime$, and it follows that $\sigma^\prime\models B\wedge A^\prime$. This proves $$\models_\mathit{par}\{A\wedge A^\prime,|\varphi\rangle)\ C\ \{B\wedge A^\prime,|\psi\rangle\}.$$ 
\end{proof}

\begin{proof}[Proof of Rule (Substitution)] If $\sigma\models A[\overline{x}:=\overline{t}]$ and $$(C,\sigma,|\varphi\rangle(\sigma))\rightarrow^\ast(\downarrow,\sigma^\prime,|\psi^\prime\rangle),$$ then by the substitution lemma in the first-order logic we obtain $\sigma[\overline{x}:=\sigma(\overline{t})]\models A$. With the assumption $[\overline{x}\cup\mathit{var}(\overline{t})]\cap\mathit{change}(C)=\emptyset$, it is easy to show that $$(C,\sigma[\overline{x}:=\sigma(\overline{t})],|\varphi\rangle(\sigma[\overline{x}:=\sigma(\overline{t})]))\rightarrow^\ast(\downarrow,\sigma^\prime[\overline{x}:=\sigma^\prime(\overline{t})],|\psi^\prime\rangle)$$ by induction on the length of $C$. Thus, by the assumption $\models_\mathit{par}\{A,|\varphi\rangle)\ C\ \{B,|\psi\rangle\}$, we obtain $\sigma^\prime[\overline{x}:=\sigma^\prime(\overline{t})]\models B$ and $|\psi^\prime\rangle=|\psi\rangle(\sigma^\prime[\overline{x}:=\sigma(\overline{t})])$. Thus, by the substitution lemma, it holds that $\sigma^\prime\models B[\overline{x}:=\overline{t}]$. On the other hand, the assumption $\mathit{Inc}(|\psi\rangle:\overline{x})$ implies $|\psi^\prime\rangle=|\psi\rangle(\sigma^\prime)$. This proves $$\models_\mathit{par}\{A[\overline{x}:=\overline{t}],|\varphi\rangle)\ C\ \{B[\overline{x}:=\overline{t}],|\psi\rangle\}.$$ 
\end{proof}

\end{document}